\theoremstyle{plain}
\newtheorem{theorem}{Theorem}
\newtheorem{lemma}[theorem]{Lemma}
\newtheorem{corollary}[theorem]{Corollary}
\newtheorem{example}[theorem]{Example}
\theoremstyle{definition}
\newtheorem{definition}[theorem]{Definition}
\newtheorem{remark}[theorem]{Remark}
\title{The two-sided Galois duals of multi-twisted codes
}
\author{
  Ramy F. Taki Eldin \\
  Faculty of Engineering \\
  Ain Shams University \\
  Cairo, Egypt\\
  \texttt{ramy.farouk@eng.asu.edu.eg} \\
}
\begin{document}
\maketitle

\begin{abstract}
Characterizing the duals of linear codes with rich algebraic structures received great interest in recent decades. The beginning was by representing cyclic codes over finite fields as ideals in the polynomial ring. Subsequently, studying the duals of constacyclic, quasi-cyclic, quasi-twisted, generalized quasi-cyclic, and multi-twisted codes appeared extensively in literature. We consider the class of multi-twisted (MT) codes because it extends to all of these codes. We describe a MT code $\mathcal{C}$ as a module over a principal ideal domain. Hence, $\mathcal{C}$ has a generator polynomial matrix (GPM) that satisfies an identical equation. The reduced GPM of $\mathcal{C}$ is the Hermite normal form of its GPM. We show that the Euclidean dual $\mathcal{C}^\perp$ of $\mathcal{C}$ is MT as well. We prove a formula for a GPM of $\mathcal{C}^\perp$ using the identical equation of the reduced GPM of $\mathcal{C}$. Then we aim to replace the Euclidean dual with the Galois dual. The Galois inner product is an asymmetric form, so we distinguish between the right and left Galois duals. We show that the right and left Galois duals of a MT code are MT as well but with possibly different shift constants. Our study is the first to contain the right and left Galois duals of a linear code simultaneously. This gives two advantages: establishing their interconnected identities and introducing the two-sided Galois dual that has not previously appeared in the literature. We use a condition for the two-sided Galois dual of a MT code to be MT, hence its GPM is characterized. Two special cases are also studied, one when the right and left Galois duals trivially intersect and the other when they coincide. The latter case is considered for any linear code, where a necessary and sufficient condition is established for the equality of the right and left Galois duals.
\end{abstract}

\keywords{Multi-twisted code  \and Generator polynomial matrix  \and Two-sided Galois dual  \and Hermite normal form}

%

\section{Introduction}
Linear codes with a rich algebraic structure are especially important in real communication systems due to the potential for developing their encoding and decoding algorithms. One of these codes is the class of cyclic codes over finite fields. Cyclic codes over the finite field $\mathbb{F}_q$ are in one-to-one correspondence with ideals of the polynomial ring $\mathbb{F}_q[x]$. The class of cyclic codes has undergone a series of generalizations to broader classes with other algebraic structures. The class of quasi-cyclic (QC) codes generalizes the shift index of cyclic codes so that it is not limited to a single cyclic shift, while constacyclic codes generalize the shift constant of cyclic codes to any nonzero element in the field. The algebraic structures of QC and constacyclic codes over finite fields are fully described in \cite{Barbier2012,SanLing2001,Cayrel2010} and \cite{Chen2012,Bakshi2012} respectively. In \cite{Esmaeili2009,Gneri2017,Siap05thestructure}, generalized quasi-cyclic (GQC) codes are presented as an in-depth generalization of QC codes. Unlike QC codes, the block lengths of GQC codes are not necessarily equal. On the other hand, \cite{Jia2012quasi,Gao2014} generalize the shift constant of QC codes to any nonzero element, and hence the class of quasi-twisted (QT) codes is obtained. However, the block lengths of a QT code are equal. In \cite{Aydin2017}, a comprehensive class of codes is introduced, the class of multi-twisted (MT) codes. This class contains cyclic codes, constacyclic codes, QC codes, QT codes, and GQC codes as subclasses. MT codes are similar to GQC codes in that the block lengths are not necessarily equal, and they are similar to QT codes in that the shift constants are not necessarily equal to one, moreover, different shift constants can be used for different blocks of a MT code. Algebraic structures for MT codes are described in \cite{Sharma2018,chauhan2021,Sharma}.

Due to the the invariance of QC, QT, and GQC codes under some linear transformations, they obtained the algebraic structure of modules over principal ideal domains \cite{Lally2001,Matsui2015}. Thus, any of these codes can be generated by a generator polynomial matrix (GPM) that satisfies some identical equation. GPM entries are elements in a principal ideal domain (PID), thus the Hermite normal form \cite{Cohen1996} is used to identify each code by a unique reduced matrix called the reduced GPM. In \cite{Matsui2015}, the identical equation of the reduced GPM is used to construct a GPM for the Euclidean dual of a GQC code. The Euclidean dual of a code is defined as the set of all vectors that yield a zero Euclidean inner product with each codeword in the code. In \cite{Fan2016}, the Euclidean inner product on the vector space $\mathbb{F}_q^n$ is generalized to the Galois inner product, and is used to determine the Galois duals of constacyclic codes. Later, the Galois duals of MT codes is discussed in \cite{Sharma}.

Our contribution to this paper is divided into several parts. In the first part, we generalize the description of GQC codes in \cite{Matsui2015} to describe MT codes as free modules over the PID $\mathbb{F}_q[x]$. We thus identify a MT code by the unique Hermite normal form of its GPM. Analogously, we set up an identical equation for the GPM. We take advantage of the proven fact that the Euclidean dual of a MT code is also MT and provide a GPM formula for this dual. Precisely, we use the identical equation of the MT code to deduce a GPM for its Euclidean dual. Although we imitate \cite{Matsui2015}, our result generalizes \cite{Matsui2015} to the inclusive class of MT codes. Specifically, we prove formulas for GPMs of the Euclidean duals of QC, QT, GQC, and MT codes from their identical equations.

In the second part, we aim to obtain a generalization of the first result by replacing the Euclidean inner product with the Galois inner product. We define the Galois inner product as in \cite{Fan2016} and demonstrate several interesting properties of Galois duals of MT codes. Unlike the Euclidean inner product, the Galois inner product is asymmetric. Consequently, the set of vectors that have zero Galois inner product with all codewords of a linear code will differ if we are going to make these vectors to the left or to the right of the Galois inner product. Therefore, it is necessary to differentiate between the right Galois dual and the left Galois dual of a linear code. To study these duals for MT codes, we inspect the application of a finite field automorphism to a MT code. This produces a MT code with the same block lengths but possibly different shift constants. We deduce the reduced GPM of the resulting MT code and the matrix that satisfies its identical equation from their counterparts of the original MT code. We then prove that the right and left Galois duals of a linear code are the images of its Euclidean dual under some automorphisms. Thus the right and left Galois duals of a MT code are also MT. We prove formulas for their shift constants, their reduced GPMs, and the matrices that satisfy their identical equations. To our knowledge, right and left Galois duals have not appeared simultaneously in any previous study. For instance, the Galois dual introduced in \cite{Hongwei} coincides with our definition of the right Galois dual, while the Galois dual introduced in \cite{Sharma} coincides with our definition of the left Galois dual. We find it useful to simultaneously include these two distinct Galois duals in our study. This allowed us to prove their interrelated properties, see Theorem \ref{Properties}. Some of these properties generalize the traditional properties of the Euclidean dual of a linear code. For example, the right (respectively, left) Galois dual of the left (respectively, right) Galois dual is the original code.

Another significant advantage of including the right and left Galois duals simultaneously in our study is to inspect the two-sided Galois dual of a MT code, which we define as the intersection of these two duals. For a MT code, although both its right and left Galois duals are MT, its two-sided Galois dual is not necessarily MT. We use a sufficient condition under which the two-sided Galois dual of a MT code is MT as well. Under this condition, we aim to describe a GPM of the two-sided Galois dual. We begin this direction in a more general context in Theorems \ref{Maiin1}--\ref{Maiin6}. A particular case of these theorems leads to some constraints whose solution produces the reduced GPM of the two-sided Galois dual and the matrix satisfying the identical equation. With the aid of the trace map over a finite field extension, we provide an auxiliary equation that helps in solving these constraints. An illustrative example shows in detail how to find a solution that satisfies these constraints. Furthermore, two remarkable cases of the two-sided Galois dual of a MT code are considered. The first is when the right and left Galois duals are identical. We establish a necessary and sufficient condition on any linear code to have equal right and left Galois duals. The second is when the right and left Galois duals trivially intersect. An application of the latter case is given in Corollary \ref{direct_sum}, which presents the condition on a MT code equivalent to writing the vector space as a direct sum of the right and left Galois duals of the code.

The remaining sections are organized as follows. Section \ref{preliminaries} summarizes some preliminaries to MT codes, their properties, GPMs, identical equations, and reduced forms of their GPMs. In Section \ref{Euclidean_dual}, we present our results regarding the Euclidean duals of MT codes. However, the results for the right, left, and two-sided Galois duals of a MT code are presented in Section \ref{Galois_dual_Sec}. We conclude the study in Section \ref{conclusion}.

\section{The algebraic structure of a MT code}
\label{preliminaries}
Let $\mathbb{F}_q$ be the finite field of order $q$, where $q=p^e$ is a prime power. A code $\mathcal{C}$ over $\mathbb{F}_q$ of length $n$ is linear if it is a subspace of $\mathbb{F}_q^n$, and hence we can define the dimension of $\mathcal{C}$. The Euclidean inner product on $\mathbb{F}_q^n$ is a symmetric bilinear form defined by
\[\langle\mathbf{a},\mathbf{b}\rangle =\sum_{i=0}^{n-1} a_i b_i\] 
for any $\mathbf{a}=\left(a_0,a_1,\ldots,a_{n-1}\right), \mathbf{b}=\left(b_0,b_1,\ldots,b_{n-1}\right)\in\mathbb{F}_q^n$. The Euclidean dual $\mathcal{C}^\perp$ of $\mathcal{C}$ is defined by
\begin{equation*}
\mathcal{C}^\perp=\left\{\mathbf{a}\in\mathbb{F}_q^n \ \mid \ \langle \mathbf{a},\mathbf{c}\rangle =0 \ \forall \ \mathbf{c}\in\mathcal{C} \right\}.
\end{equation*}
If $\mathcal{C}$ is linear of length $n$ and dimension $k$, one can easily show that $\mathcal{C}^\perp$ is linear of dimension $n-k$ and $\left(\mathcal{C}^\perp\right)^\perp=\mathcal{C}$.

A linear code is called cyclic if it is invariant under the cyclic shift of its codewords by one coordinate. That is, $\mathcal{C}$ is cyclic if and only if
\begin{equation*}
\left(c_0, c_1, \ldots, c_{n-2}, c_{n-1} \right)\in \mathcal{C} \Rightarrow \left(c_{n-1}, c_0, \ldots, c_{n-3}, c_{n-2} \right)\in \mathcal{C}.
\end{equation*}
It is convenient to represent the codewords of a cyclic code as polynomials in the quotient ring $\mathscr{R}=\mathbb{F}_q[x] / \langle x^n-1 \rangle$. Precisely, $\left(c_0, c_1, \ldots, c_{n-2}, c_{n-1} \right)\in\mathcal{C}$ has the polynomial representation $c_0 + c_1 x +\cdots + c_{n-2} x^{n-2} + c_{n-1} x^{n-1}\in \mathscr{R}$. This representation gives cyclic codes the structure of ideals in $\mathscr{R}$. The cyclic shift property of cyclic codes is generalized to constacyclic codes. Let $0\ne \lambda \in \mathbb{F}_q$. A linear code $\mathcal{C}$ is called constacyclic with a shift constant $\lambda$ if 
\begin{equation*}
\left(c_0, c_1, \ldots, c_{n-2}, c_{n-1} \right)\in \mathcal{C} \Rightarrow \left(\lambda c_{n-1}, c_0, \ldots, c_{n-3}, c_{n-2} \right)\in \mathcal{C}.
\end{equation*}
In polynomial representation, a constacyclic code over $\mathbb{F}_q$ of length $n$ and shift constant $\lambda$ is an ideal in the quotient ring $\mathscr{R}_\lambda=\mathbb{F}_q[x] / \langle x^n-\lambda \rangle$. But any ideal in $\mathscr{R}_\lambda$ corresponds to an ideal in $\mathbb{F}_q[x]$ containing $x^n-\lambda$. The latter has a unique monic generator polynomial $g(x)$ that satisfies the identical equation $a(x)g(x)=x^n-\lambda$; this is because $\mathbb{F}_q[x]$ is a PID. Thus, constacyclic codes over $\mathbb{F}_q$ of length $n$ and shift constant $\lambda$ are in one-to-one correspondence with ideals of $\mathbb{F}_q[x]$ generated by monic divisors of $x^n-\lambda$. We aim to present analogous correspondence in the class of MT codes.

A linear code $\mathcal{C}$ over $\mathbb{F}_q$ of length $n$ is called $\ell$-QC if it is invariant under the cyclic shift of its codewords by $\ell$ coordinates. Thus, $\mathcal{C}$ is $\ell$-QC if and only if
\begin{equation*}
\left( c_{1}, c_{2}, \ldots, c_{n}\right) \in\mathcal{C} \Rightarrow \left( c_{n-\ell+1}, \dots, c_{n}, c_{1}, c_{2},\ldots, c_{n-\ell}\right) \in\mathcal{C}.
\end{equation*}
The smallest positive integer $\ell$ with this property is called the index of $\mathcal{C}$ and denoted by $\ell$. Indeed, the index divides the code length $n$ and their quotient is called the co-index of $\mathcal{C}$, denoted $m$. A codeword of a QC code $\mathcal{C}$ of index $\ell$ and length $m\ell$ can be partitioned as 
\begin{equation}
\label{shift2}
\mathbf{c}=\left( c_{0,1}, c_{0,2}, \ldots, c_{0,\ell}, c_{1,1}, c_{1,2}, \ldots, c_{1,\ell}, \ldots, c_{m-1,1}, c_{m-1,2}, \ldots, c_{m-1,\ell} \right).
\end{equation}
A linear code $\mathcal{C}$ is QC of index $\ell$ and co-index $m$ if and only if
\begin{equation*}
\left( c_{m-1,1}, c_{m-1,2}, \ldots, c_{m-1,\ell}, c_{0,1}, c_{0,2}, \ldots, c_{0,\ell}, \ldots, c_{m-2,1}, c_{m-2,2}, \ldots, c_{m-2,\ell} \right)
\end{equation*}
is a codeword for every $\mathbf{c}\in\mathcal{C}$ in the form of \eqref{shift2}. QC codes generalize cyclic codes (when $\ell=1$) but not constacyclic codes, however QT codes do. For a nonzero $\lambda \in\mathbb{F}_q$, a linear code $\mathcal{C}$ over $\mathbb{F}_q$ of length $n$ is called $(\ell,\lambda)$-QT if 
\begin{equation*}
\left( c_{1}, c_{2}, \ldots, c_{n}\right) \in\mathcal{C} \Rightarrow \left( \lambda c_{n-\ell+1}, \dots, \lambda c_{n}, c_{1}, c_{2},\ldots, c_{n-\ell}\right) \in\mathcal{C}.
\end{equation*}
The index of $\mathcal{C}$ is the smallest positive integer $\ell$ with this property, while $\lambda$ is called the shift constant of $\mathcal{C}$. The index of a QT code divides its length and their quotient is the co-index $m$. Similar to QC codes, a linear code $\mathcal{C}$ of length $m\ell$ is $(\ell,\lambda)$-QT if and only if 
\begin{equation*}
\left( \lambda c_{m-1,1}, \lambda c_{m-1,2}, \ldots, \lambda c_{m-1,\ell}, c_{0,1}, c_{0,2}, \ldots, c_{0,\ell}, \ldots, c_{m-2,1}, c_{m-2,2},  \ldots, c_{m-2,\ell} \right)
\end{equation*}
is a codeword for every $\mathbf{c}\in\mathcal{C}$ in the form of \eqref{shift2}. Let $T_{(\ell,\lambda)}$ be the automorphism of $\mathbb{F}_q^{m\ell}$ such that
\begin{equation*}
\begin{split}
&T_{(\ell,\lambda)} \left( a_{0,1}, a_{0,2}, \ldots, a_{0,\ell}, a_{1,1}, a_{1,2}, \ldots, a_{1,\ell}, \ldots, a_{m-1,1}, a_{m-1,2}, \ldots, a_{m-1,\ell} \right)\\
&\quad  =  \left( \lambda a_{m-1,1}, \lambda a_{m-1,2}, \ldots , \lambda a_{m-1,\ell} , a_{0,1} , a_{0,2} , \ldots , a_{0,\ell} , \ldots , a_{m-2,1} , a_{m-2,2} , \ldots , a_{m-2,\ell} \right) .
\end{split}
\end{equation*}
We view $\mathbb{F}_q^{m\ell}$ as an $\mathbb{F}_q[x]$-module by defining the action of $x$ as the action of $T_{\ell,\lambda}$. Since an $(\ell,\lambda)$-QT code over $\mathbb{F}_q$ of length $m\ell$ is a $T_{\ell,\lambda}$-invariant $\mathbb{F}_q$-subspace of $\mathbb{F}_q^{m\ell}$, it is an $\mathbb{F}_q[x]$-submodule of $\mathbb{F}_q^{m\ell}$. To exhibit a polynomial representation for QT codes, let $\phi: \mathbb{F}_q^{m\ell} \rightarrow \mathscr{R}_\lambda^\ell$ be the $\mathbb{F}_q[x]$-module isomorphism defined by
\begin{eqnarray*}
\phi:\left( a_{0,1}, a_{0,2}, \ldots, a_{0,\ell}, a_{1,1}, a_{1,2}, \ldots, a_{1,\ell}, \ldots, a_{m-1,1}, a_{m-1,2}, \ldots, a_{m-1,\ell} \right) \mapsto \left( a_1(x), a_2(x), \ldots, a_\ell(x)\right)
\end{eqnarray*}
where $a_j(x)=a_{0,j}+a_{1,j} x+a_{2,j} x^2+\cdots +a_{m-1,j}x^{m-1}\in \mathscr{R}_\lambda$ for $1\le j \le \ell$. The polynomial representation of an $(\ell,\lambda)$-QT code $\mathcal{C}\subseteq \mathbb{F}_q^{m\ell}$ is $\phi\left(\mathcal{C}\right)$. Specifically, the codeword given by \eqref{shift2} is represented by the polynomial vector
\begin{equation*}
\begin{split}
\mathbf{c}(x)=&\left( c_{0,1}+c_{1,1} x+c_{2,1} x^2+\cdots +c_{m-1,1}x^{m-1},  c_{0,2}+c_{1,2} x+c_{2,2} x^2+\cdots +c_{m-1,2}x^{m-1}, \ldots,\right.\\ 
&\qquad\qquad\qquad\qquad\qquad\qquad\qquad\qquad\qquad\qquad\qquad \left. c_{0,\ell}+c_{1,\ell} x+c_{2,\ell} x^2+\cdots +c_{m-1,\ell}x^{m-1}\right)\in \mathscr{R}_\lambda^\ell.
\end{split}
\end{equation*}
Thus, $(\ell,\lambda)$-QT codes are in one-to-one correspondence with the $\mathbb{F}_q[x]$-submodules of $\mathscr{R}_\lambda^\ell$, and thus are in one-to-one correspondence with the $\mathbb{F}_q[x]$-submodules of $\left(\mathbb{F}_q[x]\right)^\ell$ containing the submodule
\begin{equation*}
M=\left( (x^m-\lambda)\mathbb{F}_q[x] \right)^\ell.
\end{equation*}
We do not distinguish between representing an $(\ell,\lambda)$-QT code as a $T_{\ell,\lambda}$-invariant subspace of $\mathbb{F}_q^{m\ell}$ or representing it as an $\mathbb{F}_q[x]$-submodule of $\left(\mathbb{F}_q[x]\right)^\ell$ that contains $M$. MT codes provide an additional generalization of QT codes by generalizing the $\ell$ block lengths of length $m$ into $\ell$ blocks that are not necessarily equal.
\begin{definition}
Let $m_1, m_2, \ldots ,m_\ell$ be positive integers and $\Lambda=\left(\lambda_1, \lambda_2, \ldots, \lambda_\ell \right)$, where $0\ne\lambda_j \in\mathbb{F}_q$ for $1\le j\le \ell$. A $\Lambda$-MT code over $\mathbb{F}_q$ of index $\ell$ and block lengths $(m_1,m_2,\ldots,m_\ell)$ is an $\mathbb{F}_q[x]$-submodule of $\left(\mathbb{F}_q[x]\right)^\ell$ that contains the submodule
\begin{equation*}
\begin{split}
M_\Lambda= \bigoplus_{j=1}^{\ell} \left(\left(x^{m_j}-\lambda_j\right)\mathbb{F}_q[x]\right).
\end{split}
\end{equation*}\end{definition}
From its definition, a MT code $\mathcal{C}$ of index $\ell$ is a linear code over $\mathbb{F}_q[x]$ of length $\ell$. A generator matrix for $\mathcal{C}$ as a linear code over $\mathbb{F}_q[x]$ is called GPM because its entries are polynomials over $\mathbb{F}_q$. Since a GPM is a matrix over the PID $\mathbb{F}_q[x]$, one might ask for its unique Hermite normal form, which we call the reduced GPM.

\begin{theorem}
\label{MT_Corresp}
There is a one-to-one correspondence between $\left(\lambda_1, \lambda_2, \ldots, \lambda_\ell \right)$-MT codes over $\mathbb{F}_q$ of index $\ell$ and block lengths $(m_1,m_2,\ldots,m_\ell)$ and $T_{\Lambda}$-invariant $\mathbb{F}_q$-subspaces of $\mathbb{F}_q^{n}$, where $n=m_1+m_2+\cdots+m_\ell$ and $T_{\Lambda}$ is the automorphism of $\mathbb{F}_q^n$ given by
\begin{equation}
\label{T_ell_Lambda}
\begin{split}
&T_{\Lambda}:\left( a_{0,1},\ldots, a_{m_1-1,1},a_{0,2},\ldots, a_{m_2-1,2},\ldots, a_{0,\ell},\ldots, a_{m_\ell-1,\ell}\right)\mapsto\\
&\qquad\left(\lambda_1 a_{m_1-1,1},a_{0,1},\ldots, a_{m_1-2,1}, \lambda_2 a_{m_2-1,2},a_{0,2},\ldots, a_{m_2-2,2},\ldots, \lambda_\ell a_{m_\ell-1,\ell},a_{0,\ell},\ldots, a_{m_\ell-2,\ell}\right).
\end{split}
\end{equation}
\end{theorem}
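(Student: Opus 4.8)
The plan is to mimic, block by block, the polynomial representation already developed for QT codes in this section, now allowing the block lengths and shift constants to vary. First I would make $\mathbb{F}_q^n$ into an $\mathbb{F}_q[x]$-module by declaring that $x$ acts as the automorphism $T_\Lambda$ of \eqref{T_ell_Lambda}; since $T_\Lambda$ is $\mathbb{F}_q$-linear and invertible, this is a well-defined module structure, and an $\mathbb{F}_q$-subspace of $\mathbb{F}_q^n$ is $T_\Lambda$-invariant precisely when it is an $\mathbb{F}_q[x]$-submodule of $\mathbb{F}_q^n$.

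Next I would introduce the map
\[
\phi:\mathbb{F}_q^n\longrightarrow \bigoplus_{j=1}^{\ell}\mathscr{R}_{\lambda_j},\qquad \mathscr{R}_{\lambda_j}=\mathbb{F}_q[x]/\langle x^{m_j}-\lambda_j\rangle,
\]
sending $\left( a_{0,1},\ldots, a_{m_1-1,1},\ldots, a_{0,\ell},\ldots, a_{m_\ell-1,\ell}\right)$ to the tuple whose $j$-th entry is $a_{0,j}+a_{1,j}x+\cdots +a_{m_j-1,j}x^{m_j-1}$. This is an $\mathbb{F}_q$-linear bijection because it carries the standard basis of $\mathbb{F}_q^n$ to the obvious monomial basis of $\bigoplus_{j} \mathscr{R}_{\lambda_j}$, and $n=m_1+\cdots+m_\ell=\dim_{\mathbb{F}_q}\bigoplus_{j}\mathscr{R}_{\lambda_j}$.

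The core step is to verify that $\phi$ is $\mathbb{F}_q[x]$-linear, i.e.\ $\phi\circ T_\Lambda = x\cdot\phi$. It suffices to check this on a single block: applying $T_\Lambda$ moves $a_{m_j-1,j}$ to the front of block $j$ scaled by $\lambda_j$ and shifts the remaining entries by one position, which is exactly the effect of multiplying $a_{0,j}+\cdots +a_{m_j-1,j}x^{m_j-1}$ by $x$ in $\mathscr{R}_{\lambda_j}$, using the relation $x^{m_j}\equiv\lambda_j$. This is the computation already carried out for QT codes earlier in the section, now performed coordinate-wise with the per-block modulus $x^{m_j}-\lambda_j$; I expect this bookkeeping to be the only point requiring attention, and it is routine rather than a genuine obstacle.

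Finally I would combine $\phi$ with the correspondence theorem for modules. Identifying $\bigoplus_{j}\mathscr{R}_{\lambda_j}$ with $\left(\mathbb{F}_q[x]\right)^\ell/M_\Lambda$, the $\mathbb{F}_q[x]$-submodules of $\bigoplus_{j}\mathscr{R}_{\lambda_j}$ are in one-to-one correspondence with the $\mathbb{F}_q[x]$-submodules of $\left(\mathbb{F}_q[x]\right)^\ell$ containing $M_\Lambda$, that is, with $\Lambda$-MT codes of index $\ell$ and block lengths $(m_1,\ldots,m_\ell)$ by definition. Pulling this chain of bijections back through the isomorphism $\phi$ yields the asserted one-to-one correspondence with the $T_\Lambda$-invariant $\mathbb{F}_q$-subspaces of $\mathbb{F}_q^n$.
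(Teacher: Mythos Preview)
Your proposal is correct and follows essentially the same approach as the paper: both equip $\mathbb{F}_q^n$ with the $\mathbb{F}_q[x]$-module structure coming from $T_\Lambda$, exhibit the map $\phi$ to $\bigoplus_j \mathbb{F}_q[x]/\langle x^{m_j}-\lambda_j\rangle$ as an $\mathbb{F}_q[x]$-module isomorphism by checking $\phi\circ T_\Lambda = x\cdot\phi$, and then invoke the correspondence theorem to pass to submodules of $(\mathbb{F}_q[x])^\ell$ containing $M_\Lambda$. The only differences are cosmetic ordering and that the paper packages the verification of $\mathbb{F}_q[x]$-linearity as a commutative diagram with the multiplication-by-$x$ map $\psi$.
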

\begin{proof}
For $1\le j\le \ell$, let $\mathscr{R}_{m_j,\lambda_j}=\mathbb{F}_q[x]/\langle x^{m_j}-\lambda_j\rangle$ and $\pi_j:\mathbb{F}_q[x]\rightarrow\mathscr{R}_{m_j,\lambda_j}$ be the projection homomorphism. Then $\pi=\oplus_{j=1}^\ell \pi_j : \left(\mathbb{F}_q[x]\right)^\ell \rightarrow \oplus_{j=1}^\ell \mathscr{R}_{m_j,\lambda_j}$ is a surjective homomorphism with kernel $M_\Lambda$. Actually, $\pi$ defines a one-to-one correspondence between $\mathbb{F}_q[x]$-submodules of $\oplus_{j=1}^\ell \mathscr{R}_{m_j,\lambda_j}$ and $\mathbb{F}_q[x]$-submodules of $\left(\mathbb{F}_q[x]\right)^\ell$ that contain $M_\Lambda$. Hence, $\left(\lambda_1, \lambda_2, \ldots, \lambda_\ell \right)$-MT codes over $\mathbb{F}_q$ of index $\ell$ and block lengths $(m_1,m_2,\ldots,m_\ell)$ are precisely the submodules of $\oplus_{j=1}^\ell \mathscr{R}_{m_j,\lambda_j}$.

We view $\mathbb{F}_q^{n}$ as an $\mathbb{F}_q[x]$-module by defining the action of $x$ as the action of $T_{\Lambda}$. Then the $\mathbb{F}_q[x]$-submodules of $\mathbb{F}_q^{n}$ are precisely the $T_{\Lambda}$-invariant $\mathbb{F}_q$-subspaces of $\mathbb{F}_q^{n}$. Let $\phi: \mathbb{F}_q^n\rightarrow \oplus_{j=1}^\ell \mathscr{R}_{m_j,\lambda_j}$ be the $\mathbb{F}_q$-vector space isomorphism defined by
\begin{equation}
\label{the_isomorphism_phi}
\left( a_{0,1},\ldots, a_{m_1-1,1},\ldots, a_{0,\ell},\ldots, a_{m_\ell-1,\ell}\right) \mapsto \left( a_1(x),a_2(x),\ldots,a_\ell(x)\right)
\end{equation}
where $a_j(x)=a_{0,j}+a_{1,j}x+\cdots+ a_{m_j-1,j}x^{m_j-1}$ for $1\le j\le \ell$. This gives the commutative diagram of $\mathbb{F}_q$-vector space isomorphisms
\begin{equation}
\label{Commut_diagram2}
\begin{tikzcd}[swap]
    \mathbb{F}_q^{n} \arrow{r}[swap]{\phi}{} \arrow{d}{T_{\Lambda}}  & \oplus_{j=1}^\ell \mathscr{R}_{m_j,\lambda_j} \arrow{d}[swap]{\psi}{} \\  
    \mathbb{F}_q^{n} \arrow{r}[swap]{}{\phi}   & \oplus_{j=1}^\ell \mathscr{R}_{m_j,\lambda_j}
\end{tikzcd}
\end{equation}
where $\psi:  \left( a_1(x), a_2(x), \ldots, a_\ell(x)\right) \mapsto \left(x a_1(x),x a_2(x), \ldots,x a_\ell(x)\right)$. Then $x\phi\left(\mathbf{a}\right)=\phi\left(T_{\Lambda}(\mathbf{a})\right)$ for any $\mathbf{a}\in\mathbb{F}_q^{n}$, and $\phi$ is an $\mathbb{F}_q[x]$-module isomorphism.

If $\mathcal{C}$ is a $\left(\lambda_1, \lambda_2, \ldots, \lambda_\ell \right)$-MT code over $\mathbb{F}_q$ of index $\ell$ and block lengths $(m_1,m_2,\ldots,m_\ell)$, then $\phi^{-1}\circ\pi\left(\mathcal{C}\right)$ is a $T_{\Lambda}$-invariant $\mathbb{F}_q$-subspace of $\mathbb{F}_q^{n}$. Conversely, the image of any $T_{\Lambda}$-invariant $\mathbb{F}_q$-subspace of $\mathbb{F}_q^{n}$ under the map $\pi^{-1}\circ\phi$ is an $\mathbb{F}_q[x]$-submodule of $\left(\mathbb{F}_q[x]\right)^\ell$ that contains $M_\Lambda$.
\end{proof}

Hereinafter, by a MT code we mean a $T_{\Lambda}$-invariant subspace of $\mathbb{F}_q^{n}$ or a submodule of $\left(\mathbb{F}_q[x]\right)^\ell$ that contains $M_\Lambda$, and the used algebraic structure is determined from the context. On the other hand, the polynomial representation of a MT-code is the corresponding submodule of $\oplus_{j=1}^\ell \mathscr{R}_{m_j,\lambda_j}$.

Let $\mathcal{C}$ be a $\left(\lambda_1, \lambda_2, \ldots, \lambda_\ell \right)$-MT code over $\mathbb{F}_q$ of index $\ell$, block lengths $(m_1,m_2,\ldots,m_\ell)$, and an $r\times n$ generator matrix $G$ that generates $\mathcal{C}$ as an $\mathbb{F}_q$-subspace of $\mathbb{F}_q^{n}$. Let $\phi$ be the map defined by \eqref{the_isomorphism_phi} and let 
\begin{equation*}
\left( G_{i,1}(x) , G_{i,2}(x) ,\ldots, G_{i,\ell}(x)\right)=\phi\left( \mathrm{row}_i\left(G\right)\right)
\end{equation*}
for $i=1,2,\ldots, r$. Then $\mathcal{C}$ (as an $\mathbb{F}_q[x]$-submodule of $\left(\mathbb{F}_q[x]\right)^\ell$) has a GPM of the form
\begin{equation*}
\begin{pmatrix}
G_{1,1}(x) & G_{1,2}(x) & G_{1,3}(x) & \cdots & G_{1,\ell}(x)\\
G_{2,1}(x) & G_{2,2}(x) & G_{2,3}(x) & \cdots & G_{2,\ell}(x)\\
\vdots & \vdots & \vdots & \ddots & \vdots \\
G_{r,1}(x) & G_{r,2}(x) & G_{r,3}(x) & \cdots & G_{r,\ell}(x)\\
x^{m_1}-\lambda_1 & 0 & 0 & \cdots & 0\\
0 & x^{m_2}-\lambda_2 & 0 &  \cdots & 0\\
\vdots & \vdots & \vdots & \ddots & \vdots \\
0 & 0 & 0 & \cdots & x^{m_\ell}-\lambda_\ell
\end{pmatrix}.
\end{equation*}
Reducing this matrix to the Hermite normal form yields the reduced GPM $\mathbf{G}$ of $\mathcal{C}$. In fact, $\left(\mathbb{F}_q[x]\right)^\ell$ and $M_\Lambda$ are free modules of rank $\ell$ over the PID $\mathbb{F}_q[x]$ and $M_\Lambda\subseteq \mathcal{C}\subseteq \left(\mathbb{F}_q[x]\right)^\ell$, then $\mathcal{C}$ has rank $\ell$. Consequently, the reduced GPM $\mathbf{G}=\left[g_{i,j}\right]$ is upper triangular of rank $\ell$ and size $\ell\times\ell$ such that, for $1\le i\le \ell$,
\begin{enumerate}
\item $g_{i,i}\ne 0$ is monic and
\item $\deg\left(g_{h,i}\right)<\deg\left(g_{i,i}\right)$ for all $1\le h<i$.
\end{enumerate}

\begin{theorem}
\label{Containment}
Let $\mathcal{C}$ and $\mathcal{C}'$ be two $\Lambda$-MT codes of index $\ell$ and block lengths $\left(m_1,m_2,\ldots,m_\ell\right)$. Let $\mathbf{G}$ and $\mathbf{G}'$ be GPMs for $\mathcal{C}$ and $\mathcal{C}'$ respectively. Then, $\mathcal{C}'\subseteq\mathcal{C}$ if and only if $\mathbf{G}'=\mathbf{Y}\mathbf{G}$ for some matrix $\mathbf{Y}$. If $\mathbf{G}$ and $\mathbf{G}'$ are the reduced GPMs, then $\mathbf{Y}$ is upper triangular.
\end{theorem}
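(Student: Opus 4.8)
The plan is to prove the two directions separately, working with the module structure over the PID $\mathbb{F}_q[x]$ throughout, and to extract the triangularity of $\mathbf{Y}$ from the Hermite-normal-form conditions recalled just before the theorem.

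First I would prove the easier implication: if $\mathbf{G}'=\mathbf{Y}\mathbf{G}$ for some matrix $\mathbf{Y}$ over $\mathbb{F}_q[x]$, then $\mathcal{C}'\subseteq\mathcal{C}$. Indeed, the rows of $\mathbf{G}$ generate $\mathcal{C}$ as an $\mathbb{F}_q[x]$-submodule of $(\mathbb{F}_q[x])^\ell$, so every $\mathbb{F}_q[x]$-linear combination of the rows of $\mathbf{G}$ lies in $\mathcal{C}$; but the rows of $\mathbf{Y}\mathbf{G}$ are exactly such combinations, hence the rows of $\mathbf{G}'$ lie in $\mathcal{C}$, and since they generate $\mathcal{C}'$ we get $\mathcal{C}'\subseteq\mathcal{C}$.

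For the converse, suppose $\mathcal{C}'\subseteq\mathcal{C}$. Each row $\mathrm{row}_i(\mathbf{G}')$ is an element of $\mathcal{C}'\subseteq\mathcal{C}$, and since $\mathbf{G}$ is a GPM for $\mathcal{C}$, i.e.\ its rows generate $\mathcal{C}$ over $\mathbb{F}_q[x]$, we may write $\mathrm{row}_i(\mathbf{G}')=\sum_j Y_{i,j}\,\mathrm{row}_j(\mathbf{G})$ for suitable $Y_{i,j}\in\mathbb{F}_q[x]$; collecting these coefficients into the matrix $\mathbf{Y}=[Y_{i,j}]$ gives $\mathbf{G}'=\mathbf{Y}\mathbf{G}$. (One should note here that since both $\mathbf{G}$ and $\mathbf{G}'$ have $\ell$ rows and full rank $\ell$ as established in the discussion preceding the theorem, $\mathbf{Y}$ is $\ell\times\ell$; and since $\mathbf{G}$ has rank $\ell$ its rows are $\mathbb{F}_q[x]$-linearly independent, so the $Y_{i,j}$ are uniquely determined.)

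It remains to show that when $\mathbf{G}=[g_{i,j}]$ and $\mathbf{G}'=[g'_{i,j}]$ are the \emph{reduced} GPMs, the matrix $\mathbf{Y}=[Y_{i,j}]$ is upper triangular; this is the step I expect to carry the real content. Both $\mathbf{G}$ and $\mathbf{G}'$ are upper triangular with monic diagonal entries satisfying the degree constraints $\deg(g_{h,i})<\deg(g_{i,i})$ for $h<i$, and similarly for $g'$. Comparing the $(i,j)$-entry of $\mathbf{G}'=\mathbf{Y}\mathbf{G}$ gives $g'_{i,j}=\sum_{k}Y_{i,k}\,g_{k,j}$, and because $\mathbf{G}$ is upper triangular this sum runs only over $k\le j$. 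The plan is to prove $Y_{i,j}=0$ for $i>j$ by induction on $j=1,2,\ldots,\ell$. For $j=1$: $g'_{i,1}=Y_{i,1}g_{1,1}$; for $i>1$ the left side is $0$ (upper triangularity of $\mathbf{G}'$) and $g_{1,1}\ne 0$, forcing $Y_{i,1}=0$. For the inductive step, fix $j$ and assume $Y_{i,k}=0$ for all $k<j$ and all $i>k$; then for $i>j$ the relation $g'_{i,j}=\sum_{k\le j}Y_{i,k}g_{k,j}$ reduces, using $Y_{i,k}=0$ for $k<j$ (valid since $i>j>k$ so $i>k$), to $0=g'_{i,j}=Y_{i,j}g_{j,j}$, and again $g_{j,j}\ne 0$ gives $Y_{i,j}=0$. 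This completes the induction and shows $\mathbf{Y}$ is upper triangular. The main obstacle is bookkeeping: making sure the index ranges line up so that the inductive hypothesis genuinely applies to each term killed in the step, and invoking the nonvanishing of the diagonal entries $g_{j,j}$ of the reduced form at exactly the right place — the degree constraints themselves are not needed for triangularity of $\mathbf{Y}$, only the upper-triangular shape and nonzero diagonal of the reduced GPMs.
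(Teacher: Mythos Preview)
Your proof is correct and follows essentially the same approach as the paper: the equivalence comes from the rows of a GPM generating the code over $\mathbb{F}_q[x]$, and the upper-triangularity of $\mathbf{Y}$ follows from the upper-triangular shape and nonzero diagonals of the reduced GPMs together with $\mathbb{F}_q[x]$ being an integral domain. Your induction on $j$ simply spells out in detail what the paper compresses into a single sentence.
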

\begin{proof}
{We have $\mathcal{C}'\subseteq\mathcal{C}$ if and only if $\mathbf{G}$ generates the rows of $\mathbf{G}'$ if and only if $\mathbf{G}'=\mathbf{Y}\mathbf{G}$ for some matrix $\mathbf{Y}$. Suppose $\mathbf{G}$ and $\mathbf{G}'$ are in the reduced form. Then $\mathbf{Y}$ is upper triangular because $\mathbf{G}$ and $\mathbf{G}'$ are upper triangular with nonzero diagonal entries and $\mathbf{F}_q[x]$ is an integral domain.}
\end{proof}

The diagonal matrix 
\begin{equation*}
\mathbf{D}=\mathrm{diag}\left[x^{m_1}-\lambda_1,x^{m_2}-\lambda_2,\ldots,x^{m_\ell}-\lambda_\ell\right]
=\begin{pmatrix}
x^{m_1}-\lambda_1 & 0 & \ldots & 0\\
0 & x^{m_2}-\lambda_2 & \ldots & 0\\
\vdots & \vdots & \ddots & \vdots \\
0 & 0 & \ldots & x^{m_\ell}-\lambda_\ell
\end{pmatrix}
\end{equation*}
is the reduced GPM of the $\Lambda$-MT code $M_\Lambda$. But any $\Lambda$-MT code $\mathcal{C}$ with a GPM $\mathbf{G}$ contains $M_\Lambda$. Then from Theorem \ref{Containment}, there is a matrix $\mathbf{A}$ such that
\begin{equation}
\label{identical_eq}
\mathbf{A}\mathbf{G}=\mathbf{D}.
\end{equation}
Equation \eqref{identical_eq} is called the identical equation of $\mathbf{G}$. The matrix $\mathbf{A}$ plays a fundamental role in constructing a GPM for the Euclidean and Galois duals of a MT code. If $\mathbf{A}=\left[a_{i,j}\right]$ is the matrix that satisfies the identical equation of the reduced GPM, then $\mathbf{A}$ is upper triangular and for $1\le i\le \ell$
\begin{enumerate}
\item $a_{i,i}=\frac{x^{m_i}-\lambda_i}{g_{i,i}}$ and
\item $\deg\left(a_{i,h}\right)<\deg\left(a_{i,i}\right)$ for all $i< h\le \ell$.
\end{enumerate}

In particular, $\mathbf{A}$ and $\mathbf{G}$ commute when $\mathcal{C}$ is $(\ell,\lambda)$-QT.
\begin{theorem}
\label{A_G_Commutativity}
Let $\mathcal{C}$ be an $(\ell,\lambda)$-QT code with a GPM $\mathbf{G}$ and let $\mathbf{A}$ be the matrix that satisfies the identical equation of $\mathbf{G}$. Then $\mathbf{G}\mathbf{A}=\mathbf{D}$.\end{theorem}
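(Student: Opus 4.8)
The plan is to exploit the special feature of the quasi-twisted case: for an $(\ell,\lambda)$-QT code all $\ell$ block lengths coincide with a common value $m$ and all shift constants coincide with $\lambda$, so that
\[
\mathbf{D}=\mathrm{diag}\left[x^{m}-\lambda,\ldots,x^{m}-\lambda\right]=(x^{m}-\lambda)\,\mathbf{I}_\ell
\]
is a \emph{scalar} matrix, where $\mathbf{I}_\ell$ denotes the $\ell\times\ell$ identity. A scalar matrix is central in the matrix ring $M_\ell\!\left(\mathbb{F}_q[x]\right)$, and this centrality is exactly what lets us turn the left identical equation $\mathbf{A}\mathbf{G}=\mathbf{D}$ into the right one $\mathbf{G}\mathbf{A}=\mathbf{D}$.

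First I would note that $x^{m}-\lambda\neq 0$ in the integral domain $\mathbb{F}_q[x]$, hence $\det(\mathbf{D})=(x^{m}-\lambda)^\ell\neq 0$; since $\mathbf{A}\mathbf{G}=\mathbf{D}$, taking determinants forces $\det(\mathbf{G})\neq 0$, so $\mathbf{G}$ is invertible over the field of fractions $\mathbb{F}_q(x)$. Working in $M_\ell\!\left(\mathbb{F}_q(x)\right)$, from $\mathbf{A}\mathbf{G}=\mathbf{D}$ I get $\mathbf{A}=\mathbf{D}\mathbf{G}^{-1}$, whence
\[
\mathbf{G}\mathbf{A}=\mathbf{G}\,\mathbf{D}\,\mathbf{G}^{-1}=(x^{m}-\lambda)\,\mathbf{G}\mathbf{G}^{-1}=(x^{m}-\lambda)\,\mathbf{I}_\ell=\mathbf{D},
\]
where the middle step uses that the scalar matrix $\mathbf{D}=(x^m-\lambda)\mathbf{I}_\ell$ commutes with $\mathbf{G}$. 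Although this identity was derived over $\mathbb{F}_q(x)$, both $\mathbf{G}\mathbf{A}$ and $\mathbf{D}$ are matrices over $\mathbb{F}_q[x]$, so the equality $\mathbf{G}\mathbf{A}=\mathbf{D}$ holds in $M_\ell\!\left(\mathbb{F}_q[x]\right)$, as claimed.

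An alternative that sidesteps the fraction field is to compute, from $\mathbf{A}\mathbf{G}=(x^{m}-\lambda)\mathbf{I}_\ell$, that $(\mathbf{G}\mathbf{A})^2=\mathbf{G}(\mathbf{A}\mathbf{G})\mathbf{A}=(x^{m}-\lambda)\,\mathbf{G}\mathbf{A}$, i.e. $\mathbf{G}\mathbf{A}\bigl(\mathbf{G}\mathbf{A}-(x^{m}-\lambda)\mathbf{I}_\ell\bigr)=\mathbf{0}$; since $\det(\mathbf{G}\mathbf{A})\neq 0$ the factor $\mathbf{G}\mathbf{A}$ is cancellable over $\mathbb{F}_q(x)$ and $\mathbf{G}\mathbf{A}=\mathbf{D}$ follows. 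I do not expect a real obstacle here; the only points needing care are that the argument is intrinsically one over $\mathbb{F}_q(x)$ (the polynomial matrix $\mathbf{G}$ is not invertible over $\mathbb{F}_q[x]$ unless $\mathcal{C}=M_\Lambda$), so one must observe that the conclusion descends to polynomial matrices, and that the whole argument genuinely relies on $\mathbf{D}$ being scalar — it breaks down for a general MT code, where the block lengths or shift constants may differ and $\mathbf{D}$ is no longer central.
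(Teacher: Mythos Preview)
Your proof is correct and follows essentially the same idea as the paper's. Both arguments rest on the two observations you isolate: in the QT case $\mathbf{D}=(x^{m}-\lambda)\mathbf{I}_\ell$ is central, and $\mathbf{G}$ has nonzero determinant, hence is cancellable. The paper phrases the cancellation step slightly differently---it sets $\mathbf{B}=\mathbf{G}\mathbf{A}$, computes $\mathbf{B}\mathbf{G}=\mathbf{G}\mathbf{A}\mathbf{G}=\mathbf{G}\mathbf{D}=\mathbf{D}\mathbf{G}$, and then cancels $\mathbf{G}$ on the right by invoking that its rows form a basis of the free module $\mathcal{C}$---whereas you pass explicitly to $\mathbb{F}_q(x)$ and invert $\mathbf{G}$; these are equivalent formulations of the same argument.
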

\begin{proof}
{Assume $\mathbf{G}\mathbf{A}=\mathbf{B}$. Then $\mathbf{B}\mathbf{G}=\mathbf{G}\mathbf{A}\mathbf{G}=\mathbf{G}\mathbf{D}=\left(x^m-\lambda\right)\mathbf{G}=\mathbf{D}\mathbf{G}$. That is, $\left(\mathbf{B}-\mathbf{D}\right)\mathbf{G}=\mathbf{0}$. Rows of $\mathbf{G}$ form a basis for $\mathcal{C}$, then $\left(\mathbf{B}-\mathbf{D}\right)=\mathbf{0}$ and $\mathbf{B}=\mathbf{D}$.}
\end{proof}

The following result can be proven in a similar way to Corollary 3.1 in \cite{Sharma}.
\begin{theorem}
\label{dim_MT}
Let $\mathcal{C}$ be a $\Lambda$-MT code over $\mathbb{F}_q$ of index $\ell$ and block lengths $\left(m_1,m_2,\ldots,m_\ell\right)$ and let $\mathbf{G}=[g_{i,j}]$ be an upper triangular GPM of $\mathcal{C}$. Then $\mathcal{C}$ has dimension
\begin{equation*}
k=\sum_{j=1}^{\ell}\left( m_j-\mathrm{deg}(g_{j,j})\right).
\end{equation*}
as an $\mathbb{F}_q$-vector space. Equivalently, $k=\deg\left(\mathrm{det}\left(\mathbf{A}\right)\right)$, where $\mathbf{A}$ is the matrix that satisfies the identical equation of $\mathbf{G}$ and $\mathrm{det}\left(\mathbf{A}\right)$ is the determinant of $\mathbf{A}$.
\end{theorem}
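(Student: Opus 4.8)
The plan is to move to the polynomial picture and reduce the whole statement to one standard fact about the $\mathbb{F}_q$-dimension of a quotient of $\mathbb{F}_q[x]^\ell$ by the $\mathbb{F}_q[x]$-submodule generated by the rows of a square nonsingular matrix. First, recall from the proof of Theorem \ref{MT_Corresp} that $\phi^{-1}\circ\pi$ induces an $\mathbb{F}_q$-linear isomorphism between $\mathcal{C}/M_\Lambda$ and $\mathcal{C}$ regarded as a $T_\Lambda$-invariant subspace of $\mathbb{F}_q^n$; here the left-hand $\mathcal{C}$ is the submodule of $\mathbb{F}_q[x]^\ell$ containing $M_\Lambda$. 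Hence $k=\dim_{\mathbb{F}_q}\mathcal{C}=\dim_{\mathbb{F}_q}\!\left(\mathcal{C}/M_\Lambda\right)$, and it suffices to compute this quotient dimension.

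The key auxiliary fact I would isolate is: if $\mathbf{M}$ is an $\ell\times\ell$ matrix over $\mathbb{F}_q[x]$ with $\det\mathbf{M}\neq 0$, and $N(\mathbf{M})$ denotes the $\mathbb{F}_q[x]$-submodule of $\mathbb{F}_q[x]^\ell$ generated by the rows of $\mathbf{M}$, then $\dim_{\mathbb{F}_q}\!\left(\mathbb{F}_q[x]^\ell/N(\mathbf{M})\right)=\deg\!\left(\det\mathbf{M}\right)$. I would prove this with the Smith normal form $\mathbf{M}=\mathbf{U}\,\mathrm{diag}(s_1,\ldots,s_\ell)\,\mathbf{V}$, where $\mathbf{U},\mathbf{V}$ are invertible over $\mathbb{F}_q[x]$: the quotient then becomes $\bigoplus_{j=1}^{\ell}\mathbb{F}_q[x]/\langle s_j\rangle$, of $\mathbb{F}_q$-dimension $\sum_{j=1}^{\ell}\deg s_j=\deg\!\left(\prod_{j=1}^{\ell}s_j\right)=\deg\det\mathbf{M}$, the unit factors $\det\mathbf{U}$ and $\det\mathbf{V}$ not affecting degrees. (Since $\mathbf{G}$ and $\mathbf{D}$ below are upper triangular, the instances actually needed can also be obtained directly via the division algorithm, reducing column by column.)

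With this in hand the rest is bookkeeping using $M_\Lambda\subseteq\mathcal{C}\subseteq\mathbb{F}_q[x]^\ell$ and the triangularity of $\mathbf{G}$ and $\mathbf{D}$. As $\mathbf{G}$ is a GPM, its rows form an $\mathbb{F}_q[x]$-basis of $\mathcal{C}$, so $\mathbf{G}$ is nonsingular, $\mathcal{C}=N(\mathbf{G})$, and the fact gives $\dim_{\mathbb{F}_q}\!\left(\mathbb{F}_q[x]^\ell/\mathcal{C}\right)=\deg\det\mathbf{G}=\sum_{j=1}^{\ell}\deg g_{j,j}$ by upper triangularity; likewise $M_\Lambda=N(\mathbf{D})$ gives $\dim_{\mathbb{F}_q}\!\left(\mathbb{F}_q[x]^\ell/M_\Lambda\right)=\deg\det\mathbf{D}=\sum_{j=1}^{\ell}m_j=n$. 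Subtracting,
\[
k=\dim_{\mathbb{F}_q}\!\left(\mathcal{C}/M_\Lambda\right)=n-\sum_{j=1}^{\ell}\deg g_{j,j}=\sum_{j=1}^{\ell}\bigl(m_j-\deg g_{j,j}\bigr).
\]
For the equivalent expression, taking determinants in the identical equation \eqref{identical_eq} gives $\deg\det\mathbf{A}=\deg\det\mathbf{D}-\deg\det\mathbf{G}=n-\sum_{j=1}^{\ell}\deg g_{j,j}=k$; alternatively, the $\mathbb{F}_q[x]$-isomorphism $\mathbb{F}_q[x]^\ell\to\mathcal{C}$ sending $e_i\mapsto\mathrm{row}_i(\mathbf{G})$ carries $M_\Lambda$ onto $N(\mathbf{A})$, so $\mathcal{C}/M_\Lambda\cong\mathbb{F}_q[x]^\ell/N(\mathbf{A})$ and the auxiliary fact applies directly to $\mathbf{A}$. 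The only nontrivial ingredient is the Smith normal form computation; I do not expect a genuine obstacle, consistent with the remark that this mirrors Corollary 3.1 of \cite{Sharma}.
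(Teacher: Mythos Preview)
Your proof is correct. The paper does not actually supply its own proof of this theorem; it only remarks that the result ``can be proven in a similar way to Corollary 3.1 in \cite{Sharma}''. Your argument---computing $\dim_{\mathbb{F}_q}\bigl(\mathbb{F}_q[x]^\ell/N(\mathbf{M})\bigr)=\deg\det\mathbf{M}$ via the Smith normal form and then using the tower $M_\Lambda\subseteq\mathcal{C}\subseteq\mathbb{F}_q[x]^\ell$---is a standard and clean route to this kind of dimension formula, and both claimed expressions for $k$ follow exactly as you indicate.
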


\begin{example}
\label{ex_Sh2}
Let $\mathcal{C}$ be the $(2,1)$-MT code over $\mathbb{F}_3$ of index $\ell=2$, block lengths $\left(m_1,m_2\right)=\left(20,40\right)$, and the reduced GPM
\begin{equation*}
\mathbf{G}=\begin{pmatrix}
g_{1,1} & g_{1,2}\\
0 & x^{40}+2
\end{pmatrix}
\end{equation*}
where $g_{1,1}= 2+ x+2 x^2+ x^3+ x^4+2 x^5+ x^7+ x^9+2 x^{10}+ x^{11}+2 x^{13}+x^{14}$ and $g_{1,2}= x+ x^4+ x^5+ x^7+2 x^9+2 x^{11}+2 x^{12}+ x^{13}+ x^{14}+ x^{16}+ x^{17}+2 x^{19}+2 x^{21}+2 x^{24}+2 x^{25}+2 x^{27}+ x^{29}+ x^{31}+ x^{32}+2 x^{33}+2 x^{34}+2 x^{36}+2 x^{37}+x^{39}$. The matrix that satisfies the identical equation of $\mathbf{G}$ is
\begin{equation*}
\mathbf{A}=\begin{pmatrix}
2+2 x + x^4 + x^5+x^6 & \quad 2x(1+x)^4\\
0 & 1
\end{pmatrix}.
\end{equation*}
From Theorem \ref{dim_MT}, $\mathcal{C}$ has dimension $k=(20-14)+(40-40)=6$.
\end{example}

\begin{definition}
An $\ell$-GQC code over $\mathbb{F}_q$ of block lengths $(m_1,m_2,\ldots,m_\ell)$ is a MT code of index $\ell$, block lengths $\left(m_1,m_2,\ldots,m_\ell\right)$, and shift constants $\lambda_j=1$ for $1\le j\le \ell$.
\end{definition}
From Theorem \ref{MT_Corresp}, an $\ell$-GQC code can be thought of as: 
\begin{enumerate}
\item An $\mathbb{F}_q[x]$-submodule of $\left(\mathbb{F}_q[x]\right)^\ell$ that contains $\oplus_{j=1}^{\ell} \left(\left(x^{m_j}-1\right)\mathbb{F}_q[x]\right)$.
\item An invariant $\mathbb{F}_q$-subspace of $\mathbb{F}_q^{n}$, where $n=\sum_{j=1}^\ell m_j$, under the automorphism
\begin{equation*}
\begin{split}
&\left( a_{0,1},\ldots, a_{m_1-1,1},a_{0,2},\ldots, a_{m_2-1,2},\ldots, a_{0,\ell},\ldots, a_{m_\ell-1,\ell}\right)\mapsto\\
&\left( a_{m_1-1,1},  a_{0,1},  \ldots,   a_{m_1-2,1},   a_{m_2-1,2},  a_{0,2},  \ldots ,  a_{m_2-2,2},  \ldots, a_{m_\ell-1,\ell},  a_{0,\ell},  \ldots,  a_{m_\ell-2,\ell} \right).
\end{split}
\end{equation*} 
\item An $\mathbb{F}_q[x]$-submodule of $\oplus_{j=1}^\ell \mathscr{R}_{m_j,1}$, where $\mathscr{R}_{m_j,1}=\mathbb{F}_q[x]/\langle x^{m_j}-1\rangle$.
\end{enumerate}

\section{Euclidean Duals of MT codes}
\label{Euclidean_dual}
In this section, we focus on discussing the Euclidean duals of MT codes. Unless otherwise stated in this section, let $\mathcal{C}$ denote a $\Lambda$-MT code over $\mathbb{F}_q$ of index $\ell$ and block lengths $\left(m_1,m_2,\ldots,m_\ell\right)$, where $\Lambda=\left(\lambda_1,\lambda_2,\ldots,\lambda_\ell \right)$ while $0\ne\lambda_j\in\mathbb{F}_q$ and $m_i$ is a positive integer for $1\le j\le\ell$. We also let $\mathbf{G}$ be a GPM for $\mathcal{C}$, and we denote the matrix that satisfies the identical equation of $\mathbf{G}$ by $\mathbf{A}$. In the following result, we prove that the Euclidean dual $\mathcal{C}^\perp$ of $\mathcal{C}$ is not only linear, but also MT with the same block lengths but possibly different shift constants. However, the main result of this section is to derive a formula for a GPM of $\mathcal{C}^\perp$. This will be achieved with the aid of the identical equation of $\mathbf{G}$.

\begin{theorem}
\label{MT_dual}
The Euclidean dual $\mathcal{C}^\perp$ of $\mathcal{C}$ is $\Delta$-MT of block lengths $\left(m_1,m_2,\ldots,m_\ell\right)$, where $\Delta=\left(\frac{1}{\lambda_1},\frac{1}{\lambda_2},\ldots,\frac{1}{\lambda_\ell} \right)$.\end{theorem}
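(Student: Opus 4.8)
The plan is to lean on the characterization of MT codes from Theorem \ref{MT_Corresp}: saying that $\mathcal{C}$ is $\Lambda$-MT of index $\ell$ and block lengths $\left(m_1,\ldots,m_\ell\right)$ means precisely that $\mathcal{C}$ is a $T_{\Lambda}$-invariant $\mathbb{F}_q$-subspace of $\mathbb{F}_q^{n}$, where $n=m_1+\cdots+m_\ell$. Since $\mathcal{C}^\perp$ is automatically an $\mathbb{F}_q$-subspace of $\mathbb{F}_q^{n}$ of dimension $n-\dim\mathcal{C}$, it suffices to prove that $\mathcal{C}^\perp$ is invariant under $T_{\Delta}$, the automorphism of $\mathbb{F}_q^{n}$ attached (via \eqref{T_ell_Lambda}) to the reciprocal constants $\Delta=\left(\tfrac{1}{\lambda_1},\ldots,\tfrac{1}{\lambda_\ell}\right)$, which is well defined because each $\lambda_j\neq 0$. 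Once that invariance is established, Theorem \ref{MT_Corresp} identifies $\mathcal{C}^\perp$ as a $\Delta$-MT code of the same index and the same block lengths $\left(m_1,\ldots,m_\ell\right)$.

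The first step I would carry out is the adjointness identity
\[
\langle T_{\Delta}(\mathbf{a}),\, T_{\Lambda}(\mathbf{b})\rangle=\langle \mathbf{a},\mathbf{b}\rangle \qquad \text{for all } \mathbf{a},\mathbf{b}\in\mathbb{F}_q^{n}.
\]
Both $T_{\Lambda}$ and $T_{\Delta}$ act block by block on the decomposition of $\mathbb{F}_q^{n}$ into blocks of lengths $m_1,\ldots,m_\ell$, and the Euclidean inner product splits as a sum over these $\ell$ blocks, so the identity reduces to a single-block verification. On a block of length $m$ with constant $\lambda$, the corresponding summand of $\langle T_{\Delta}(\mathbf{a}),T_{\Lambda}(\mathbf{b})\rangle$ is $\lambda^{-1}a_{m-1}\cdot\lambda b_{m-1}+\sum_{i=0}^{m-2}a_i b_i=\sum_{i=0}^{m-1}a_i b_i$, which is exactly the corresponding summand of $\langle\mathbf{a},\mathbf{b}\rangle$; summing over the $\ell$ blocks gives the claimed equality.

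Next I would use that $T_{\Lambda}$ is a bijection of $\mathbb{F}_q^{n}$ which maps the finite-dimensional subspace $\mathcal{C}$ into itself, hence restricts to a bijection $\mathcal{C}\to\mathcal{C}$. Fix $\mathbf{a}\in\mathcal{C}^\perp$ and an arbitrary $\mathbf{c}\in\mathcal{C}$, and write $\mathbf{c}=T_{\Lambda}(\mathbf{b})$ with $\mathbf{b}\in\mathcal{C}$. Then
\[
\langle T_{\Delta}(\mathbf{a}),\mathbf{c}\rangle=\langle T_{\Delta}(\mathbf{a}),T_{\Lambda}(\mathbf{b})\rangle=\langle\mathbf{a},\mathbf{b}\rangle=0,
\]
the last equality because $\mathbf{b}\in\mathcal{C}$ and $\mathbf{a}\in\mathcal{C}^\perp$. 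Since $\mathbf{c}\in\mathcal{C}$ was arbitrary, $T_{\Delta}(\mathbf{a})\in\mathcal{C}^\perp$, so $\mathcal{C}^\perp$ is $T_{\Delta}$-invariant, and Theorem \ref{MT_Corresp} yields the statement.

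The computation is entirely routine, so there is no real obstacle; the one point that needs care is the twisted-shift bookkeeping, namely checking that it is precisely the reciprocal constants $1/\lambda_j$ (and not $\lambda_j$ themselves) that cancel the factor $\lambda_j$ produced by $T_{\Lambda}$ on the wrap-around coordinate of each block. This is the mechanism by which the shift constants of the Euclidean dual come out inverted, and it is worth isolating the single-block identity above so that this cancellation is transparent.
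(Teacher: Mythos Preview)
Your proof is correct and follows essentially the same approach as the paper: both hinge on the adjointness identity $\langle T_{\Delta}(\mathbf{a}),T_{\Lambda}(\mathbf{b})\rangle=\langle\mathbf{a},\mathbf{b}\rangle$ and the surjectivity of $T_{\Lambda}$ on $\mathcal{C}$. The only cosmetic difference is that the paper obtains this surjectivity by computing the finite order $N$ of $T_{\Lambda}$ and writing $\mathbf{c}=T_{\Lambda}\bigl(T_{\Lambda}^{N-1}(\mathbf{c})\bigr)$, whereas you invoke the injective-implies-bijective fact for linear maps on finite-dimensional spaces; your version is slightly cleaner since it avoids introducing $N$.
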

\begin{proof}
{From Theorem \ref{MT_Corresp}, $\mathcal{C}$ is a $T_{\Lambda}$-invariant subspace of $\mathbb{F}_q^{n}$, where $n=\sum_{j=1}^\ell m_j$ and $T_{\Lambda}$ is the automorphism given by \eqref{T_ell_Lambda}. Let $N=\mathrm{lcm}\left(t_1 m_1,t_2 m_2,\ldots, t_\ell m_\ell \right)$, where $t_j$ is the multiplicative order of $\lambda_j$ for $1\le j\le \ell$. Observe that applying $T_{\Lambda}$ exactly $N$ times to any $\mathbf{a}\in\mathbb{F}_q^{n}$ keeps $\mathbf{a}$ unchanged. Thus $T_{\Lambda}^N$ is the identity map on $\mathbb{F}_q^{n}$. If we can show that $T_{\Delta}\left(\mathcal{C}^\perp\right)=\mathcal{C}^\perp$, then $\mathcal{C}^\perp$ is $\Delta$-MT. To do this, consider any $\mathbf{b}\in\mathcal{C}^\perp$ and $\mathbf{c}\in\mathcal{C}$. Then
\begin{equation*}
\langle\mathbf{c}, T_{\Delta}\left(\mathbf{b}\right)\rangle
=\langle T_{\Lambda}^N\left(\mathbf{c}\right), T_{\Delta}\left(\mathbf{b}\right)\rangle
=\langle T_{\Lambda}\circ T_{\Lambda}^{N-1}\left(\mathbf{c}\right), T_{\Delta}\left(\mathbf{b}\right)\rangle
=\langle T_{\Lambda}^{N-1}\left(\mathbf{c}\right), \mathbf{b}\rangle
=0
\end{equation*}
because $T_{\Lambda}^{N-1}\left(\mathbf{c}\right)\in\mathcal{C}$. Then, $T_{\Delta}\left(\mathbf{b}\right)\in\mathcal{C}^\perp$ and $T_{\Delta}\left(\mathcal{C}^\perp\right)\subseteq\mathcal{C}^\perp$. Equality holds since $T_{\Delta}$ is a vector space automorphism.}
\end{proof}

Now we define some matrices that are jointly related to the matrix that satisfies the identical equation of the reduced GPM of $\mathcal{C}$.
\begin{definition}
\label{def_parity}
For a MT code $\mathcal{C}$, let $\mathbf{G}=[g_{i,j}]$ be the reduced GPM of $\mathcal{C}$ and let $\mathbf{A}=[a_{i,j}]$ be the matrix that satisfies the identical equation of $\mathbf{G}$. For $1\le j\le \ell$, denote the degree of $\deg\left(g_{j,j}\right)$ by $d_j$, i.e., $d_j=\deg\left(g_{j,j}\right)$ . 
\begin{enumerate}
\item Let $\mathbf{A}\left( \frac{1}{x}\right)$ be the matrix obtained from $\mathbf{A}$ when $x$ is replaced by $\frac{1}{x}$.
\item Let $\mathbf{A}^*$ be the matrix obtained after multiplying the $(i,j)$-th entry of $\mathbf{A}\left( \frac{1}{x}\right)$ by $x^{m_i-d_j}$.
\item (Eliminate the negative exponents in $\mathbf{A}^*$) Let $\mathbf{A}^{**}$ be the matrix obtained from $\mathbf{A}^*$ by reducing the $(i,j)$-th entry (for $i<j$) of $\mathbf{A}^*$ modulo $\left(x^{m_i}-\frac{1}{\lambda_i}\right)$. Specifically, $x^{-\mu}$ is replaced by $\lambda_i x^{m_i-\mu}$ for $\mu\ge 1$.
\item Let $\mathbf{H}=\left(\mathbf{A}^{**}\right)^t$,  where $^t$ stands for matrix transpose.
\end{enumerate}
\end{definition}

From Theorem \ref{MT_dual}, $\mathcal{C}^\perp$ is $\Delta$-MT code, where $\Delta=\left(\frac{1}{\lambda_1},\frac{1}{\lambda_2},\ldots,\frac{1}{\lambda_\ell} \right)$. For $1\le h\le \ell$, let 
\begin{equation*}
\pi_h:\mathbb{F}_q[x]\rightarrow\mathscr{R}_{m_h,\frac{1}{\lambda_h}}=\mathbb{F}_q[x]/\langle x^{m_h}-\lambda_h^{-1}\rangle
\end{equation*}
be the projection homomorphism and let $\pi=\oplus_{h=1}^\ell \pi_h$. View $\mathbb{F}_q^{n}$ as an $\mathbb{F}_q[x]$-module by defining the action of $x$ as the action of $T_{\Delta}$. Define the $\mathbb{F}_q[x]$-module isomorphism $\phi$ by
\begin{gather*}
\mathbb{F}_q^n\rightarrow \oplus_{h=1}^\ell \mathscr{R}_{m_h,\frac{1}{\lambda_h}}\\
\left( b_{0,1},\ldots, b_{m_1-1,1},\ldots, b_{0,\ell},\ldots, b_{m_\ell-1,\ell}\right) \mapsto \left( b_1(x),b_2(x),\ldots,b_\ell(x)\right)
\end{gather*}
where $b_h (x)=b_{0,h}+b_{1,h}x+\cdots+ b_{m_h-1,h}x^{m_h-1}$ for $1\le h\le \ell$. Similar to \eqref{Commut_diagram2}, we construct the commutative diagram
\begin{equation*}
\begin{tikzcd}[swap]
    \mathbb{F}_q^{n} \arrow{r}[swap]{\phi}{} \arrow{d}{T_{\Delta}}  & \oplus_{h=1}^\ell \mathscr{R}_{m_h,\frac{1}{\lambda_h}} \arrow{d}[swap]{\psi}{} & \left(\mathbb{F}_q[x]\right)^\ell \arrow{l}[swap]{}{\pi} \arrow{d}[swap]{}{\psi}\\  
    \mathbb{F}_q^{n} \arrow{r}[swap]{}{\phi}   & \oplus_{h=1}^\ell \mathscr{R}_{m_h,\frac{1}{\lambda_h}} & \left(\mathbb{F}_q[x]\right)^\ell \arrow{l}[swap]{\pi}{}
\end{tikzcd}
\end{equation*}
where $\psi:  \left( b_1(x), b_2(x), \ldots, b_\ell(x)\right) \mapsto \left(x b_1(x),x b_2(x), \ldots,x b_\ell(x)\right)$.

Let us fix a positive integer $j\le \ell$ and argue as in Definition \ref{def_parity}. Suppose that the $j^{\mathrm{th}}$ column of $\mathbf{A}$ is
\begin{equation*}
\left(a_{1,j}(x), a_{2,j}(x), \ldots, a_{\ell,j}(x) \right)^t
\end{equation*}
where $\deg\left(a_{h,j}(x)\right) < \deg\left(a_{h,h}(x)\right) \le m_h$ for $h< j$, $a_{h,j}(x)=0$ for $h>j$, and $\deg\left(a_{j,j}(x)\right) < m_j$ or $a_{j,j}(x)=x^{m_j}-\lambda_j$. 
Then the $j^{\mathrm{th}}$ column of $\mathbf{A}^*$ is 
\begin{equation*}
\left(x^{m_1-d_j}a_{1,j}\left(\frac{1}{x}\right), x^{m_2-d_j}a_{2,j}\left(\frac{1}{x}\right), \ldots, x^{m_\ell-d_j}a_{\ell,j}\left(\frac{1}{x}\right) \right)^t.
\end{equation*}
The $j^{\mathrm{th}}$ row of $\mathbf{H}$ is the $j^{\mathrm{th}}$ column of $\mathbf{A}^{**}$ and it satisfies  
\begin{equation*}
\pi\left( \mathrm{row}_j\left(\mathbf{H}\right)\right)=\pi\left( \left(\mathrm{column}_j\left(\mathbf{A}^*\right)\right)^t\right).
\end{equation*}
Let $\mathbf{a}_j=T_{\Delta}^{d_j-1}\left(\phi^{-1}\left( \pi\left( \mathrm{row}_j\left( \mathbf{H}\right)\right)\right)\right)$. Then
\begin{equation}
\label{word_a}
\mathbf{a}_j=\left(a_{m_1-1,1,j},\ldots,a_{0,1,j},a_{m_2-1,2,j},\ldots,a_{0,2,j},\ldots,a_{m_\ell-1,\ell,j},\ldots,a_{0,\ell,j} \right)
\end{equation}
where $\pi_h\left(a_{h,j}(x)\right)=a_{0,h,j}+a_{1,h,j}x +\cdots+a_{m_h-2,h,j}x^{m_h-2}+a_{m_h-1,h,j}x^{m_h-1}$ for $1\le h\le \ell$. Since $\mathcal{C}^\perp$ is $T_{\Delta}$-invariant, $\mathrm{row}_j\left( \mathbf{H}\right)\in\mathcal{C}^\perp$ if and only if $\mathbf{a}_j$ gives zero inner product with each codeword in $\mathcal{C}$ and that is actually what we will show in the next result.

\begin{lemma}
\label{lemma1}
For any positive integer $j\le \ell$, $\mathrm{row}_j\left( \mathbf{H}\right)\in\mathcal{C}^\perp$.
\end{lemma}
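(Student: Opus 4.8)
The plan is to verify directly that $\langle \mathbf{a}_j, \mathbf{c}\rangle = 0$ for every $\mathbf{c}\in\mathcal{C}$, exploiting that $\mathcal{C}$ is generated as an $\mathbb{F}_q[x]$-module (equivalently as a $T_\Lambda$-invariant subspace) by the rows of the reduced GPM $\mathbf{G}$, together with the $T_\Lambda$-orbit closure. Since the Euclidean inner product on $\mathbb{F}_q^n$ splits as a sum over the $\ell$ blocks, it suffices to understand, block by block, the pairing between the polynomial $a_{h,j}(x)$ (appearing reversed/shifted in $\mathrm{row}_j(\mathbf{H})$) and the entry $g_{i,h}(x)$ of a generating row of $\mathbf{G}$. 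First I would set up the standard dictionary: for polynomials $u(x)=\sum u_k x^k$ and $v(x)=\sum v_k x^k$ of degree $<m$, the Euclidean inner product of the coefficient vectors of $u$ and the $(m-1)$-reversal of $v$ equals the coefficient of $x^{m-1}$ in $u(x)v(x)$; more generally all the relevant inner products arising from $T_\Lambda$-shifted versions are captured by reading off a single coefficient of a product $u(x)v(x) \bmod (x^{m_h}-\lambda_h)$. This is exactly why the construction of $\mathbf{A}^*$ (substitute $x\mapsto 1/x$, multiply the $(i,j)$ entry by $x^{m_i-d_j}$) and $\mathbf{A}^{**}$ (reduce mod $x^{m_i}-\lambda_i^{-1}$) is the right thing: it repackages the reversal-and-shift so that orthogonality of $\mathrm{row}_j(\mathbf{H})$ against all $T_\Lambda$-shifts of $\mathrm{row}_i(\mathbf{G})$ becomes the vanishing of prescribed coefficients of the entries of the matrix product.

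Concretely, the key computation is to show that for each generating row index $i$ the scalar
\[
\sum_{h=1}^\ell \big(\text{contribution of block } h\big)
\]
is zero, and I expect this to reduce to the statement that a certain entry of $\mathbf{G}\cdot\mathbf{A}^{*}$ (or equivalently $\mathbf{G}(x)\,\mathbf{A}(1/x)^{t}$ up to monomial scaling) has vanishing coefficient in the relevant degree. The identical equation $\mathbf{A}\mathbf{G}=\mathbf{D}=\mathrm{diag}(x^{m_i}-\lambda_i)$ is the engine here: transposing and substituting $x\mapsto 1/x$ turns it into a relation $\mathbf{G}(1/x)^t\,\mathbf{A}(1/x)^t = \mathbf{D}(1/x)$, and clearing denominators with the monomial factors $x^{m_i-d_j}$ that define $\mathbf{A}^*$ converts the right-hand side $x^{-m_i}-\lambda_i$ into something whose only nonzero coefficients sit in degrees $0$ and $m_i$ — precisely the degrees that do \emph{not} appear when we extract the "$x^{m_h-1}$-coefficient mod $x^{m_h}-\lambda_h$" that computes the inner product. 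So the off-diagonal structure of $\mathbf{D}$ (it is diagonal) forces all the cross-block inner products with the wrong generating row to cancel, and the diagonal entries $x^{m_i}-\lambda_i$ contribute only in degrees outside the window that matters.

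I would carry this out in the following order: (i) write $\mathbf{c}\in\mathcal{C}$ as an $\mathbb{F}_q$-linear combination of $T_\Lambda$-shifts $T_\Lambda^{s}(\mathrm{row}_i(\mathbf{G}))$, so that it is enough to prove $\langle \mathbf{a}_j, T_\Lambda^{s}(\mathrm{row}_i(\mathbf{G}))\rangle = 0$ for all $i$ and all $0\le s$; (ii) using $T_\Delta$-invariance of the target (already noted before the lemma) reduce to $s=0$, i.e.\ to $\langle \mathbf{a}_j, \mathrm{row}_i(\mathbf{G})\rangle$ or a single controlled shift; (iii) translate this inner product, via the reversal/coefficient-extraction dictionary, into "the coefficient of a fixed monomial $x^{N_{i,j}}$ in the $(i,j)$ entry of $\mathbf{G}(x)\mathbf{A}^{*}(x)$ (reduced mod the appropriate $x^{m}-\lambda$)"; (iv) invoke the identical equation $\mathbf{A}\mathbf{G}=\mathbf{D}$ (transposed, with $x\mapsto 1/x$, scaled by the defining monomials of $\mathbf{A}^*$ and $\mathbf{A}^{**}$) to identify that entry with an entry of a diagonalized Laurent polynomial whose support misses $x^{N_{i,j}}$; (v) conclude vanishing. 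The main obstacle will be step (iii)–(iv): tracking the exponent bookkeeping through the three successive modifications $\mathbf{A}\mapsto\mathbf{A}(1/x)\mapsto\mathbf{A}^*\mapsto\mathbf{A}^{**}$ and the $T_\Delta^{d_j-1}$ twist in the definition of $\mathbf{a}_j$, and checking that the degree window where the inner product "lives" is exactly complementary to the support $\{0,m_i\}$ of the scaled diagonal — in particular verifying that the degree bounds $\deg(a_{h,j})<\deg(a_{h,h})\le m_h$ and $d_j=\deg(g_{j,j})$ make every exponent land where claimed with no overlap or wraparound ambiguity. Everything else is routine linear algebra over $\mathbb{F}_q[x]$.
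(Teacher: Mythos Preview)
Your outline has a genuine gap at step (iv), and it is not the exponent bookkeeping you flag as the ``main obstacle.'' The inner product you must kill is, block by block, the coefficient of $x^{m_h-1}$ in $g_{i,h}(x)\,a_{h,j}(x)$, so the relevant matrix product is $\mathbf{G}\mathbf{A}$ (in that order, up to the monomial scalings defining $\mathbf{A}^{*}$). Your parenthetical ``equivalently $\mathbf{G}(x)\,\mathbf{A}(1/x)^{t}$'' is not equivalent: $\mathbf{A}^{*}$ is a monomial rescaling of $\mathbf{A}(1/x)$, \emph{not} of its transpose. Consequently the transposed identical equation $\mathbf{G}(1/x)^{t}\mathbf{A}(1/x)^{t}=\mathbf{D}(1/x)$ you propose to invoke tells you about the entries $\sum_h g_{h,i}\,a_{j,h}$ (i.e.\ about $\mathbf{A}\mathbf{G}$ again), not about the entries $\sum_h g_{i,h}\,a_{h,j}$ that your coefficient-extraction dictionary actually produces. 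In a genuine MT situation (unequal $m_j$ or unequal $\lambda_j$) the diagonal matrix $\mathbf{D}$ is not scalar, so $\mathbf{A}$ and $\mathbf{G}$ do \emph{not} commute, and one cannot pass from $\mathbf{A}\mathbf{G}=\mathbf{D}$ to any usable statement about $\mathbf{G}\mathbf{A}$ by transposing or substituting $x\mapsto 1/x$ alone. A $2\times 2$ upper-triangular computation already shows the off-diagonal entry of $\mathbf{G}\mathbf{A}$ differs from that of $\mathbf{D}$ by a term proportional to $(x^{m_1}-\lambda_1)-(x^{m_2}-\lambda_2)$.

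The missing idea, which the paper supplies, is to force commutativity by lifting to a common scalar modulus: set $N=\mathrm{lcm}(t_1 m_1,\ldots,t_\ell m_\ell)$ and $\mathbf{T}=\mathrm{diag}\bigl((x^{N}-1)/(x^{m_h}-\lambda_h)\bigr)$, so that $\mathbf{A}(\mathbf{G}\mathbf{T})=(x^{N}-1)\mathbf{I}_\ell$ is scalar; then the Theorem~\ref{A_G_Commutativity} argument gives $(\mathbf{G}\mathbf{T})\mathbf{A}=(x^{N}-1)\mathbf{I}_\ell$, and now the $(i,j)$-entry $\sum_h g_{i,h}\,\frac{x^{N}-1}{x^{m_h}-\lambda_h}\,a_{h,j}\equiv 0\pmod{x^{N}-1}$ is exactly the relation needed. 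The factor $(x^{N}-1)/(x^{m_h}-\lambda_h)$ simultaneously aligns the disparate extraction degrees $m_h-1$ to a single degree $N-1$. Without this lifting, your step (iv) does not go through.
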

\begin{proof}
 For $1\le h\le \ell$, let $t_h$ be the multiplicative order of $\lambda_h$, let $N=\mathrm{lcm}\left(t_1m_1,t_2m_2,\ldots,t_\ell m_\ell\right)$, and let $N_h=N/( m_h t_h)$. Then
\begin{align*}
\frac{x^N-1}{ x^{m_h}-\lambda_h}=\frac{x^{m_h t_h N_h}-\lambda_h^{t_h N_h}}{ x^{m_h}-\lambda_h}=x^{N-m_h}+\lambda_h x^{N-2m_h}+\lambda_h^2 x^{N-3m_h}+\cdots +\lambda_h^{t_h N_h-2}x^{m_h}+\lambda_h^{t_h N_h-1}.
\end{align*}
Let $\mathbf{T}=\mathrm{diag}\left[\frac{x^N-1}{x^{m_1}-\lambda_1}, \ldots, \frac{x^N-1}{x^{m_\ell}-\lambda_\ell}\right]$. From \eqref{identical_eq}, 
\begin{equation*}
\mathbf{A}\left(\mathbf{G}\mathbf{T}\right)=\mathbf{A}\mathbf{G}\mathbf{T}=\mathbf{D}\mathbf{T}=(x^N-1)\mathbf{I}_\ell.
\end{equation*}
Applying the same argument in the proof of Theorem \ref{A_G_Commutativity}, we get 
\begin{equation}
\label{In_proof_Lemma1}
\left(\mathbf{G}\mathbf{T} \right)\mathbf{A}=(x^N-1)\mathbf{I}_\ell.
\end{equation}
Reducing the $(i,j)$-th entry of \eqref{In_proof_Lemma1} modulo $\left(x^N-1\right)$ leads to
\begin{equation}
\label{inproof_lem1}
\begin{split}
&\sum_{h=1}^{\ell} g_{i,h}\left( \frac{x^N-1}{x^{m_h}-\lambda_h} \right) a_{h,j}\\
&=\sum_{h=1}^{\ell}  \left( x^{N-m_h}+\lambda_h x^{N-2m_h}+\lambda_h^2 x^{N-3m_h}+\cdots +\lambda_h^{t_h N_h-2}x^{m_h}+\lambda_h^{t_h N_h-1} \right) g_{i,h}a_{h,j}\\
&\equiv 0 \pmod{x^N-1}.
\end{split}
\end{equation} 
In fact, $\mathbf{G}$ is the reduced GPM of $\mathcal{C}$. Then $g_{i,h}=0$ for $h<i$, $\deg\left(g_{i,h} \right)<m_h$ for any $h>i$, and $\deg\left(g_{i,i} \right)<m_i$ or $g_{i,i}=x^{m_i}-\lambda_i$. If $g_{i,i}=x^{m_i}-\lambda_i$, then $g_{i,i}\left( \frac{x^N-1}{x^{m_i}-\lambda_i} \right) a_{i,j}\equiv 0\pmod{x^N-1}$. Thus, in all cases, $g_{i,h}$ in \eqref{inproof_lem1} can be replaced by 
\begin{equation*}
\pi_h\left( g_{i,h}\right)=g_{0,i,h}+g_{1,i,h}x+\cdots +g_{m_h-2,i,h}x^{m_h-2}+g_{m_h-1,i,h}x^{m_h-1}. 
\end{equation*}
Similarly, $a_{h,j}=0$ for $h>j$, $\deg\left(a_{h,j} \right)<m_h$ for $h<j$, and $\deg\left(a_{j,j} \right)<m_j$ or $a_{j,j}=x^{m_j}-\lambda_j$. If $a_{j,j}=x^{m_j}-\lambda_j$, then $g_{i,j}\left( \frac{x^N-1}{x^{m_j}-\lambda_j} \right) a_{j,j}\equiv 0 \pmod{x^N-1}$. Thus, in all cases, $a_{h,j}$ in \eqref{inproof_lem1} can be replaced by 
\begin{equation*}
\pi_h\left( a_{h,j}\right)=a_{0,h,j}+a_{1,h,j}x+\cdots +a_{m_h-2,h,j}x^{m_h-2}+a_{m_h-1,h,j}x^{m_h-1}. 
\end{equation*}
Then
\begin{equation}
\label{inproof_lem1_2}
\begin{split}
&\sum_{h=1}^{\ell} \left(x^{N-m_h}+\lambda_h x^{N-2m_h}+\lambda_h^2 x^{N-3m_h}+\cdots +\lambda_h^{t_h N_h-2}x^{m_h}+\lambda_h^{t_h N_h-1}\right)\\
&\qquad \left( g_{0,i,h}+g_{1,i,h}x+\cdots +g_{m_h-2,i,h}x^{m_h-2}+g_{m_h-1,i,h}x^{m_h-1}\right)\\
&\qquad\left( a_{0,h,j}+a_{1,h,j}x+\cdots +a_{m_h-2,h,j}x^{m_h-2}+a_{m_h-1,h,j}x^{m_h-1}\right)\\
&\equiv 0 \pmod{x^N-1}.
\end{split}
\end{equation} 
For any integer $0\le \nu\le N-1$, the sum of the coefficients of $x^{N-\nu-1}$ and $x^{2N-\nu-1}$ in \eqref{inproof_lem1_2} is zero. What this shows is that the inner product of $\mathbf{a}_j$ (see Equation \eqref{word_a}) and $T_{\Lambda}^{\nu}\left(\phi^{-1}\left( \pi\left( \mathrm{row}_i\left( \mathbf{G}\right)\right)\right)\right)$ (for any $1\le i\le \ell$ and $0\le \nu\le N-1$) is zero. Thus, $\mathrm{row}_j\left( \mathbf{H}\right)\in\mathcal{C}^\perp$ because of our discussion before the lemma.
\end{proof}

\begin{lemma}
\label{lemma2}
The matrix $\mathbf{H}$ is a GPM of a $\Delta$-MT code.
\end{lemma}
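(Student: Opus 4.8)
The plan is to prove the slightly stronger statement that $\mathbf{H}$ is a GPM of $\mathcal{C}^{\perp}$ itself; since $\mathcal{C}^{\perp}$ is $\Delta$-MT by Theorem \ref{MT_dual}, this yields the lemma. By Lemma \ref{lemma1} every row of $\mathbf{H}$ lies in $\mathcal{C}^{\perp}$. Fix the reduced GPM $\mathbf{G}^{\perp}$ of $\mathcal{C}^{\perp}$, which is $\ell\times\ell$, upper triangular, with nonzero monic diagonal. Each row of $\mathbf{H}$ is an $\mathbb{F}_q[x]$-linear combination of the rows of $\mathbf{G}^{\perp}$, so $\mathbf{H}=\mathbf{Y}\mathbf{G}^{\perp}$ for some $\ell\times\ell$ matrix $\mathbf{Y}$ over $\mathbb{F}_q[x]$. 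It then suffices to show that $\mathbf{Y}$ is invertible over $\mathbb{F}_q[x]$: in that case $\mathbf{G}^{\perp}=\mathbf{Y}^{-1}\mathbf{H}$ as well, so the rows of $\mathbf{H}$ generate exactly the $\mathbb{F}_q[x]$-module $\mathcal{C}^{\perp}$, and $\mathbf{H}$ is one of its GPMs.

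Invertibility of $\mathbf{Y}$ will come from a degree count applied to $\det\mathbf{H}=\det\mathbf{Y}\cdot\det\mathbf{G}^{\perp}$. First I would compute $\det\mathbf{H}$. Since $\mathbf{A}=[a_{i,j}]$ is upper triangular, so are $\mathbf{A}(\tfrac1x)$, $\mathbf{A}^{*}$ and $\mathbf{A}^{**}$; hence $\mathbf{H}=(\mathbf{A}^{**})^{t}$ is lower triangular and $\det\mathbf{H}=\prod_{i=1}^{\ell}\mathbf{A}^{**}_{i,i}$, where the diagonal is untouched by the reduction step in Definition \ref{def_parity}, so $\mathbf{A}^{**}_{i,i}=x^{m_i-d_i}a_{i,i}(\tfrac1x)$ with $a_{i,i}=(x^{m_i}-\lambda_i)/g_{i,i}$. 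Because $g_{i,i}$ is a monic divisor of $x^{m_i}-\lambda_i$, a polynomial with nonzero constant term, both $g_{i,i}$ and $a_{i,i}$ have nonzero constant term, so $x^{m_i-d_i}a_{i,i}(\tfrac1x)$ is a genuine polynomial of degree exactly $m_i-d_i$. Thus $\det\mathbf{H}\ne 0$ and $\deg\det\mathbf{H}=\sum_{i=1}^{\ell}(m_i-d_i)=\dim_{\mathbb{F}_q}\mathcal{C}$ by Theorem \ref{dim_MT}. On the other hand $\deg\det\mathbf{G}^{\perp}=\sum_{i=1}^{\ell}\deg g^{\perp}_{i,i}=n-\dim_{\mathbb{F}_q}\mathcal{C}^{\perp}=n-(n-\dim_{\mathbb{F}_q}\mathcal{C})=\dim_{\mathbb{F}_q}\mathcal{C}$, again by Theorem \ref{dim_MT} together with $\dim\mathcal{C}^{\perp}=n-\dim\mathcal{C}$. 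Comparing degrees forces $\deg\det\mathbf{Y}=0$, and $\det\mathbf{Y}\ne 0$ since $\det\mathbf{H}\ne 0$; hence $\det\mathbf{Y}\in\mathbb{F}_q^{\times}$ and $\mathbf{Y}\in\mathrm{GL}_{\ell}(\mathbb{F}_q[x])$, which completes the argument.

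The only delicate point is the determinant computation for $\mathbf{H}$: one must check that the three operations $x\mapsto\tfrac1x$, multiplication of the $(i,j)$-entry by $x^{m_i-d_j}$, and reduction of the $(i,j)$-entry modulo $x^{m_i}-\lambda_i^{-1}$ for $i<j$ leave the diagonal intact and produce diagonal entries of the predicted degree; as noted this hinges only on $g_{i,i}(0)\ne 0$, so there is no genuine obstacle. If desired, one can instead exhibit an explicit identical equation for $\mathbf{H}$: from $\mathbf{A}(\tfrac1x)\mathbf{G}(\tfrac1x)=\mathbf{D}(\tfrac1x)$ one verifies $\mathbf{A}^{*}\mathbf{C}=\mathbf{D}_{\Delta}$ with $\mathbf{C}_{h,j}=-\lambda_j^{-1}x^{d_h}g_{h,j}(\tfrac1x)$ and $\mathbf{D}_{\Delta}=\mathrm{diag}[x^{m_1}-\lambda_1^{-1},\dots,x^{m_\ell}-\lambda_\ell^{-1}]$, then transposes and absorbs the correction passing from $\mathbf{A}^{*}$ to $\mathbf{A}^{**}$, which is supported on strictly-upper-triangular positions and hence nilpotent; but the determinant argument above is shorter and suffices for the lemma.
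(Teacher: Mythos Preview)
Your argument is correct, and it actually establishes more than the lemma asks: by combining Lemma~\ref{lemma1}, Theorem~\ref{MT_dual}, and the determinant count, you show directly that $\mathbf{H}$ is a GPM of $\mathcal{C}^{\perp}$ itself. In the paper's organisation this is split across Lemmas~\ref{lemma1}--\ref{lemma3}, with Theorem~\ref{MT_Parity_H} assembling the pieces; you have effectively collapsed Lemmas~\ref{lemma2}, \ref{lemma3} and Theorem~\ref{MT_Parity_H} into one step.

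The paper's own proof of Lemma~\ref{lemma2} is quite different in spirit. It makes no use of Lemma~\ref{lemma1} or of the fact that $\mathcal{C}^{\perp}$ is $\Delta$-MT; instead it works purely with the matrices, starting from $\mathbf{A}(\tfrac{1}{x})\mathbf{G}(\tfrac{1}{x})=\mathbf{D}(\tfrac{1}{x})$ over $\mathbb{F}_q[x,x^{-1}]$, rescaling by appropriate powers of $x$, and tracking the strictly upper-triangular correction $\mathbf{S}$ that passes from $\mathbf{A}^{*}$ to $\mathbf{A}^{**}$. This produces an explicit lower-triangular matrix $\mathbf{B}$ over $\mathbb{F}_q[x]$ with $\mathbf{B}\mathbf{H}=\mathrm{diag}[x^{m_1}-\lambda_1^{-1},\ldots,x^{m_\ell}-\lambda_\ell^{-1}]$, so the row module of $\mathbf{H}$ contains $M_{\Delta}$ and is therefore a $\Delta$-MT code. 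Only afterwards does the paper invoke the determinant count (Lemma~\ref{lemma3}) and combine it with Lemma~\ref{lemma1} to identify this code with $\mathcal{C}^{\perp}$.

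What each route buys: the paper's construction yields an explicit identical equation for $\mathbf{H}$ without appealing to the already-known structure of $\mathcal{C}^{\perp}$, and in particular does not rely on Lemma~\ref{lemma1}; your route is shorter and more conceptual, leveraging the existence of the reduced GPM $\mathbf{G}^{\perp}$ of $\mathcal{C}^{\perp}$ and a degree comparison to avoid any matrix manipulation beyond reading off the diagonal of $\mathbf{H}$. Your closing sketch of the explicit identical equation is essentially the paper's approach, though in the paper the correction term is handled by observing that the resulting matrix $\mathbf{U}$ has unit determinant in $\mathbb{F}_q[x,x^{-1}]$ rather than by a nilpotency argument.
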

\begin{proof}
Our aim is to prove that $\mathbf{B}\mathbf{H}=\mathrm{diag}\left[x^{m_1}-\frac{1}{\lambda_1},\ldots, x^{m_\ell}-\frac{1}{\lambda_\ell}\right]$ for some polynomial matrix $\mathbf{B}$. Replacing $x$ with $\frac{1}{x}$ in \eqref{identical_eq} gives $\mathbf{A}\left(\frac{1}{x}\right)\mathbf{G}\left(\frac{1}{x}\right)=\mathbf{D}\left(\frac{1}{x}\right)$ as matrices over the ring $\mathbb{F}_q\left[x,\frac{1}{x}\right]$. From Definition \ref{def_parity}, 
\begin{equation*}
\mathrm{diag}\left[x^{m_1},\ldots,x^{m_\ell}\right]\mathbf{A}\left(\frac{1}{x}\right)=\mathbf{A}^*\mathrm{diag}\left[x^{d_1},\ldots,x^{d_\ell}\right].
\end{equation*}
Thus,
\begin{align*}
\mathbf{A}^* \mathrm{diag}\left[x^{d_1},\ldots,x^{d_\ell}\right]\mathbf{G}\left(\frac{1}{x}\right)&=\mathrm{diag}\left[x^{m_1},\ldots,x^{m_\ell}\right]\mathbf{D}\left(\frac{1}{x}\right)\\
&=-\mathrm{diag}\left[x^{m_1}-\frac{1}{\lambda_1},\ldots, x^{m_\ell}-\frac{1}{\lambda_\ell}\right] \mathrm{diag}\left[\lambda_1,\ldots,\lambda_\ell\right].
\end{align*}
For $1\le j\le \ell$, $\deg(a_{j,j})=m_j-d_j$. Thus, the diagonal elements of $\mathbf{A}^*$ have no negative powers of $x$. Again from Definition \ref{def_parity}, there is a strictly upper triangular matrix $\mathbf{S}$ such that 
\begin{equation*}
\mathbf{A}^{**}=\mathbf{A}^*+\mathrm{diag}\left[x^{m_1}-\frac{1}{\lambda_1},\ldots,x^{m_\ell}-\frac{1}{\lambda_\ell}\right] \mathbf{S}.
\end{equation*} 
Therefore,
\begin{align*}
&\mathbf{A}^{**}\mathrm{diag}\left[x^{d_1},\ldots,x^{d_\ell}\right]\mathbf{G}\left(\frac{1}{x}\right)\\
&\qquad\qquad=-\mathrm{diag}\left[x^{m_1}-\frac{1}{\lambda_1},\ldots, x^{m_\ell}-\frac{1}{\lambda_\ell}\right] \mathrm{diag}\left[\lambda_1,\ldots,\lambda_\ell\right]\\
&\qquad\qquad\quad+ \mathrm{diag}\left[x^{m_1}-\frac{1}{\lambda_1},\ldots,x^{m_\ell}-\frac{1}{\lambda_\ell}\right] \mathbf{S}\ \mathrm{diag}\left[x^{d_1},\ldots,x^{d_\ell}\right]\mathbf{G}\left(\frac{1}{x}\right)\\
&\qquad\qquad=\mathrm{diag}\left[x^{m_1}-\frac{1}{\lambda_1},\ldots, x^{m_\ell}-\frac{1}{\lambda_\ell}\right]\mathbf{U}
\end{align*}
where 
\begin{equation*}
\mathbf{U}=-\mathrm{diag}\left[\lambda_1,\ldots,\lambda_\ell\right]+ \mathbf{S}\ \mathrm{diag}\left[x^{d_1},\ldots,x^{d_\ell}\right]\mathbf{G}\left(\frac{1}{x}\right).
\end{equation*}
Note that $\mathbf{U}$ is an upper triangular invertible matrix because its determinant is a unit in $\mathbb{F}_q\left[x,\frac{1}{x}\right]$. Then,
\begin{equation*}
\mathbf{A}^{**}\mathrm{diag}\left[x^{d_1},\ldots,x^{d_\ell}\right]\mathbf{G}\left(\frac{1}{x}\right) \mathbf{U}^{-1}=\mathrm{diag}\left[x^{m_1}-\frac{1}{\lambda_1},\ldots, x^{m_\ell}-\frac{1}{\lambda_\ell}\right]
\end{equation*}
and
\begin{equation*}
\left( \mathrm{diag}\left[x^{d_1},\ldots,x^{d_\ell}\right]\mathbf{G}\left(\frac{1}{x}\right) \mathbf{U}^{-1}\right)^t \mathbf{H} =\mathrm{diag}\left[x^{m_1}-\frac{1}{\lambda_1},\ldots, x^{m_\ell}-\frac{1}{\lambda_\ell}\right].
\end{equation*}
Let 
\begin{equation*}
\mathbf{B}=\left( \mathrm{diag}\left[x^{d_1},\ldots,x^{d_\ell}\right]\mathbf{G}\left(\frac{1}{x}\right) \mathbf{U}^{-1}\right)^t.
\end{equation*}
Then $\mathbf{B}$ is lower triangular such that 
\begin{equation*}
\mathbf{B}\mathbf{H}=\mathrm{diag}\left[x^{m_1}-\frac{1}{\lambda_1},\ldots, x^{m_\ell}-\frac{1}{\lambda_\ell}\right].
\end{equation*}
The diagonal elements of $\mathbf{B}$ and $\mathbf{H}$ are polynomials with nonzero constant terms, thus the entries of $\mathbf{B}$ are elements of $\mathbb{F}_q[x]$. Therefore, $\mathbf{H}$ is a GPM for some $\Delta$-MT code.
\end{proof}

So far, Lemmas \ref{lemma1} and \ref{lemma2} show that $\mathbf{H}$ is a GPM of a $\Delta$-MT subcode of $\mathcal{C}^\perp$. Now we apply the standard dimension argument to show that this subcode is $\mathcal{C}^\perp$.

\begin{lemma}
\label{lemma3}
The matrix $\mathbf{H}$ is a GPM of a $\Delta$-MT code of dimension $n-k$ as an $\mathbb{F}_q$-subspace of $\mathbb{F}_q^n$, where $n=\sum_{j=1}^\ell m_j$ and $k$ is the dimension of $\mathcal{C}$.
\end{lemma}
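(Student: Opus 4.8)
The plan is to avoid working directly with $\mathcal{C}^\perp$ and instead combine Lemma~\ref{lemma2} with the dimension formula of Theorem~\ref{dim_MT}, reducing everything to one determinant–degree computation. By Lemma~\ref{lemma2}, $\mathbf{H}$ is a GPM of some $\Delta$-MT code $\mathcal{D}$, so it remains only to show $\dim_{\mathbb{F}_q}\mathcal{D}=n-k$. Let $\mathbf{G}_\mathcal{D}$ be the reduced GPM of $\mathcal{D}$; by Theorem~\ref{Containment} (applied in both directions) $\mathbf{H}=\mathbf{Y}\mathbf{G}_\mathcal{D}$ for a unimodular matrix $\mathbf{Y}$, hence $\deg\big(\det\mathbf{H}\big)=\deg\big(\det\mathbf{G}_\mathcal{D}\big)$. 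Since $\mathbf{G}_\mathcal{D}$ is upper triangular, Theorem~\ref{dim_MT} gives
\[
\dim_{\mathbb{F}_q}\mathcal{D}=\sum_{j=1}^{\ell}\big(m_j-\deg(\mathbf{G}_\mathcal{D})_{j,j}\big)=n-\deg\big(\det\mathbf{G}_\mathcal{D}\big)=n-\deg\big(\det\mathbf{H}\big).
\]
So the whole lemma comes down to proving $\deg\big(\det\mathbf{H}\big)=k$.

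For that I would unwind Definition~\ref{def_parity}, writing $d_j=\deg(g_{j,j})$. The matrix $\mathbf{A}$ is upper triangular, hence so are $\mathbf{A}\!\left(\tfrac1x\right)$ and $\mathbf{A}^*$; the reduction in Definition~\ref{def_parity}(3) modifies only strictly-upper entries, so $\mathbf{A}^{**}$ is again upper triangular with the \emph{same} diagonal as $\mathbf{A}^*$ (recall from the proof of Lemma~\ref{lemma2} that the diagonal of $\mathbf{A}^*$ already has no negative powers of $x$, so no reduction occurs there). The $(j,j)$-entry of $\mathbf{A}^*$ is $x^{m_j-d_j}a_{j,j}\!\left(\tfrac1x\right)$, i.e.\ the reciprocal polynomial of $a_{j,j}=\frac{x^{m_j}-\lambda_j}{g_{j,j}}$. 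Since $g_{j,j}$ is a monic divisor of $x^{m_j}-\lambda_j$ with $\lambda_j\ne 0$ we have $g_{j,j}(0)\ne 0$, so $a_{j,j}$ is monic of degree $m_j-d_j$ with nonzero constant term, and therefore its reciprocal polynomial again has degree exactly $m_j-d_j$. Hence $\det\mathbf{H}=\det\big((\mathbf{A}^{**})^t\big)=\det\mathbf{A}^{**}$ has degree $\sum_{j=1}^{\ell}(m_j-d_j)$, which equals $k$ by Theorem~\ref{dim_MT} applied to $\mathcal{C}$ itself. Combining with the displayed identity gives $\dim_{\mathbb{F}_q}\mathcal{D}=n-k$.

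The argument is short precisely because all the structural work was done in Lemma~\ref{lemma2}; the only delicate point is the degree bookkeeping along $\mathbf{A}\mapsto\mathbf{A}\!\left(\tfrac1x\right)\mapsto\mathbf{A}^*\mapsto\mathbf{A}^{**}$ — one must check that the mod-$(x^{m_i}-\lambda_i^{-1})$ reduction never disturbs the diagonal entries and that reciprocation preserves their degree, which is exactly where $g_{j,j}(0)\ne 0$ is used. I expect this bookkeeping, rather than any conceptual step, to be the main thing to get right. Finally, note that this lemma together with Lemmas~\ref{lemma1} and~\ref{lemma2} completes the identification of the Euclidean dual: $\mathbf{H}$ generates a $\Delta$-MT subcode of $\mathcal{C}^\perp$ of $\mathbb{F}_q$-dimension $n-k=\dim_{\mathbb{F}_q}\mathcal{C}^\perp$, so that subcode is all of $\mathcal{C}^\perp$ and $\mathbf{H}$ is a GPM of $\mathcal{C}^\perp$.
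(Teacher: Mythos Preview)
Your proof is correct and follows essentially the same approach as the paper: pass to the Hermite normal form of $\mathbf{H}$ via a unimodular matrix, apply the dimension formula of Theorem~\ref{dim_MT}, and reduce to $\deg(\det\mathbf{H})=\deg(\det\mathbf{A})=k$. The paper compresses the last equality into a single step citing Definition~\ref{def_parity}, whereas you spell out the diagonal bookkeeping (reciprocation preserving degree because $g_{j,j}(0)\ne 0$), but the argument is the same.
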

\begin{proof}
Suppose $\mathbf{H}'$ is the Hermite normal form of $\mathbf{H}$ and let $\mathbf{B}'$ be the matrix that satisfies the identical equation of $\mathbf{H}'$. Then, there is an invertible polynomial matrix $\mathbf{U}$ such that $\mathbf{H}'=\mathbf{U}\mathbf{H}$. From Theorem \ref{dim_MT} and Definition \ref{def_parity}, the dimension of the subcode generated by $\mathbf{H}$ is 
\begin{equation*}\begin{split}
\deg\left(\mathrm{det}\left( \mathbf{B}'\right)\right)=n-\deg\left(\mathrm{det}\left( \mathbf{H}'\right)\right)&=n-\deg\left(\mathrm{det}\left( \mathbf{U}\right)\mathrm{det}\left( \mathbf{H}\right)\right)\\
&=n-\deg\left(\mathrm{det}\left( \mathbf{H}\right)\right)=n-\deg\left(\mathrm{det}\left( \mathbf{A}\right)\right)=n-k.
\end{split}\end{equation*}
\end{proof}

What we proved in Lemma \ref{lemma1}, Lemma \ref{lemma2}, and Lemma \ref{lemma3} can be summarized in the following theorem.
\begin{theorem}
\label{MT_Parity_H}
Let $\mathcal{C}$ be a $\Lambda$-MT code over $\mathbb{F}_q$ with reduced GPM $\mathbf{G}$ and let $\mathbf{A}$ be the matrix that satisfies the identical equation of $\mathbf{G}$. The polynomial matrix $\mathbf{H}$ given in Definition \ref{def_parity} is a GPM for $\mathcal{C}^\perp$.
\end{theorem}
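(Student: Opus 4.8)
The plan is simply to assemble Lemmas \ref{lemma1}, \ref{lemma2}, and \ref{lemma3}, which between them already contain all the substantive content. Let $\mathcal{D}$ denote the $\mathbb{F}_q[x]$-submodule of $\left(\mathbb{F}_q[x]\right)^\ell$ generated by the rows of $\mathbf{H}$; equivalently, via the dictionary of Theorem \ref{MT_Corresp} under which multiplication by $x$ is $T_\Delta$, $\mathcal{D}$ is the $T_\Delta$-invariant $\mathbb{F}_q$-subspace of $\mathbb{F}_q^n$ spanned by the $T_\Delta$-orbits of the words $\mathbf{a}_j$ attached to $\mathrm{row}_j(\mathbf{H})$. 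First, by Theorem \ref{MT_dual} the Euclidean dual $\mathcal{C}^\perp$ is itself a $\Delta$-MT code, hence a $T_\Delta$-invariant subspace; combined with Lemma \ref{lemma1}, which places every $\mathrm{row}_j(\mathbf{H})$ in $\mathcal{C}^\perp$, this gives the containment $\mathcal{D}\subseteq\mathcal{C}^\perp$. Second, Lemma \ref{lemma2} shows that $\mathbf{H}$ is a bona fide GPM, i.e. that it satisfies an identical equation with diagonal matrix $\mathrm{diag}\!\left[x^{m_1}-\lambda_1^{-1},\ldots,x^{m_\ell}-\lambda_\ell^{-1}\right]$, so that $\mathcal{D}$ is a $\Delta$-MT code of index $\ell$ and the same block lengths $\left(m_1,\ldots,m_\ell\right)$ as promised by Definition \ref{def_parity}.

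It then remains to upgrade the inclusion $\mathcal{D}\subseteq\mathcal{C}^\perp$ to an equality, which is a pure dimension count. By Lemma \ref{lemma3}, $\dim_{\mathbb{F}_q}\mathcal{D}=n-k$, where $k=\dim_{\mathbb{F}_q}\mathcal{C}$ and $n=\sum_{j=1}^\ell m_j$. On the other hand, the elementary fact recalled in Section \ref{preliminaries} gives $\dim_{\mathbb{F}_q}\mathcal{C}^\perp=n-k$ as well. A subspace contained in a space of the same finite dimension equals that space, so $\mathcal{D}=\mathcal{C}^\perp$. Hence the rows of $\mathbf{H}$ generate $\mathcal{C}^\perp$ as an $\mathbb{F}_q[x]$-module, which is exactly the assertion that $\mathbf{H}$ is a GPM for $\mathcal{C}^\perp$.

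There is no real obstacle left at this stage: the theorem is a corollary of the three lemmas, and the only point worth spelling out explicitly in the writeup is that the object whose dimension is computed in Lemma \ref{lemma3} and the object shown to sit inside $\mathcal{C}^\perp$ in Lemma \ref{lemma1} are literally the same code $\mathcal{D}$ — this is immediate once one passes through the $T_\Delta$-module/$\mathbb{F}_q$-subspace correspondence, since the $\mathbb{F}_q[x]$-span of the rows of $\mathbf{H}$ is precisely the $\mathbb{F}_q$-span of their $T_\Delta$-orbits. The proof is therefore three or four lines: cite Lemma \ref{lemma1} and Theorem \ref{MT_dual} for $\mathcal{D}\subseteq\mathcal{C}^\perp$, cite Lemma \ref{lemma2} for the $\Delta$-MT structure, cite Lemma \ref{lemma3} and $\dim\mathcal{C}^\perp=n-k$ for equality, and conclude.
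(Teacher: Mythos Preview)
Your proposal is correct and follows exactly the paper's approach: the paper's proof of this theorem is a one-line summary stating that Lemmas \ref{lemma1}, \ref{lemma2}, and \ref{lemma3} together yield the result, preceded by the remark that the first two lemmas give a $\Delta$-MT subcode of $\mathcal{C}^\perp$ and the third supplies the dimension argument for equality. Your write-up spells out the logic in slightly more detail (naming the code $\mathcal{D}$ and invoking Theorem \ref{MT_dual} explicitly), but the structure is identical.
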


\begin{example}
We continue with the $(2,1)$-MT code $\mathcal{C}$ discussed in Example \ref{ex_Sh2}. From Theorem \ref{MT_dual}, $\mathcal{C}^\perp$ is $(2,1)$-MT over $\mathbb{F}_3$ of length $60$ and dimension $54$. A GPM for $\mathcal{C}^\perp$ can be obtained from Definition \ref{def_parity} and Theorem \ref{MT_Parity_H} as follows:
\begin{equation*}
\mathbf{A}=\begin{pmatrix}
2+2 x + x^4 + x^5+x^6 & \quad 2x(1+x)^4\\
0 & 1
\end{pmatrix}.
\end{equation*}
\begin{equation*}
\begin{split}
\mathbf{A}\left( \frac{1}{x}\right)&=\begin{pmatrix}
x^{-6} + x^{-5} + x^{-4} +2 x^{-1} +2 & \quad 2x^{-1}(x^{-1}+1)^4\\
0 & 1
\end{pmatrix}.\\
\mathbf{A}^*&=\begin{pmatrix}
x^{20-14}\left(x^{-6} + x^{-5} + x^{-4} +2 x^{-1} +2\right) & \quad 2x^{20-40}x^{-1}(x^{-1}+1)^4\\
0 & x^{40-40}
\end{pmatrix}\\
&=\begin{pmatrix}
1 + x + x^{2} +2 x^{5} +2x^6 & \quad 2x^{-25}+ 2x^{-24}+ 2x^{-22}+2x^{-21}\\
0 & 1
\end{pmatrix}.\\
\mathbf{A}^{**}&=\begin{pmatrix}
1 + x + x^{2} +2 x^{5} +2x^6 & \quad x^{-5}+ x^{-4}+ x^{-2}+x^{-1}\\
0 & 1
\end{pmatrix}\\
&=\begin{pmatrix}
1 + x + x^{2} +2 x^{5} +2x^6 & \quad 2x^{15} + 2x^{16} + 2x^{18} + 2x^{19}\\
0 & 1
\end{pmatrix}.\\
\mathbf{H}&=\begin{pmatrix}
1 + x + x^{2} +2 x^{5} +2x^6 &  0 \\
2x^{15} + 2x^{16} + 2x^{18} + 2x^{19}& 1
\end{pmatrix}.
\end{split}
\end{equation*}
The reduced GPM of $\mathcal{C}^\perp$ is
\begin{equation*}
\mathbf{H}'=\begin{pmatrix}
1 & 2x+2x^2+x^3+x^4+x^5  \\
 0 &    2+2x+2x^2+x^5+x^6
\end{pmatrix}
\end{equation*}
which can be obtained by reducing $\mathbf{H}$ to its Hermite normal form.
\end{example}

Since the class of MT codes contains QC, QT, and GQC codes as subclasses, the following special cases are direct consequences of Theorem \ref{MT_Parity_H}.
\begin{corollary}
Let $\mathcal{C}$ be a QC code over $\mathbb{F}_q$ of index $\ell$, co-index $m$, and reduced GPM $\mathbf{G}=[g_{i,j}]$. Let $\mathbf{A}$ denote the matrix satisfying the identical equation of $\mathbf{G}$. Then $\mathcal{C}^\perp$ is QC of index $\ell$, co-index $m$, and a GPM 
\begin{equation*}
\mathbf{H}=\left( \mathbf{A}\left(\frac{1}{x}\right) \mathrm{diag}\left[x^{m-d_1},\ldots,x^{m-d_\ell}\right]\pmod{x^m-1}\right)^t
\end{equation*}
where $d_j=\deg\left(g_{j,j}\right)$ for $1\le j\le \ell$. 
\end{corollary}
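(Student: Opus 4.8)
The plan is to derive the corollary purely as the specialization of the combination of Theorems~\ref{MT_dual} and~\ref{MT_Parity_H} to the case $\Lambda=(1,\dots,1)$ with all block lengths equal to $m$; essentially all of the real work has already been done, and what remains is to translate Definition~\ref{def_parity} into the quasi-cyclic language.

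First I would recall that, by definition, a QC code $\mathcal{C}$ over $\mathbb{F}_q$ of index $\ell$ and co-index $m$ is precisely a $(1,\dots,1)$-MT code of index $\ell$ with all block lengths equal to $m$. Hence Theorem~\ref{MT_dual} applies with $\Delta=\left(\frac{1}{1},\dots,\frac{1}{1}\right)=(1,\dots,1)$ and block lengths $(m,\dots,m)$, and it immediately yields that $\mathcal{C}^\perp$ is again $(1,\dots,1)$-MT with these block lengths, i.e.\ QC of index $\ell$ and co-index $m$. This settles the first assertion of the corollary.

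Next I would invoke Theorem~\ref{MT_Parity_H}, which says that the matrix $\mathbf{H}=\left(\mathbf{A}^{**}\right)^t$ built from the reduced GPM $\mathbf{G}$ via Definition~\ref{def_parity} is a GPM for $\mathcal{C}^\perp$, and then simplify the construction steps under the hypotheses $m_i=m$ and $\lambda_i=1$ for all $i$. In Step~2 the $(i,j)$-entry of $\mathbf{A}\!\left(\frac{1}{x}\right)$ is multiplied by $x^{m_i-d_j}=x^{m-d_j}$, a factor depending only on the column index $j$, so $\mathbf{A}^*=\mathbf{A}\!\left(\frac{1}{x}\right)\mathrm{diag}\!\left[x^{m-d_1},\dots,x^{m-d_\ell}\right]$. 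In Step~3 every modulus $x^{m_i}-\frac{1}{\lambda_i}$ equals $x^m-1$, so $\mathbf{A}^{**}$ is obtained by reducing the strictly-upper-triangular entries of $\mathbf{A}^*$ modulo $x^m-1$. Taking transposes gives $\mathbf{H}=\left(\mathbf{A}\!\left(\frac{1}{x}\right)\mathrm{diag}\!\left[x^{m-d_1},\dots,x^{m-d_\ell}\right]\bmod (x^m-1)\right)^t$, once one checks that applying the reduction to the whole matrix rather than only to the entries above the diagonal changes nothing: $\mathbf{A}$, hence $\mathbf{A}^*$, is upper triangular so its sub-diagonal entries vanish, and on the diagonal the difference between $x^{m-d_j}a_{j,j}\!\left(\frac{1}{x}\right)$ and its residue modulo $x^m-1$ is a multiple of $x^m-1$, which lies in the submodule $\bigoplus_{j}(x^m-1)\mathbb{F}_q[x]$ contained in every QC code; hence the code generated — namely $\mathcal{C}^\perp$ — is unaffected.

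The only point that needs genuine care is exactly this last remark: in the degenerate situation $d_j=0$ the $j$-th diagonal entry of $\mathbf{A}^*$ has degree $m$ and is killed outright by the reduction, so one must verify explicitly (via the membership-in-$M_\Delta$ argument above) that the resulting matrix still generates $\mathcal{C}^\perp$ and is therefore an admissible GPM, even though it need not be in Hermite normal form. Apart from this bookkeeping, no idea beyond Theorems~\ref{MT_dual} and~\ref{MT_Parity_H} is needed, and the corollary follows.
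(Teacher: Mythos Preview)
Your proposal is correct and follows exactly the paper's approach: the paper presents this corollary as an immediate consequence of Theorem~\ref{MT_Parity_H} with no separate proof, and your derivation carries out precisely that specialization of Definition~\ref{def_parity} to the case $m_i=m$, $\lambda_i=1$. Your extra care about the diagonal entries (and the degenerate case $d_j=0$) is more than the paper spells out, and your observation that the discrepancy lies in $M_\Delta$ is the right way to reconcile the global reduction $\pmod{x^m-1}$ with Step~3 of Definition~\ref{def_parity}, which only reduces the strictly upper-triangular entries.
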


\begin{corollary}
Let $\mathcal{C}$ be a QT code over $\mathbb{F}_q$ of index $\ell$, co-index $m$, shift constant $\lambda$, and reduced GPM $\mathbf{G}=[g_{i,j}]$. Let $\mathbf{A}$ denote the matrix satisfying the identical equation of $\mathbf{G}$. Then $\mathcal{C}^\perp$ is QT of index $\ell$, co-index $m$, shift constant $\frac{1}{\lambda}$, and a GPM 
\begin{equation*}
\mathbf{H}=\left( \mathbf{A}\left(\frac{1}{x}\right) \mathrm{diag}\left[x^{m-d_1},\ldots,x^{m-d_\ell}\right]\pmod{x^m-\frac{1}{\lambda}}\right)^t
\end{equation*}
where $d_j=\deg\left(g_{j,j}\right)$ for $1\le j\le \ell$. 
\end{corollary}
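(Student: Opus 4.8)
The plan is to obtain this corollary as the specialization of Theorem~\ref{MT_Parity_H} to the case of equal block lengths. First I would record the structural identification: an $(\ell,\lambda)$-QT code of co-index $m$ is exactly a $\Lambda$-MT code of index $\ell$ with block lengths $(m,m,\ldots,m)$ and $\Lambda=(\lambda,\lambda,\ldots,\lambda)$. Hence Theorem~\ref{MT_dual} applies and shows that $\mathcal{C}^\perp$ is $\Delta$-MT with $\Delta=\left(\frac{1}{\lambda},\ldots,\frac{1}{\lambda}\right)$, that is, $\mathcal{C}^\perp$ is $\left(\ell,\frac{1}{\lambda}\right)$-QT of index $\ell$ and co-index $m$. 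Likewise Theorem~\ref{MT_Parity_H} applies and tells us that the matrix $\mathbf{H}$ built from $\mathbf{A}$ via Definition~\ref{def_parity} is a GPM for $\mathcal{C}^\perp$; it remains only to rewrite that construction in closed form under the hypotheses $m_i=m$ and $\lambda_i=\lambda$ for all $i$.

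Next I would unwind Definition~\ref{def_parity} step by step. The matrix $\mathbf{A}\left(\frac{1}{x}\right)$ is formed exactly as in the general case. In step~2 the $(i,j)$-th entry of $\mathbf{A}\left(\frac{1}{x}\right)$ is multiplied by $x^{m_i-d_j}=x^{m-d_j}$, a factor that now depends only on the column index $j$; therefore $\mathbf{A}^*=\mathbf{A}\left(\frac{1}{x}\right)\mathrm{diag}\left[x^{m-d_1},\ldots,x^{m-d_\ell}\right]$. In step~3 one reduces the $(i,j)$-th entry with $i<j$ modulo $x^{m_i}-\frac{1}{\lambda_i}=x^{m}-\frac{1}{\lambda}$; since $x^{m}\equiv\frac{1}{\lambda}$ yields $x^{-\mu}\equiv\lambda x^{m-\mu}\pmod{x^{m}-\frac{1}{\lambda}}$, this reduction performs precisely the substitution $x^{-\mu}\mapsto\lambda x^{m-\mu}$ prescribed in the definition, so $\mathbf{A}^{**}=\mathbf{A}^*\pmod{x^{m}-\frac{1}{\lambda}}$. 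Taking the transpose yields the announced formula for $\mathbf{H}$, while the index and co-index of $\mathcal{C}^\perp$ are as recorded in the first paragraph.

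Since the argument is a direct specialization, I do not expect a genuine obstacle. The only point needing a line of justification is that reducing modulo $x^{m}-\frac{1}{\lambda}$ affects only the super-diagonal entries, so that it indeed coincides with step~3 of Definition~\ref{def_parity}: the sub-diagonal entries of $\mathbf{A}^*$ vanish, and, as already observed inside the proof of Lemma~\ref{lemma2}, the diagonal entries of $\mathbf{A}^*$ contain no negative powers of $x$ because $\deg(a_{j,j})=m_j-d_j=m-d_j$. Hence the reduction leaves the diagonal and everything below it untouched, and the resulting closed-form expression for $\mathbf{H}$ is exactly the image of Definition~\ref{def_parity} in this setting.
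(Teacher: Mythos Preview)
Your proof is correct and follows exactly the approach the paper indicates: the paper gives no explicit argument for this corollary, stating only that it is a direct consequence of Theorem~\ref{MT_Parity_H}, and your specialization of Definition~\ref{def_parity} to the case $m_i=m$, $\lambda_i=\lambda$ makes that deduction explicit.
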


\begin{corollary}
Let $\mathcal{C}$ be a GQC code over $\mathbb{F}_q$ of index $\ell$, block lengths $(m_1,m_2,\ldots,m_\ell)$, and reduced GPM $\mathbf{G}=[g_{i,j}]$. Let $\mathbf{A}$ denote the matrix satisfying the identical equation of $\mathbf{G}$. Then $\mathcal{C}^\perp$ is GQC of index $\ell$, block lengths $(m_1,m_2,\ldots,m_\ell)$, and a GPM $\mathbf{H}$, where
\begin{equation*}
\mathrm{Column}_j\left(\mathbf{H}\right)=\mathrm{row}_j \left( \mathbf{A}\left(\frac{1}{x}\right) \mathrm{diag}\left[x^{m_j-d_1},\ldots,x^{m_j-d_\ell}\right]\pmod{x^{m_j}-1}\right)
\end{equation*}
and $d_j=\deg\left(g_{j,j}\right)$ for $1\le j\le \ell$. 
\end{corollary}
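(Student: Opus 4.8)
The plan is to deduce this statement by specializing Theorems~\ref{MT_dual} and~\ref{MT_Parity_H} to the shift constants $\lambda_1=\cdots=\lambda_\ell=1$, since an $\ell$-GQC code of block lengths $(m_1,\ldots,m_\ell)$ is by definition a $(1,1,\ldots,1)$-MT code with those parameters.

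First I would apply Theorem~\ref{MT_dual}: the Euclidean dual of a $(1,\ldots,1)$-MT code is $\Delta$-MT with $\Delta=(1/1,\ldots,1/1)=(1,\ldots,1)$ and the same block lengths, so $\mathcal{C}^\perp$ is again GQC of index $\ell$ and block lengths $(m_1,\ldots,m_\ell)$; this settles the structural part of the claim. Then I would invoke Theorem~\ref{MT_Parity_H}, which supplies the explicit GPM $\mathbf{H}=(\mathbf{A}^{**})^t$ built in Definition~\ref{def_parity}, so all that remains is to rewrite that construction in the GQC setting.

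Unwinding Definition~\ref{def_parity} with every $\lambda_i=1$: step~2 makes the $(i,k)$-entry of $\mathbf{A}^*$ equal to $x^{m_i-d_k}$ times the $(i,k)$-entry of $\mathbf{A}(1/x)$, equivalently $\mathrm{row}_i(\mathbf{A}^*)=\mathrm{row}_i\!\big(\mathbf{A}(1/x)\big)\,\mathrm{diag}[x^{m_i-d_1},\ldots,x^{m_i-d_\ell}]$; step~3 replaces, for $i<k$, that entry by its residue modulo $x^{m_i}-\lambda_i^{-1}=x^{m_i}-1$, leaves the diagonal unchanged (its degree $m_j-d_j$ is already nonnegative and below $m_j$, or else $a_{j,j}=x^{m_j}-1$), and the entries with $i>k$ vanish because $\mathbf{A}$ is upper triangular; step~4 transposes. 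Since $\mathrm{Column}_j\!\big((\mathbf{A}^{**})^t\big)$ equals $\mathrm{row}_j(\mathbf{A}^{**})$ entrywise, the $j$-th column of $\mathbf{H}$ is exactly $\mathrm{row}_j$ of $\mathbf{A}(1/x)\,\mathrm{diag}[x^{m_j-d_1},\ldots,x^{m_j-d_\ell}]$ after reducing every entry modulo $x^{m_j}-1$, which is the asserted formula.

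I do not expect a genuine obstacle here, as all the substance is already contained in Theorems~\ref{MT_dual} and~\ref{MT_Parity_H}; the one thing to watch is the index bookkeeping through the transpose, namely that the exponent $m_j-d_k$ in the corollary (indexed by the chosen column $j$ of $\mathbf{H}$) is the image under transposition of the exponent $m_i-d_j$ in Definition~\ref{def_parity} (indexed by row $i$, column $j$ of $\mathbf{A}^*$), and that the modulus is $x^{m_j}-1$ precisely because it is row $j$ of $\mathbf{A}^{**}$ whose entries get reduced modulo $x^{m_j}-\lambda_j^{-1}$. In the borderline case $d_j=0$ one notes that the two expressions for the $(j,j)$-entry differ only by a multiple of $x^{m_j}-1\in M_\Lambda$, hence define the same $\Delta$-MT code.
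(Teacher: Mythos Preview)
Your proposal is correct and matches the paper's approach: the paper states this corollary as a direct consequence of Theorem~\ref{MT_Parity_H} (together with Theorem~\ref{MT_dual}) without further argument, and what you have written is precisely the specialization $\lambda_1=\cdots=\lambda_\ell=1$ of Definition~\ref{def_parity} with the transpose bookkeeping made explicit. Your remark on the borderline case $d_j=0$ is a nice touch, since the corollary's formula reduces the diagonal entry modulo $x^{m_j}-1$ whereas Definition~\ref{def_parity} does not; as you note, the discrepancy lies in $M_\Delta$ (here $\Delta=\Lambda$), so both matrices are GPMs of $\mathcal{C}^\perp$.
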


\section{Right, left, and two-sided Galois duals}
\label{Galois_dual_Sec}
In this section, we aim to generalize the result of Section \ref{Euclidean_dual} by replacing the Euclidean inner product with the Galois inner product. Furthermore, we present the two-sided Galois inner product of MT codes which has not been previously discussed in any study. Throughout this section, $q=p^e$ where $p$ is a prime and $e$ is a positive integer. Recall that the Frobenius automorphism of $\mathbb{F}_{q}$, denoted $\sigma$, is defined by $\sigma\left(\alpha \right)=\alpha^p$ for each $\alpha\in\mathbb{F}_{q}$. The Galois group of $\mathbb{F}_{q}$ is finite, cyclic, and generated by $\sigma$. The least positive integer $t$ such that $\sigma^t\left(\alpha\right)=\alpha$ for all $\alpha\in\mathbb{F}_{q}$ is $t=e$. Thus, the order of $\sigma$ in the Galois group is $e$. Then $\sigma$ extends to a ring automorphism of $\mathbb{F}_{q}[x]$ by defining it on polynomials over $\mathbb{F}_{q}$ as follows:
\begin{equation*}
\sigma\left(a_0+a_1 x +\cdots +a_m x^m\right) = \sigma\left(a_0\right)+\sigma\left(a_1\right) x +\cdots +\sigma\left(a_m\right) x^m.
\end{equation*}
In a natural way, $\sigma$ extends to an automorphism of the ring of matrices over $\mathbb{F}_{q}[x]$. For an $\ell\times\ell$ matrix $\mathbf{Y}=\left[y_{i,j}\right]$ over $\mathbb{F}_{q}[x]$, define 
\begin{equation*}
\sigma\left(\mathbf{Y}\right)=\left[\sigma\left(y_{i,j}\right)\right].
\end{equation*}
Let $0\le \mu< e$ and let $\upsilon=\mathrm{gcd}\left(e,\mu\right)$. Since $\upsilon$ divides $e$, $\mathbb{F}_{q}$ has $\mathbb{F}_{p^\upsilon}$ as a subfield. The automorphism $\sigma^\mu$ of $\mathbb{F}_{q}$ fixes an element $\alpha\in\mathbb{F}_{q}$, i.e., $\sigma^\mu\left(\alpha\right)=\alpha$, if and only if $\alpha\in\mathbb{F}_{p^\upsilon}$. In the ring of polynomials $\mathbb{F}_{q}[x]$, $\sigma^\mu$ only fixes all polynomials over $\mathbb{F}_{p^\upsilon}$. 
However, in the ring of matrices over $\mathbb{F}_{q}[x]$, $\sigma^\mu$ only fixes all matrices over $\mathbb{F}_{p^\upsilon}[x]$. Now, we examine the action of $\sigma^\mu$ on vectors of $\mathbb{F}_{q}^n$ as an introduction to study its effect on linear codes over $\mathbb{F}_{q}$ of length $n$.

\begin{definition}
Let $e$ be a positive integer and let $\sigma$ be the Frobenius automorphism of $\mathbb{F}_{q}$. For $0\le \mu< e$, the map $\sigma^\mu:\mathbb{F}_{q}^n\rightarrow \mathbb{F}_{q}^n$ is defined by
\begin{equation*}
\sigma^\mu: \left(a_1,a_2,\ldots,a_n\right) \mapsto \left(\sigma^\mu\left(a_1\right),\sigma^\mu\left(a_2\right),\ldots,\sigma^\mu\left(a_n\right)\right).
\end{equation*}
\end{definition}
Since $\sigma^\mu\left(\mathbf{a}+\mathbf{b}\right)=\sigma^\mu\left(\mathbf{a}\right)+\sigma^\mu\left(\mathbf{b}\right)$ for any $\mathbf{a},\mathbf{b}\in\mathbb{F}_{q}^n$, $\sigma^\mu$ defines an additive group automorphism on $\mathbb{F}_{q}^n$. For a linear code $\mathcal{C}$ over $\mathbb{F}_{q}$ of length $n$, let $\sigma^\mu\left(\mathcal{C}\right)=\left\{\sigma^\mu\left(\mathbf{c}\right) \ \mid \ \mathbf{c}\in\mathcal{C}\right\}$. The map $\sigma^\mu:\mathcal{C}\rightarrow \sigma^\mu\left(\mathcal{C}\right)$ is a group isomorphism; it is a group automorphism if and only if $\sigma^\mu\left(\mathcal{C}\right)=\mathcal{C}$, i.e., $\mathcal{C}$ is invariant under $\sigma^\mu$. We know that $\left(a_1,a_2,\ldots,a_n\right)\in\mathbb{F}_{q}^n$ is fixed under $\sigma^\mu$ if and only if $a_i\in\mathbb{F}_{p^\upsilon}$ for every $1\le i\le n$, where $\upsilon=\mathrm{gcd}\left(e,\mu\right)$. From the uniqueness of the reduced row echelon form of a generator matrix of a linear code, $\mathcal{C}$ is invariant under $\sigma^\mu$ if and only if $\mathcal{C}$ has a basis that is a subset of $\mathbb{F}_{p^\upsilon}^n$. Henceforth, for any $0\le \mu< e$ and some $\lambda_1, \lambda_2,\ldots, \lambda_\ell \in\mathbb{F}_{q}$, let 
\begin{equation*}
\sigma^\mu\left(\Lambda\right)=\left(\sigma^\mu\left(\lambda_1\right),\sigma^\mu\left(\lambda_2\right),\ldots,\sigma^\mu\left(\lambda_\ell\right) \right)
\end{equation*}
and
\begin{equation*}
\sigma^\mu\left(\Delta\right)=\left(\sigma^\mu\left(\frac{1}{\lambda_1}\right),\sigma^\mu\left(\frac{1}{\lambda_2}\right),\ldots,\sigma^\mu\left(\frac{1}{\lambda_\ell}\right) \right).
\end{equation*}

\begin{theorem}
\label{Code_map}
Let $\mathcal{C}$ be a linear code over $\mathbb{F}_{q}$ of length $n$ and dimension $k$. For any $0\le \mu <e$, $\sigma^\mu\left(\mathcal{C}\right)$ is a linear code over $\mathbb{F}_{q}$ of length $n$ and dimension $k$. Moreover, $\mathcal{C}$ and $\sigma^\mu\left(\mathcal{C}\right)$ are isomorphic as additive groups. Suppose $\mathcal{C}$ is $\Lambda$-MT of block lengths $\left(m_1,m_2,\ldots,m_\ell\right)$. Then $\sigma^\mu\left(\mathcal{C}\right)$ is $\sigma^\mu\left(\Lambda \right)$-MT of block lengths $\left(m_1,m_2,\ldots,m_\ell\right)$. Let $\mathbf{G}$ be the reduced GPM of $\mathcal{C}$ and $\mathbf{A}$ the matrix that satisfies the identical equation of $\mathbf{G}$. Then the reduced GPM of $\sigma^\mu\left(\mathcal{C}\right)$ is $\sigma^\mu\left(\mathbf{G}\right)$ and $\sigma^\mu\left(\mathbf{A}\right)$ is the matrix that satisfies the identical equation of $\sigma^\mu\left(\mathbf{G}\right)$.
\end{theorem}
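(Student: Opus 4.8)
The plan is to regard $\sigma^\mu$ as a $\sigma^\mu$-semilinear bijection of $\mathbb{F}_q^n$ and to push it through each structure involved: the scalar action, the twist automorphism $T_\Lambda$, the module action on $\left(\mathbb{F}_q[x]\right)^\ell$, and the Hermite normal form. Since $\sigma^\mu$ is an additive automorphism of $\mathbb{F}_q^n$, the image $\sigma^\mu\left(\mathcal{C}\right)$ is an additive subgroup and $\sigma^\mu|_{\mathcal{C}}$ is an additive isomorphism onto it, so $\left|\sigma^\mu\left(\mathcal{C}\right)\right|=\left|\mathcal{C}\right|=q^k$. To see that $\sigma^\mu\left(\mathcal{C}\right)$ is an $\mathbb{F}_q$-subspace I would use semilinearity: for $\alpha\in\mathbb{F}_q$ and $\mathbf{c}\in\mathcal{C}$ one has $\alpha\,\sigma^\mu\left(\mathbf{c}\right)=\sigma^\mu\!\left(\sigma^{-\mu}\left(\alpha\right)\mathbf{c}\right)\in\sigma^\mu\left(\mathcal{C}\right)$, because $\sigma^{-\mu}\left(\alpha\right)\mathbf{c}\in\mathcal{C}$. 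Hence $\sigma^\mu\left(\mathcal{C}\right)$ is linear of dimension $k$.

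For the MT claim the crucial identity is $\sigma^\mu\circ T_\Lambda=T_{\sigma^\mu\left(\Lambda\right)}\circ\sigma^\mu$ on $\mathbb{F}_q^n$, which I would verify directly from \eqref{T_ell_Lambda}: both compositions cyclically shift each of the $\ell$ blocks and send the wrapped-around entry of the $j$th block to $\sigma^\mu\left(\lambda_j a_{m_j-1,j}\right)=\sigma^\mu\left(\lambda_j\right)\sigma^\mu\left(a_{m_j-1,j}\right)$, and $\sigma^\mu\left(\lambda_j\right)\ne0$ since $\sigma^\mu$ is a field automorphism. Consequently, if $\mathcal{C}$ is $T_\Lambda$-invariant then $T_{\sigma^\mu\left(\Lambda\right)}\!\left(\sigma^\mu\left(\mathcal{C}\right)\right)=\sigma^\mu\!\left(T_\Lambda\left(\mathcal{C}\right)\right)=\sigma^\mu\left(\mathcal{C}\right)$, so by Theorem \ref{MT_Corresp} the code $\sigma^\mu\left(\mathcal{C}\right)$ is $\sigma^\mu\left(\Lambda\right)$-MT with the same block lengths.

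For the statements about the reduced GPM and the identical equation I pass to the module picture, writing $\mathcal{C}$ as an $\mathbb{F}_q[x]$-submodule of $\left(\mathbb{F}_q[x]\right)^\ell$ containing $M_\Lambda$ and generated by the rows of $\mathbf{G}$. Since $\sigma^\mu\left(f\mathbf{v}\right)=\sigma^\mu\left(f\right)\sigma^\mu\left(\mathbf{v}\right)$, the same semilinearity argument shows that $\sigma^\mu\left(\mathcal{C}\right)$ is an $\mathbb{F}_q[x]$-submodule generated by the rows of $\sigma^\mu\left(\mathbf{G}\right)$, that $\sigma^\mu\left(M_\Lambda\right)=M_{\sigma^\mu\left(\Lambda\right)}$, and (using that $\sigma^\mu$ is compatible with the coordinate-to-polynomial correspondence) that this module is exactly $\sigma^\mu\left(\mathcal{C}\right)$; so $\sigma^\mu\left(\mathbf{G}\right)$ is a GPM of $\sigma^\mu\left(\mathcal{C}\right)$. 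Because $\sigma^\mu$ acts coefficientwise through an injective field map, it preserves the degree of every polynomial, carries monic polynomials to monic polynomials, and fixes zero entries; hence $\sigma^\mu\left(\mathbf{G}\right)$ still satisfies the conditions characterizing the reduced GPM, and by uniqueness of the Hermite normal form it is the reduced GPM of $\sigma^\mu\left(\mathcal{C}\right)$. Finally, applying $\sigma^\mu$, a ring automorphism of the matrix ring over $\mathbb{F}_q[x]$ and hence multiplicative, to the identical equation $\mathbf{A}\mathbf{G}=\mathbf{D}$ gives $\sigma^\mu\left(\mathbf{A}\right)\sigma^\mu\left(\mathbf{G}\right)=\sigma^\mu\left(\mathbf{D}\right)$, and $\sigma^\mu\left(\mathbf{D}\right)=\mathrm{diag}\!\left[x^{m_1}-\sigma^\mu\left(\lambda_1\right),\ldots,x^{m_\ell}-\sigma^\mu\left(\lambda_\ell\right)\right]$ is precisely the diagonal matrix attached to $\sigma^\mu\left(\Lambda\right)$; since $\mathbf{G}$ is invertible over $\mathbb{F}_q(x)$ the matrix with this property is unique (and $\sigma^\mu$ also preserves the triangularity and degree constraints characterizing $\mathbf{A}$), so $\sigma^\mu\left(\mathbf{A}\right)$ is the matrix satisfying the identical equation of $\sigma^\mu\left(\mathbf{G}\right)$.

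I do not expect a genuine obstacle; the only points that need care are that $\sigma^\mu$ is semilinear rather than $\mathbb{F}_q$-linear, so closure under scalar multiplication must be checked via $\sigma^{-\mu}$, and that one must keep track of the target ring $\bigoplus_{j=1}^\ell\mathscr{R}_{m_j,\sigma^\mu\left(\lambda_j\right)}$ so that $\sigma^\mu\left(\mathbf{D}\right)$ is recognized as the diagonal matrix belonging to the shift vector $\sigma^\mu\left(\Lambda\right)$.
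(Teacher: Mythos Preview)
Your proof is correct and follows essentially the same approach as the paper: semilinearity to get linearity and dimension, the intertwining identity $\sigma^\mu\circ T_\Lambda=T_{\sigma^\mu(\Lambda)}\circ\sigma^\mu$ for the MT claim, degree preservation for the Hermite normal form, and application of $\sigma^\mu$ to the identical equation. The only cosmetic difference is that the paper first lets $\mathcal{C}'$ be the MT code with GPM $\sigma^\mu(\mathbf{G})$, notes $\mathcal{C}'\subseteq\sigma^\mu(\mathcal{C})$, and concludes equality by the dimension count, whereas you argue directly via semilinearity of the module action that the rows of $\sigma^\mu(\mathbf{G})$ generate $\sigma^\mu(\mathcal{C})$.
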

\begin{proof}
Since $\sigma^\mu\left(\mathcal{C}\right)$ is the image of $\mathcal{C}$ under a group isomorphism, $\sigma^\mu\left(\mathcal{C}\right)$ is an abelian group. Also, $\sigma^\mu\left(\mathcal{C}\right)$ is linear over $\mathbb{F}_{q}$ because $\alpha\sigma^\mu\left( \mathbf{c}\right)=\sigma^\mu\left(\sigma^{e-\mu}\left(\alpha\right) \mathbf{c}\right)\in\sigma^\mu\left(\mathcal{C}\right)$ for every $\alpha\in\mathbb{F}_{q}$ and $\mathbf{c}\in\mathcal{C}$. In addition, $\sigma^\mu\left(\mathcal{C}\right)$ has dimension $k$ because $\mid\mathcal{C}\mid=\mid\sigma^\mu\left(\mathcal{C}\right)\mid$. Let $\mathcal{C}$ be $\Lambda$-MT, then for any $\mathbf{c}\in\mathcal{C}$, we have
\begin{equation*}
T_{\sigma^\mu\left(\Lambda\right)}\left(\sigma^\mu\left( \mathbf{c}\right) \right)=\sigma^\mu\left( T_{\Lambda}\left(\mathbf{c}\right) \right)\in \sigma^\mu\left(\mathcal{C}\right).
\end{equation*}
Thus $\sigma^\mu\left(\mathcal{C}\right)$ is $T_{\sigma^\mu\left(\Lambda\right)}$-invariant. Applying $\sigma^\mu$ to the matrix equation \eqref{identical_eq} yields 
\begin{equation}
\label{Identical_eq_in_proof}
\sigma^\mu\left(\mathbf{A}\right)\sigma^\mu\left(\mathbf{G}\right)=\sigma^\mu\left(\mathbf{D}\right)=\mathrm{diag}\left[x^{m_1}-\sigma^\mu\left(\lambda_1\right),\ldots,x^{m_\ell}-\sigma^\mu\left(\lambda_\ell\right)\right]. 
\end{equation}
Therefore, $\sigma^\mu\left(\mathbf{G}\right)$ is a GPM of a $\sigma^\mu\left(\Lambda\right)$-MT code $\mathcal{C}'$ over $\mathbb{F}_{q}$ of block lengths $\left(m_1,m_2,\ldots,m_\ell\right)$. Since $\sigma^\mu$ preserves the degree of the polynomials, we see that $\sigma^\mu\left(\mathbf{G}\right)$ is in Hermite normal form and $\mathcal{C}'$ has dimension $k$. 
Furthermore, $\mathcal{C}'\subseteq \sigma^\mu\left(\mathcal{C}\right)$ since the rows of $\sigma^\mu\left(\mathbf{G}\right)$ are codewords in $\sigma^\mu\left(\mathcal{C}\right)$. By the standard dimension argument, $\mathcal{C}' = \sigma^\mu\left(\mathcal{C}\right)$. Thus $\sigma^\mu\left(\mathbf{G}\right)$ is the reduced GPM of $\sigma^\mu\left(\mathcal{C}\right)$. Specifically, \eqref{Identical_eq_in_proof} shows that $\sigma^\mu\left(\mathbf{A}\right)$ is the matrix that satisfies the identical equation of $\sigma^\mu\left(\mathbf{G}\right)$.
\end{proof}

In \cite{Fan2016}, the Galois inner product is defined as an intrinsic generalization of the Euclidean inner product. Let $\mathbf{a}=\left(a_1,a_2,\ldots,a_n\right)$ and $\mathbf{b}=\left(b_1,b_2,\ldots,b_n\right)$ be two vectors in $\mathbb{F}_{q}^n$. The Euclidean inner product of $\mathbf{a}$ and $\mathbf{b}$, denoted $\langle\mathbf{a},\mathbf{b}\rangle$, is a symmetric bilinear form. For a fixed non-negative integer $\kappa <e$, define the $\kappa$-Galois inner product of $\mathbf{a}$ and $\mathbf{b}$ by the formula
\begin{equation*}
\langle \mathbf{a},\mathbf{b}\rangle_\kappa=\sum_{i=1}^n a_i \sigma^\kappa\left(b_i\right)=\sum_{i=1}^n a_i b_i^{p^\kappa}.
\end{equation*}
Clearly, $\langle \mathbf{a},\mathbf{b}\rangle_\kappa$ and $\langle \mathbf{b},\mathbf{a}\rangle_\kappa$ are not necessarily equal. For any $\alpha\in\mathbb{F}_{q}$, $\langle \alpha\mathbf{a},\mathbf{b}\rangle_\kappa=\alpha \langle \mathbf{a},\mathbf{b}\rangle_\kappa$ and $\langle \mathbf{a},\alpha\mathbf{b}\rangle_\kappa=\alpha^{p^\kappa}\langle \mathbf{a},\mathbf{b}\rangle_\kappa$. Thus, the Galois inner product is neither symmetric nor bilinear. In fact, $\langle \mathbf{a},\mathbf{b}\rangle=\langle \mathbf{a},\mathbf{b}\rangle_0$ for every $\mathbf{a},\mathbf{b}\in\mathbb{F}_{q}^n$. For a linear code $\mathcal{C}$ over $\mathbb{F}_{q}$ of length $n$, the Euclidean inner product is used to define the Euclidean dual $\mathcal{C}^\perp$ of $\mathcal{C}$. On using the same imitation for the Galois inner product, we have to define two distinct duals: the right $\kappa$-Galois dual $\mathcal{C}^{\perp_\kappa}$ and the $\kappa$-left Galois dual $\mathcal{C}_{\perp_\kappa}$.

\begin{definition}
\label{right_dual_def}
Let $\kappa<e$ be a non-negative integer and let $\mathcal{C}$ be a linear code over $\mathbb{F}_{q}$ of length $n$. Define the right $\kappa$-Galois dual of $\mathcal{C}$ as follows:
\begin{equation*}
\mathcal{C}^{\perp_\kappa}=\left\{ \mathbf{a}\in\mathbb{F}_{q}^n \ \mid\  \langle \mathbf{c},\mathbf{a} \rangle_\kappa=0 \ \forall\  \mathbf{c}\in\mathcal{C}\right\}.
\end{equation*}
\end{definition}

\begin{theorem}
\label{right_dual}
Let $\kappa<e$ be a non-negative integer and let $\mathcal{C}$ be a linear code over $\mathbb{F}_{q}$ of length $n$ and dimension $k$. The right $\kappa$-Galois dual $\mathcal{C}^{\perp_\kappa}$ of $\mathcal{C}$ is a linear code over $\mathbb{F}_{q}$ of length $n$ and dimension $n-k$. In fact, $\mathcal{C}^{\perp_\kappa}=\sigma^{e-\kappa}\left(\mathcal{C}^\perp\right)=\left(\sigma^{e-\kappa}\mathcal{C}\right)^\perp$. Suppose that $\mathcal{C}$ is $\Lambda$-MT of block lengths $\left(m_1,m_2,\ldots,m_\ell\right)$. Then $\mathcal{C}^{\perp_\kappa}$ is $\sigma^{e-\kappa}\left(\Delta\right)$-MT of block lengths $\left(m_1,m_2,\ldots,m_\ell\right)$ and has the reduced GPM $\sigma^{e-\kappa}\left(\mathbf{H}\right)$, where $\mathbf{H}$ is the reduced GPM of $\mathcal{C}^\perp$.
\end{theorem}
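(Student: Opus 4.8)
The plan is to reduce the whole statement to the Euclidean case already settled in Theorems \ref{MT_dual} and \ref{MT_Parity_H}, by absorbing the $\kappa$-th Frobenius power that appears in the Galois form into a semilinear change of the second argument, and then invoking Theorem \ref{Code_map}. First I would record two elementary identities: for all $\mathbf{x},\mathbf{y}\in\mathbb{F}_q^n$ one has $\langle\mathbf{x},\mathbf{y}\rangle_\kappa=\langle\mathbf{x},\sigma^\kappa(\mathbf{y})\rangle$ (immediate from the definitions), and for any automorphism $\tau$ of $\mathbb{F}_q$ one has $\tau\!\left(\langle\mathbf{x},\mathbf{y}\rangle\right)=\langle\tau(\mathbf{x}),\tau(\mathbf{y})\rangle$ (the Euclidean form is a sum of products and $\tau$ is a ring homomorphism). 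Combining them, $\mathbf{a}\in\mathcal{C}^{\perp_\kappa}$ iff $\langle\mathbf{c},\sigma^\kappa(\mathbf{a})\rangle=0$ for all $\mathbf{c}\in\mathcal{C}$, i.e. iff $\sigma^\kappa(\mathbf{a})\in\mathcal{C}^\perp$; since $\sigma$ has order $e$ this is iff $\mathbf{a}\in\sigma^{-\kappa}(\mathcal{C}^\perp)=\sigma^{e-\kappa}(\mathcal{C}^\perp)$, giving $\mathcal{C}^{\perp_\kappa}=\sigma^{e-\kappa}(\mathcal{C}^\perp)$. For the other equality, apply $\sigma^{e-\kappa}$ to the defining relation $\langle\mathbf{c},\mathbf{a}\rangle_\kappa=0$ and use $\sigma^{e-\kappa}\!\circ\sigma^\kappa=\mathrm{id}$ to rewrite it as $\langle\sigma^{e-\kappa}(\mathbf{c}),\mathbf{a}\rangle=0$; as $\mathbf{c}$ ranges over $\mathcal{C}$, $\sigma^{e-\kappa}(\mathbf{c})$ ranges over $\sigma^{e-\kappa}(\mathcal{C})$, so $\mathcal{C}^{\perp_\kappa}=(\sigma^{e-\kappa}\mathcal{C})^\perp$. (Alternatively, combine the first equality with the general fact $(\tau\mathcal{C})^\perp=\tau(\mathcal{C}^\perp)$, itself a one-line consequence of the two identities.)

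Linearity and the dimension count then come for free. By the standard Euclidean theory recalled in Section \ref{preliminaries}, $\mathcal{C}^\perp$ is linear of dimension $n-k$, and by Theorem \ref{Code_map} the map $\sigma^{e-\kappa}$ sends a linear code of dimension $n-k$ to a linear code of dimension $n-k$. Hence $\mathcal{C}^{\perp_\kappa}=\sigma^{e-\kappa}(\mathcal{C}^\perp)$ is linear of dimension $n-k$; the group isomorphism between $\mathcal{C}^\perp$ and its image is part of the same theorem. (If one prefers, $\mathbb{F}_q$-linearity of $\mathcal{C}^{\perp_\kappa}$ can also be checked straight from the definition via $(\alpha a_i+\beta b_i)^{p^\kappa}=\alpha^{p^\kappa}a_i^{p^\kappa}+\beta^{p^\kappa}b_i^{p^\kappa}$.)

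For the MT part, assume $\mathcal{C}$ is $\Lambda$-MT of block lengths $(m_1,\dots,m_\ell)$. Theorem \ref{MT_dual} gives that $\mathcal{C}^\perp$ is $\Delta$-MT of the same block lengths, and we take $\mathbf{H}$ to be its reduced GPM. Now apply Theorem \ref{Code_map} to the code $\mathcal{C}^\perp$ with exponent $\mu=e-\kappa$: it yields that $\sigma^{e-\kappa}(\mathcal{C}^\perp)$ is $\sigma^{e-\kappa}(\Delta)$-MT of block lengths $(m_1,\dots,m_\ell)$ with reduced GPM $\sigma^{e-\kappa}(\mathbf{H})$ (note $\sigma^{e-\kappa}$ preserves degrees, monicity and upper-triangularity, so Hermite form is preserved). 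Since $\mathcal{C}^{\perp_\kappa}=\sigma^{e-\kappa}(\mathcal{C}^\perp)$ by the first part, this is precisely the asserted description. There is no genuine obstacle here; the only point requiring care is the bookkeeping of Frobenius exponents — that the semilinear substitution acts on the right-hand argument and that $\sigma^{-\kappa}=\sigma^{e-\kappa}$ — together with the trivial verification that a field automorphism commutes with the Euclidean form. Everything substantive (the Euclidean dual of an MT code and the behaviour of reduced GPMs under $\sigma^\mu$) is already supplied by Theorems \ref{MT_dual}, \ref{MT_Parity_H}, and \ref{Code_map}.
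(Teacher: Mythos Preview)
Your proof is correct and follows essentially the same approach as the paper's own proof: both reduce to the Euclidean case via the identity $\langle\mathbf{x},\mathbf{y}\rangle_\kappa=\langle\mathbf{x},\sigma^\kappa(\mathbf{y})\rangle$, obtain $\mathcal{C}^{\perp_\kappa}=\sigma^{e-\kappa}(\mathcal{C}^\perp)=(\sigma^{e-\kappa}\mathcal{C})^\perp$, and then invoke Theorems \ref{MT_dual} and \ref{Code_map} for the dimension, the MT structure, and the reduced GPM. Your write-up is a bit more explicit about the underlying identities, but the strategy and the cited ingredients are the same (Theorem \ref{MT_Parity_H} is not actually needed here, only \ref{MT_dual} and \ref{Code_map}).
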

\begin{proof}
Observe that $\langle \mathbf{a},\mathbf{b}\rangle_\kappa=\langle \mathbf{a},\sigma^{\kappa}\mathbf{b}\rangle$ for any $\mathbf{a},\mathbf{b}\in\mathbb{F}_{q}^n$. Thus $\mathcal{C}^{\perp_\kappa}=\sigma^{e-\kappa}\left(\mathcal{C}^\perp\right)$. Also observe that $\langle \sigma^{e-\kappa}\mathbf{a},\mathbf{b}\rangle=0$ if and only if $\langle \mathbf{a},\sigma^{\kappa}\mathbf{b}\rangle=0$. Thus $\sigma^{e-\kappa}\left(\mathcal{C}^\perp\right)=\left(\sigma^{e-\kappa}\mathcal{C}\right)^\perp$. Since $\sigma^{e-\kappa}: \mathcal{C}^\perp \rightarrow \mathcal{C}^{\perp_\kappa}$ is a group isomorphism, $\mathcal{C}^{\perp_\kappa}$ is linear of dimension $n-k$ by Theorem \ref{Code_map}. Now assume that $\mathcal{C}$ is $\Lambda$-MT and recall from Theorem \ref{MT_dual} that $\mathcal{C}^\perp$ is $\Delta$-MT. Then by Theorem \ref{Code_map}, $\mathcal{C}^{\perp_\kappa}$ is $\sigma^{e-\kappa}\left(\Delta\right)$-MT with reduced GPM $\sigma^{e-\kappa}\left(\mathbf{H}\right)$.
\end{proof}

Analogously to Definition \ref{right_dual_def} and Theorem \ref{right_dual}, we define and investigate the left $\kappa$-Galois dual.
\begin{definition}
Let $\kappa<e$ be a non-negative integer and let $\mathcal{C}$ be a linear code over $\mathbb{F}_{q}$ of length $n$. The left $\kappa$-Galois dual $\mathcal{C}_{\perp_\kappa}$ of $\mathcal{C}$ is defined as the subset of $\mathbb{F}_{q}^n$ for which $\left(\mathcal{C}_{\perp_\kappa}\right)^{\perp_\kappa}=\mathcal{C}$, or in other words,
\begin{equation*}
\mathcal{C}_{\perp_\kappa}=\left\{ \mathbf{a}\in\mathbb{F}_{q}^n \ \mid\  \langle \mathbf{a},\mathbf{c} \rangle_\kappa=0 \ \forall\  \mathbf{c}\in\mathcal{C}\right\}.
\end{equation*}
\end{definition}

\begin{theorem}
\label{left_dual}
Let $\kappa<e$ be a non-negative integer and let $\mathcal{C}$ be a linear code over $\mathbb{F}_{q}$ of length $n$ and dimension $k$.
The left $\kappa$-Galois dual $\mathcal{C}_{\perp_\kappa}$ of $\mathcal{C}$ is a linear code over $\mathbb{F}_{q}$ of length $n$ and dimension $n-k$. In fact, $\mathcal{C}_{\perp_\kappa}=\sigma^{\kappa}\left(\mathcal{C}^\perp\right)=\left(\sigma^{\kappa}\mathcal{C}\right)^\perp$. Suppose that $\mathcal{C}$ is $\Lambda$-MT of block lengths $\left(m_1,m_2,\ldots,m_\ell\right)$. Then $\mathcal{C}_{\perp_\kappa}$ is $\sigma^\kappa\left(\Delta\right)$-MT of block lengths $\left(m_1,m_2,\ldots,m_\ell\right)$ and has the reduced GPM $\sigma^{\kappa}\left(\mathbf{H}\right)$, where $\mathbf{H}$ is the reduced GPM of $\mathcal{C}^\perp$.
\end{theorem}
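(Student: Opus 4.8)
The plan is to mirror the proof of Theorem \ref{right_dual}, exploiting that the twist $\sigma^\kappa$ inside the Galois form can be moved onto the other argument. First I would record the identity $\langle \mathbf{a},\mathbf{b}\rangle_\kappa=\langle \mathbf{a},\sigma^\kappa\mathbf{b}\rangle=\langle \sigma^\kappa\mathbf{b},\mathbf{a}\rangle$ for all $\mathbf{a},\mathbf{b}\in\mathbb{F}_{q}^n$, the second equality being the symmetry of the Euclidean form. Consequently $\mathbf{a}\in\mathcal{C}_{\perp_\kappa}$ if and only if $\langle \sigma^\kappa\mathbf{c},\mathbf{a}\rangle=0$ for every $\mathbf{c}\in\mathcal{C}$, that is, if and only if $\mathbf{a}\in\left(\sigma^\kappa\mathcal{C}\right)^\perp$; hence $\mathcal{C}_{\perp_\kappa}=\left(\sigma^\kappa\mathcal{C}\right)^\perp$. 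I would also remark that this is consistent with the first, implicit, characterization in the definition, since applying $\perp_\kappa$ twice returns $\mathcal{C}$.

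Next I would show $\left(\sigma^\kappa\mathcal{C}\right)^\perp=\sigma^\kappa\left(\mathcal{C}^\perp\right)$. Because $\sigma$ is a field automorphism of $\mathbb{F}_{q}$, it commutes with the Euclidean form in the sense that $\sigma^\kappa\langle\mathbf{a},\mathbf{b}\rangle=\langle\sigma^\kappa\mathbf{a},\sigma^\kappa\mathbf{b}\rangle$, so $\langle\mathbf{a},\mathbf{b}\rangle=0$ if and only if $\langle\sigma^\kappa\mathbf{a},\sigma^\kappa\mathbf{b}\rangle=0$; thus $\mathbf{a}\in\mathcal{C}^\perp$ if and only if $\sigma^\kappa\mathbf{a}\in\left(\sigma^\kappa\mathcal{C}\right)^\perp$, which is the claimed equality. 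Combining the two steps gives $\mathcal{C}_{\perp_\kappa}=\sigma^{\kappa}\left(\mathcal{C}^\perp\right)=\left(\sigma^{\kappa}\mathcal{C}\right)^\perp$.

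For the remaining assertions I would invoke the structural results already proved. Since $\sigma^\kappa:\mathcal{C}^\perp\to\mathcal{C}_{\perp_\kappa}$ is an additive group isomorphism, Theorem \ref{Code_map} gives that $\mathcal{C}_{\perp_\kappa}$ is linear over $\mathbb{F}_{q}$ of length $n$ and of the same dimension as $\mathcal{C}^\perp$, namely $n-k$. When $\mathcal{C}$ is $\Lambda$-MT of block lengths $(m_1,m_2,\ldots,m_\ell)$, Theorem \ref{MT_dual} says $\mathcal{C}^\perp$ is $\Delta$-MT of the same block lengths, with reduced GPM $\mathbf{H}$ and a known matrix satisfying its identical equation; then Theorem \ref{Code_map}, applied to $\mathcal{C}^\perp$ with the automorphism $\sigma^\kappa$, shows that $\sigma^\kappa(\mathcal{C}^\perp)=\mathcal{C}_{\perp_\kappa}$ is $\sigma^\kappa(\Delta)$-MT of block lengths $(m_1,m_2,\ldots,m_\ell)$ with reduced GPM $\sigma^\kappa(\mathbf{H})$.

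I do not expect a genuine obstacle here; the argument is essentially bookkeeping, and the only point needing a little care is the placement of $\sigma^\kappa$ versus $\sigma^{e-\kappa}$. In contrast with the right dual, where $\langle\mathbf{a},\mathbf{b}\rangle_\kappa=\langle\mathbf{a},\sigma^\kappa\mathbf{b}\rangle$ forces one to pull $\sigma$ back through $\sigma^{e-\kappa}$, here $\sigma^\kappa$ sits directly on the fixed second argument, so no inverse appears. It is also worth double-checking that $\sigma^\kappa(\mathbf{H})$ remains in Hermite normal form, which is immediate because $\sigma^\kappa$ preserves degrees and sends monic polynomials to monic polynomials; this is exactly the point already used inside the proof of Theorem \ref{Code_map}.
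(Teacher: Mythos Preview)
Your proof is correct and follows essentially the same approach as the paper: both reduce the left Galois dual to the Euclidean dual via the identity $\langle\mathbf{a},\mathbf{b}\rangle_\kappa=\langle\mathbf{a},\sigma^\kappa\mathbf{b}\rangle$ and then invoke Theorem~\ref{Code_map} and Theorem~\ref{MT_dual}. The only cosmetic difference is that the paper obtains $\mathcal{C}_{\perp_\kappa}=\sigma^\kappa(\mathcal{C}^\perp)$ directly from the identity $\langle\mathbf{a},\mathbf{b}\rangle_\kappa=\sigma^\kappa\bigl(\langle\sigma^{e-\kappa}\mathbf{a},\mathbf{b}\rangle\bigr)$, whereas you first establish $\mathcal{C}_{\perp_\kappa}=(\sigma^\kappa\mathcal{C})^\perp$ and then deduce $(\sigma^\kappa\mathcal{C})^\perp=\sigma^\kappa(\mathcal{C}^\perp)$ from the general fact that $\sigma^\kappa$ commutes with taking Euclidean duals.
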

\begin{proof}
Observe that $\langle \mathbf{a},\mathbf{b}\rangle_\kappa=\langle \mathbf{a},\sigma^\kappa\mathbf{b}\rangle=\sigma^\kappa \left(\langle \sigma^{e-\kappa}\mathbf{a},\mathbf{b}\rangle\right)$ for any $\mathbf{a},\mathbf{b}\in\mathbb{F}_{q}^n$. Thus $\mathcal{C}_{\perp_\kappa}=\sigma^{\kappa}\left(\mathcal{C}^\perp\right)=\left(\sigma^{\kappa}\mathcal{C}\right)^\perp$. Since $\sigma^{\kappa}: \mathcal{C}^\perp \rightarrow \mathcal{C}_{\perp_\kappa}$ is a group isomorphism, $\mathcal{C}_{\perp_\kappa}$ is linear of dimension $n-k$ by Theorem \ref{Code_map}. The last part can be proven in a way similar to that of Theorem \ref{right_dual}.
\end{proof}

In the literature and to the extent of our knowledge, left and right Galois duals were not jointly discussed in the same study. For instance, the Galois dual of a linear code is defined in \cite{Hongwei} similar to our definition of the right Galois dual. Whereas in \cite{Sharma}, the Galois dual is defined similar to our definition of the left Galois dual. We found it useful to include both Galois duals so that we can examine their interrelationships. One of these benefits is the following result.

\begin{theorem}
\label{Properties}
Let $\kappa<e$ be a non-negative integer and let $\mathcal{C}$ be a linear code over $\mathbb{F}_{q}$. Then 
\begin{enumerate}
\item $\left(\mathcal{C}^{\perp_\kappa}\right)_{\perp_\kappa}=\mathcal{C}$.
\item $\left( \sigma^\kappa \mathcal{C}\right)^{\perp_\kappa}=\mathcal{C}^{\perp}=\sigma^\kappa\left(  \mathcal{C}^{\perp_\kappa}\right)$.\label{item2}
\item $\left( \sigma^{e-\kappa} \mathcal{C}\right)_{\perp_\kappa}=\mathcal{C}^{\perp}=\sigma^{e-\kappa}\left(  \mathcal{C}_{\perp_\kappa}\right)$.
\item $\mathcal{C}^{\perp_\kappa}=\left( \sigma^{2(e-\kappa)} \mathcal{C}\right)_{\perp_\kappa}=\sigma^{2(e-\kappa)}\left(  \mathcal{C}_{\perp_\kappa}\right)$.
\item $\mathcal{C}_{\perp_\kappa}=\left( \sigma^{2\kappa} \mathcal{C}\right)^{\perp_\kappa}=\sigma^{2\kappa}\left(  \mathcal{C}^{\perp_\kappa}\right)$.
\item $\mathcal{C}^{\perp_\kappa}=\mathcal{C}_{\perp_\kappa}$ if and only if $\sigma^{2\kappa}\left(\mathcal{C}\right)=\mathcal{C}$.
\end{enumerate}
\end{theorem}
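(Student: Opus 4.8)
The plan is to reduce every one of the six assertions to the single pair of identities furnished by Theorems \ref{right_dual} and \ref{left_dual},
\[
\mathcal{C}^{\perp_\kappa}=\sigma^{e-\kappa}\!\left(\mathcal{C}^\perp\right),\qquad
\mathcal{C}_{\perp_\kappa}=\sigma^{\kappa}\!\left(\mathcal{C}^\perp\right),
\]
together with three elementary facts: (i) $\sigma^e=\mathrm{id}$ on $\mathbb{F}_q$, hence on codes, so all exponents of $\sigma$ may be read modulo $e$; (ii) $\left(\mathcal{C}^\perp\right)^\perp=\mathcal{C}$; and (iii) $\sigma^\mu$ commutes with the Euclidean dual, $\sigma^\mu\!\left(\mathcal{C}^\perp\right)=\left(\sigma^\mu\mathcal{C}\right)^\perp$ for every $\mu$, which is immediate from comparing the two descriptions of $\mathcal{C}^{\perp_\kappa}$ in Theorem \ref{right_dual}. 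From (i)--(iii) I would first record the two commutation rules $\sigma^\mu\!\left(\mathcal{C}^{\perp_\kappa}\right)=\left(\sigma^\mu\mathcal{C}\right)^{\perp_\kappa}$ and $\sigma^\mu\!\left(\mathcal{C}_{\perp_\kappa}\right)=\left(\sigma^\mu\mathcal{C}\right)_{\perp_\kappa}$, both obtained by substituting the defining identities and pushing $\sigma^\mu$ through $\perp$.

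Each item is then a short computation. For item 1, $\left(\mathcal{C}^{\perp_\kappa}\right)_{\perp_\kappa}=\sigma^{\kappa}\!\left(\left(\mathcal{C}^{\perp_\kappa}\right)^\perp\right)=\sigma^{\kappa}\sigma^{e-\kappa}\!\left(\left(\mathcal{C}^\perp\right)^\perp\right)=\sigma^e(\mathcal{C})=\mathcal{C}$. Items 2 and 3 are mirror images of one another under $\kappa\leftrightarrow e-\kappa$: e.g. $\left(\sigma^\kappa\mathcal{C}\right)^{\perp_\kappa}=\sigma^{e-\kappa}\sigma^\kappa\!\left(\mathcal{C}^\perp\right)=\sigma^e\!\left(\mathcal{C}^\perp\right)=\mathcal{C}^\perp$, while $\sigma^\kappa\!\left(\mathcal{C}^{\perp_\kappa}\right)=\sigma^\kappa\sigma^{e-\kappa}\!\left(\mathcal{C}^\perp\right)=\mathcal{C}^\perp$. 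Items 4 and 5 insert an extra power of $\sigma$: for item 5, $\left(\sigma^{2\kappa}\mathcal{C}\right)^{\perp_\kappa}=\sigma^{e-\kappa}\sigma^{2\kappa}\!\left(\mathcal{C}^\perp\right)=\sigma^{e+\kappa}\!\left(\mathcal{C}^\perp\right)=\sigma^{\kappa}\!\left(\mathcal{C}^\perp\right)=\mathcal{C}_{\perp_\kappa}$ and $\sigma^{2\kappa}\!\left(\mathcal{C}^{\perp_\kappa}\right)=\sigma^{2\kappa}\sigma^{e-\kappa}\!\left(\mathcal{C}^\perp\right)=\sigma^{e+\kappa}\!\left(\mathcal{C}^\perp\right)=\mathcal{C}_{\perp_\kappa}$; item 4 follows identically after replacing $\sigma^{2\kappa}$ by $\sigma^{2(e-\kappa)}$ and computing $\sigma^{\kappa+2(e-\kappa)}=\sigma^{2e-\kappa}=\sigma^{e-\kappa}$.

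For item 6 I would argue as follows. By item 5, $\mathcal{C}_{\perp_\kappa}=\sigma^{2\kappa}\!\left(\mathcal{C}^{\perp_\kappa}\right)$, so $\mathcal{C}^{\perp_\kappa}=\mathcal{C}_{\perp_\kappa}$ holds precisely when $\mathcal{C}^{\perp_\kappa}$ is fixed by $\sigma^{2\kappa}$. By the commutation rule, $\sigma^{2\kappa}\!\left(\mathcal{C}^{\perp_\kappa}\right)=\left(\sigma^{2\kappa}\mathcal{C}\right)^{\perp_\kappa}$, so this is equivalent to $\left(\sigma^{2\kappa}\mathcal{C}\right)^{\perp_\kappa}=\mathcal{C}^{\perp_\kappa}$; applying $(\cdot)_{\perp_\kappa}$ to both sides and using that $(\cdot)^{\perp_\kappa}$ and $(\cdot)_{\perp_\kappa}$ are mutually inverse bijections on linear codes — which is item 1 together with the left--right mirror $\left(\mathcal{C}_{\perp_\kappa}\right)^{\perp_\kappa}=\mathcal{C}$ (the defining property of the left dual) — gives $\sigma^{2\kappa}(\mathcal{C})=\mathcal{C}$, and the converse implication runs the same chain in reverse.

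I do not expect a genuine obstacle: the entire content is the bookkeeping of $\sigma$-exponents modulo $e$ and the discipline of tracking which of the two non-symmetric Galois duals is being applied at each step. The only point that warrants explicit care is the bijectivity statement used in item 6, namely that the right and left $\kappa$-Galois duals are inverse operations; this is precisely item 1 and its mirror, so it should be presented before item 6 is settled.
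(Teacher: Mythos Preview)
Your proposal is correct and follows essentially the same approach as the paper: both reduce everything to the identities $\mathcal{C}^{\perp_\kappa}=\sigma^{e-\kappa}(\mathcal{C}^\perp)=(\sigma^{e-\kappa}\mathcal{C})^\perp$ and $\mathcal{C}_{\perp_\kappa}=\sigma^{\kappa}(\mathcal{C}^\perp)=(\sigma^{\kappa}\mathcal{C})^\perp$ from Theorems~\ref{right_dual} and~\ref{left_dual}, together with $(\mathcal{C}^\perp)^\perp=\mathcal{C}$ and $\sigma^e=\mathrm{id}$. The only cosmetic difference is that you isolate the commutation rules $\sigma^\mu(\mathcal{C}^{\perp_\kappa})=(\sigma^\mu\mathcal{C})^{\perp_\kappa}$ up front and compute items~4--5 directly, whereas the paper obtains them by substituting $\sigma^{e-\kappa}\mathcal{C}$ (resp.\ $\sigma^{\kappa}\mathcal{C}$) into the already-proved items~2--3.
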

\begin{proof}
By Theorems \ref{right_dual} and \ref{left_dual}, we have
\begin{enumerate}
\item $\left(\mathcal{C}^{\perp_\kappa}\right)_{\perp_\kappa}=\left(\left(\sigma^{e-\kappa}\mathcal{C}\right)^\perp\right)_{\perp_\kappa}=\sigma^{\kappa}\left(\left(\sigma^{e-\kappa}\mathcal{C}\right)^{\perp\perp}\right)=\sigma^{\kappa}\left(\sigma^{e-\kappa}\mathcal{C}\right)=\mathcal{C}$.
\item $\left( \sigma^\kappa \mathcal{C}\right)^{\perp_\kappa}=\left( \sigma^{e-\kappa}\left(\sigma^\kappa \mathcal{C}\right)\right)^{\perp}=\mathcal{C}^{\perp}$. But, by Theorem \ref{right_dual}, $\mathcal{C}^\perp=\sigma^{\kappa}\left(\mathcal{C}^{\perp_\kappa}\right)$.
\item $\left( \sigma^{e-\kappa} \mathcal{C}\right)_{\perp_\kappa}=\left( \sigma^\kappa\left(\sigma^{e-\kappa} \mathcal{C}\right)\right)^{\perp}=\mathcal{C}^{\perp}$. But, by Theorem \ref{left_dual}, $\mathcal{C}^\perp=\sigma^{e-\kappa}\left(\mathcal{C}_{\perp_\kappa}\right)$.
\item The result follows by replacing $\mathcal{C}$ with $\sigma^{e-\kappa} \mathcal{C}$ in $\left( \sigma^\kappa \mathcal{C}\right)^{\perp_\kappa}=\left( \sigma^{e-\kappa} \mathcal{C}\right)_{\perp_\kappa}=\sigma^{e-\kappa}\left(  \mathcal{C}_{\perp_\kappa}\right)$ and using $\left( \sigma^{e-\kappa} \mathcal{C}\right)_{\perp_\kappa}=\sigma^{e-\kappa}\left(  \mathcal{C}_{\perp_\kappa}\right)$.
\item The result follows by replacing $\mathcal{C}$ with $\sigma^\kappa \mathcal{C}$ in $\left( \sigma^{e-\kappa} \mathcal{C}\right)_{\perp_\kappa}=\left( \sigma^\kappa \mathcal{C}\right)^{\perp_\kappa}=\sigma^\kappa\left(  \mathcal{C}^{\perp_\kappa}\right)$ and using $\left( \sigma^\kappa \mathcal{C}\right)^{\perp_\kappa}=\sigma^\kappa\left(  \mathcal{C}^{\perp_\kappa}\right)$.
\item Assume that $\sigma^{2\kappa}\left(\mathcal{C}\right)=\mathcal{C}$. Then $\mathcal{C}_{\perp_\kappa}=\left( \sigma^{2\kappa} \mathcal{C}\right)^{\perp_\kappa}=\mathcal{C}^{\perp_\kappa}$. Conversely, assume that $\mathcal{C}_{\perp_\kappa}=\mathcal{C}^{\perp_\kappa}$. It follows that $\left( \sigma^{2\kappa} \mathcal{C}\right)^{\perp_\kappa}=\mathcal{C}^{\perp_\kappa}$, hence
\[\mathcal{C}=\left(\mathcal{C}^{\perp_\kappa}\right)_{\perp_\kappa}=\left(\left( \sigma^{2\kappa} \mathcal{C}\right)^{\perp_\kappa}\right)_{\perp_\kappa}=\sigma^{2\kappa} \left(\mathcal{C}\right).\]
\end{enumerate}
\end{proof}

Throughout Theorems \ref{Maiin1}--\ref{Maiin6}, we will adopt the following notations without introducing them. Let $\kappa<e$ be a non-negative integer and choose a positive integer $\tau$ such that $e\mid4\kappa\tau$. Let $\lambda_j\in\mathbb{F}_{p^\upsilon}$ for $1\le j\le \ell$, where $\upsilon=\mathrm{gcd}\left(e,2\kappa\tau\right)$. Let $\mathcal{C}$ be a $\Lambda$-MT code over $\mathbb{F}_{q}$ of block lengths $\left(m_1,m_2,\ldots,m_\ell\right)$. The reduced GPM of $\mathcal{C}$ is $\mathbf{G}$, while $\mathbf{A}$ is the matrix that satisfies the identical equation of $\mathbf{G}$. The reduced GPM of $\mathcal{C}^\perp$ is $\mathbf{H}$, while $\mathbf{B}$ is the matrix that satisfies the identical equation of $\mathbf{H}$. Our first goal is to provide some results for the codes $\mathcal{C}^{\perp_\kappa}$ and $\sigma^{2\kappa\tau}\left(\mathcal{C}^{\perp_\kappa}\right)$.

\begin{theorem}
\label{Maiin1}
Both $\mathcal{C}^{\perp_\kappa}$ and $\sigma^{2\kappa\tau}\left(\mathcal{C}^{\perp_\kappa}\right)$ are $\sigma^{e-\kappa}\left(\Delta\right)$-MT codes over $\mathbb{F}_{q}$ of block lengths $\left(m_1,m_2,\ldots,m_\ell\right)$.
\end{theorem}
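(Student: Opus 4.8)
The first assertion needs no new argument: Theorem~\ref{right_dual} already states that $\mathcal{C}^{\perp_\kappa}$ is a $\sigma^{e-\kappa}(\Delta)$-MT code over $\mathbb{F}_{q}$ of block lengths $\left(m_1,m_2,\ldots,m_\ell\right)$, where $\Delta=\left(1/\lambda_1,1/\lambda_2,\ldots,1/\lambda_\ell\right)$. So the plan is to reduce the second assertion to the first one by invoking Theorem~\ref{Code_map}. Applying that theorem with the code $\mathcal{C}^{\perp_\kappa}$ and the automorphism $\sigma^{2\kappa\tau}$ shows that $\sigma^{2\kappa\tau}\left(\mathcal{C}^{\perp_\kappa}\right)$ is again a MT code with the same block lengths $\left(m_1,m_2,\ldots,m_\ell\right)$ and with shift-constant tuple $\sigma^{2\kappa\tau}\bigl(\sigma^{e-\kappa}(\Delta)\bigr)$. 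Thus everything reduces to verifying the identity $\sigma^{2\kappa\tau}\bigl(\sigma^{e-\kappa}(\Delta)\bigr)=\sigma^{e-\kappa}(\Delta)$.

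To prove this identity I would work entrywise. By hypothesis $\lambda_j\in\mathbb{F}_{p^\upsilon}$ with $\upsilon=\gcd(e,2\kappa\tau)$, and since $\mathbb{F}_{p^\upsilon}$ is a field we get $1/\lambda_j\in\mathbb{F}_{p^\upsilon}$ as well, so every entry of $\Delta$ lies in $\mathbb{F}_{p^\upsilon}$. Next, $\mathbb{F}_{p^\upsilon}$ is the unique subfield of $\mathbb{F}_{q}$ of order $p^\upsilon$, hence it is carried onto itself by every automorphism of $\mathbb{F}_{q}$; in particular the entries of $\sigma^{e-\kappa}(\Delta)$ again lie in $\mathbb{F}_{p^\upsilon}$. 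Finally, recall the fact established earlier in this section that $\sigma^\mu$ fixes an element of $\mathbb{F}_{q}$ precisely when that element lies in $\mathbb{F}_{p^{\gcd(e,\mu)}}$; taking $\mu=2\kappa\tau$ gives $\gcd(e,2\kappa\tau)=\upsilon$, so $\sigma^{2\kappa\tau}$ fixes $\mathbb{F}_{p^\upsilon}$ pointwise. Applying this to the entries of $\sigma^{e-\kappa}(\Delta)$ yields the desired identity, and hence $\sigma^{2\kappa\tau}\left(\mathcal{C}^{\perp_\kappa}\right)$ is $\sigma^{e-\kappa}(\Delta)$-MT of block lengths $\left(m_1,m_2,\ldots,m_\ell\right)$, exactly as $\mathcal{C}^{\perp_\kappa}$ is.

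I do not expect a genuine obstacle here; the only point that needs attention is the bookkeeping of which Frobenius power acts on what, together with the observation that the hypothesis $\lambda_j\in\mathbb{F}_{p^\upsilon}$ is exactly what makes $\sigma^{e-\kappa}(\Delta)$ invariant under $\sigma^{2\kappa\tau}$. Note that the stronger divisibility assumption $e\mid 4\kappa\tau$ is not actually used in this particular statement; it will presumably enter in the subsequent theorems, where the left Galois dual $\mathcal{C}_{\perp_\kappa}$ is brought into the picture and one needs $\sigma^{2\kappa\tau}$ to interchange the roles of the two one-sided duals.
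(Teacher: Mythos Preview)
Your proposal is correct and follows essentially the same approach as the paper: both invoke Theorem~\ref{right_dual} for $\mathcal{C}^{\perp_\kappa}$, then apply Theorem~\ref{Code_map} with $\sigma^{2\kappa\tau}$ and check that $\sigma^{2\kappa\tau}\bigl(\sigma^{e-\kappa}(\Delta)\bigr)=\sigma^{e-\kappa}(\Delta)$ using $\sigma^{e-\kappa}(\Delta)\in\mathbb{F}_{p^\upsilon}^\ell$. Your version simply spells out in more detail why the entries of $\sigma^{e-\kappa}(\Delta)$ lie in $\mathbb{F}_{p^\upsilon}$, and your observation that the hypothesis $e\mid 4\kappa\tau$ is not used here is accurate.
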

\begin{proof}
Since $\sigma^{2\kappa\tau}:\mathcal{C}^{\perp_\kappa}\rightarrow \sigma^{2\kappa\tau}\left(\mathcal{C}^{\perp_\kappa}\right)$ is a group isomorphism, Theorems \ref{Code_map} and \ref{right_dual} assert that $\sigma^{2\kappa\tau}\left(\mathcal{C}^{\perp_\kappa}\right)$ is a $\sigma^{2\kappa\tau}\left(\sigma^{e-\kappa}\left(\Delta\right)\right)$-MT code. In fact, $\sigma^{2\kappa\tau}\left(\sigma^{e-\kappa}\left(\Delta\right)\right)=\sigma^{e-\kappa}\left(\Delta\right)$ because $\sigma^{e-\kappa}\left(\Delta\right)\in\mathbb{F}_{p^\upsilon}^\ell$ and $\sigma^{2\kappa\tau}$ fixes all elements of $\mathbb{F}_{p^\upsilon}$.
\end{proof}

\begin{theorem}
\label{Maiin2}
The reduced GPM of $\mathcal{C}^{\perp_\kappa}$ is $\sigma^{e-\kappa}\left(\mathbf{H}\right)$, while the reduced GPM of $\sigma^{2\kappa\tau}\left(\mathcal{C}^{\perp_\kappa}\right)$ is $\sigma^{\kappa(2\tau-1)}\left(\mathbf{H}\right)$.
\end{theorem}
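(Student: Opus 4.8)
The first assertion requires nothing new: Theorem \ref{right_dual} already states that, when $\mathcal{C}$ is $\Lambda$-MT with reduced GPM $\mathbf{G}$ and $\mathcal{C}^\perp$ has reduced GPM $\mathbf{H}$, the code $\mathcal{C}^{\perp_\kappa}$ is $\sigma^{e-\kappa}(\Delta)$-MT with reduced GPM $\sigma^{e-\kappa}(\mathbf{H})$. So the work is entirely in the second assertion, and the plan is to reduce it to a single application of Theorem \ref{Code_map} after some exponent bookkeeping.

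First I would use Theorem \ref{right_dual} to write $\mathcal{C}^{\perp_\kappa}=\sigma^{e-\kappa}(\mathcal{C}^\perp)$. Because $\sigma$ has order $e$ in the Galois group, the induced maps $\sigma^\mu$ on $\mathbb{F}_{q}^n$, on $\mathbb{F}_{q}[x]$, and on matrices over $\mathbb{F}_{q}[x]$ all depend only on $\mu \bmod e$; hence
\[
\sigma^{2\kappa\tau}\bigl(\mathcal{C}^{\perp_\kappa}\bigr)=\sigma^{2\kappa\tau}\bigl(\sigma^{e-\kappa}(\mathcal{C}^\perp)\bigr)=\sigma^{\,e+\kappa(2\tau-1)}(\mathcal{C}^\perp)=\sigma^{\kappa(2\tau-1)}(\mathcal{C}^\perp).
\]
Thus it suffices to identify the reduced GPM of $\sigma^{\kappa(2\tau-1)}(\mathcal{C}^\perp)$. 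Applying Theorem \ref{Code_map} to the $\Delta$-MT code $\mathcal{C}^\perp$ (reduced GPM $\mathbf{H}$, identical-equation matrix $\mathbf{B}$) with the automorphism $\sigma^\mu$, $\mu\equiv\kappa(2\tau-1)\pmod e$, and using that $\sigma^\mu$ preserves polynomial degrees (so $\sigma^{\kappa(2\tau-1)}(\mathbf{H})$ is still in Hermite normal form), gives that $\sigma^{\kappa(2\tau-1)}(\mathbf{H})$ is the reduced GPM of $\sigma^{\kappa(2\tau-1)}(\mathcal{C}^\perp)=\sigma^{2\kappa\tau}(\mathcal{C}^{\perp_\kappa})$, which is the claim. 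For a consistency check with Theorem \ref{Maiin1}: since $\lambda_j\in\mathbb{F}_{p^\upsilon}$ with $\upsilon=\gcd(e,2\kappa\tau)$, the tuple $\Delta$ is fixed by $\sigma^{2\kappa\tau}$, so $\sigma^{\kappa(2\tau-1)}(\Delta)=\sigma^{-\kappa}\bigl(\sigma^{2\kappa\tau}(\Delta)\bigr)=\sigma^{e-\kappa}(\Delta)$, matching the shift constant recorded there.

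There is no genuinely hard step here: the theorem is essentially a bookkeeping consequence of Theorem \ref{right_dual} and Theorem \ref{Code_map}. The only point that needs care is to remember that $\sigma^\mu$ depends only on $\mu\bmod e$, so that the exponent $2\kappa\tau+e-\kappa$ collapses to $\kappa(2\tau-1)$; everything else — that the image remains MT, that the image of a reduced GPM under $\sigma^\mu$ is again reduced, and that it continues to satisfy the corresponding identical equation — is already packaged in Theorem \ref{Code_map}.
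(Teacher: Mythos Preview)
Your argument is correct and follows exactly the same route as the paper: identify $\mathcal{C}^{\perp_\kappa}=\sigma^{e-\kappa}(\mathcal{C}^\perp)$, compose with $\sigma^{2\kappa\tau}$ and reduce the exponent modulo $e$ to $\kappa(2\tau-1)$, then invoke Theorem~\ref{Code_map}. The paper's proof is just a one-line version of what you wrote out in detail.
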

\begin{proof}
This is evident from Theorem \ref{Code_map} after noticing that $\mathcal{C}^{\perp_\kappa}=\sigma^{e-\kappa}\left(\mathcal{C}^\perp\right)$ and $\sigma^{2\kappa\tau}\left(\mathcal{C}^{\perp_\kappa}\right)=\sigma^{2\kappa\tau}\left(\sigma^{e-\kappa}\left(\mathcal{C}^\perp\right)\right)=\sigma^{\kappa(2\tau-1)}\left(\mathcal{C}^\perp\right)$.
\end{proof}
There is another result to be obtained from Theorem \ref{Maiin2} by applying $\sigma^{e-\kappa}$ and $\sigma^{\kappa(2\tau-1)}$ to the identical equation of $\mathbf{H}$. Namely, $\sigma^{e-\kappa}\left(\mathbf{B}\right)$ and $\sigma^{\kappa(2\tau-1)}\left(\mathbf{B}\right)$  are the two matrices that satisfy the identical equations of $\sigma^{e-\kappa}\left(\mathbf{H}\right)$ and $\sigma^{\kappa(2\tau-1)}\left(\mathbf{H}\right)$ respectively.

\begin{theorem}
\label{Maiin3}
The following conditions are equivalent.
\begin{enumerate}
\item $\mathcal{C}^{\perp_\kappa}=\sigma^{2\kappa\tau}\left(\mathcal{C}^{\perp_\kappa}\right)$.
\item $\sigma^{2\kappa\tau}\left(\mathcal{C}\right)=\mathcal{C}$.
\item $\mathbf{G}$ is a matrix over $\mathbb{F}_{p^\upsilon}[x]$. 
\end{enumerate}
\end{theorem}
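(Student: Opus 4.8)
The plan is to prove the chain of equivalences as two separate links: $(1)\Leftrightarrow(2)$ by manipulating the two Galois duals together with field automorphisms, and $(2)\Leftrightarrow(3)$ by invoking the uniqueness of the reduced GPM. For the first link I would begin by recording the elementary fact that $\sigma^{\mu}$ commutes with the Euclidean dual: for any linear code $\mathcal{D}$ over $\mathbb{F}_{q}$ and any integer $\mu$ one has $\bigl(\sigma^{\mu}\mathcal{D}\bigr)^{\perp}=\sigma^{\mu}\bigl(\mathcal{D}^{\perp}\bigr)$, because $\sigma^{\mu}$ is a field automorphism and so $\langle\sigma^{\mu}\mathbf{a},\sigma^{\mu}\mathbf{b}\rangle=\sigma^{\mu}\langle\mathbf{a},\mathbf{b}\rangle$ vanishes exactly when $\langle\mathbf{a},\mathbf{b}\rangle$ does. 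Feeding this into $\mathcal{C}^{\perp_\kappa}=\sigma^{e-\kappa}\bigl(\mathcal{C}^{\perp}\bigr)$ from Theorem~\ref{right_dual} and using that the Galois group is abelian, I obtain the key identity $\bigl(\sigma^{2\kappa\tau}\mathcal{C}\bigr)^{\perp_\kappa}=\sigma^{2\kappa\tau}\bigl(\mathcal{C}^{\perp_\kappa}\bigr)$, relating the right $\kappa$-Galois dual of $\sigma^{2\kappa\tau}\mathcal{C}$ to that of $\mathcal{C}$. From here $(2)\Rightarrow(1)$ is immediate, since $\sigma^{2\kappa\tau}\mathcal{C}=\mathcal{C}$ gives $\sigma^{2\kappa\tau}\bigl(\mathcal{C}^{\perp_\kappa}\bigr)=\bigl(\sigma^{2\kappa\tau}\mathcal{C}\bigr)^{\perp_\kappa}=\mathcal{C}^{\perp_\kappa}$; and for $(1)\Rightarrow(2)$ I would apply the operator $(\cdot)_{\perp_\kappa}$ to the identity $\mathcal{C}^{\perp_\kappa}=\bigl(\sigma^{2\kappa\tau}\mathcal{C}\bigr)^{\perp_\kappa}$ and use item~1 of Theorem~\ref{Properties}, namely $\bigl(\mathcal{D}^{\perp_\kappa}\bigr)_{\perp_\kappa}=\mathcal{D}$, to peel off the duals and conclude $\mathcal{C}=\sigma^{2\kappa\tau}\mathcal{C}$.

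For $(2)\Leftrightarrow(3)$ I would first note that the standing hypothesis $\lambda_j\in\mathbb{F}_{p^{\upsilon}}$ with $\upsilon=\gcd(e,2\kappa\tau)$ forces $\sigma^{2\kappa\tau}(\lambda_j)=\lambda_j$, since $\mathbb{F}_{p^{\upsilon}}$ is precisely the subfield of $\mathbb{F}_{q}$ fixed by $\sigma^{2\kappa\tau}$; hence $\sigma^{2\kappa\tau}(\Lambda)=\Lambda$, and both $\mathcal{C}$ and $\sigma^{2\kappa\tau}\mathcal{C}$ are $\Lambda$-MT codes with the same block lengths. Theorem~\ref{Code_map} then identifies the reduced GPM of $\sigma^{2\kappa\tau}\mathcal{C}$ as $\sigma^{2\kappa\tau}(\mathbf{G})$, which is still in Hermite normal form because $\sigma^{2\kappa\tau}$ preserves degrees. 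By the uniqueness of the reduced GPM of a MT code, $\sigma^{2\kappa\tau}\mathcal{C}=\mathcal{C}$ holds if and only if $\sigma^{2\kappa\tau}(\mathbf{G})=\mathbf{G}$; and the latter holds if and only if every coefficient of every entry of $\mathbf{G}$ lies in the fixed field $\mathbb{F}_{p^{\upsilon}}$, that is, $\mathbf{G}$ is a matrix over $\mathbb{F}_{p^{\upsilon}}[x]$ --- this last equivalence is exactly the characterization of matrices fixed by a power of $\sigma$ recalled at the start of Section~\ref{Galois_dual_Sec}. This closes the loop.

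No individual step is a genuine obstacle; the content is the assembly of Theorems~\ref{Code_map}, \ref{right_dual}, and \ref{Properties} with the uniqueness of the Hermite normal form. The one point that needs attention is that $2\kappa\tau$ need not lie in the range $[0,e)$ in which several earlier statements were phrased, so throughout I would silently replace $2\kappa\tau$ by $2\kappa\tau\bmod e$; this is harmless, because $\sigma^{2\kappa\tau}=\sigma^{2\kappa\tau\bmod e}$ as automorphisms and $\gcd(e,2\kappa\tau)=\gcd(e,2\kappa\tau\bmod e)$, so both the automorphism identities and the value of $\upsilon$ are unaffected.
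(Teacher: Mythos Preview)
Your proof is correct and follows essentially the same route as the paper. The paper derives the key identity $\sigma^{2\kappa\tau}\bigl(\mathcal{C}^{\perp_\kappa}\bigr)=\bigl(\sigma^{2\kappa\tau}\mathcal{C}\bigr)^{\perp_\kappa}$ by applying item~\ref{item2} of Theorem~\ref{Properties} recursively, whereas you obtain it directly from the commutation of $\sigma^{\mu}$ with the Euclidean dual together with Theorem~\ref{right_dual}; but these are the same computation in slightly different packaging, and the remaining steps (use of $\bigl(\mathcal{D}^{\perp_\kappa}\bigr)_{\perp_\kappa}=\mathcal{D}$ for $(1)\Rightarrow(2)$, and uniqueness of the reduced GPM via Theorem~\ref{Code_map} for $(2)\Leftrightarrow(3)$) match the paper exactly.
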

\begin{proof}
Applying Theorem \ref{Properties}(\ref{item2}) recursively yields $\sigma^{2\kappa\tau}\left(\mathcal{C}^{\perp_\kappa}\right)=\left(\sigma^{2\kappa\tau}\mathcal{C}\right)^{\perp_\kappa}$. Assume that $\mathcal{C}^{\perp_\kappa}=\sigma^{2\kappa\tau}\left(\mathcal{C}^{\perp_\kappa}\right)$. Then 
\[\mathcal{C}=\left(\mathcal{C}^{\perp_\kappa}\right)_{\perp_\kappa}=\left(\sigma^{2\kappa\tau}\left(\mathcal{C}^{\perp_\kappa}\right)\right)_{\perp_\kappa}=\left(\left(\sigma^{2\kappa\tau}\mathcal{C}\right)^{\perp_\kappa}\right)_{\perp_\kappa}=\sigma^{2\kappa\tau}\left(\mathcal{C}\right).\]
Conversely, assume that $\sigma^{2\kappa\tau}\left(\mathcal{C}\right)=\mathcal{C}$. Then $\mathcal{C}^{\perp_\kappa}=\left(\sigma^{2\kappa\tau}\mathcal{C}\right)^{\perp_\kappa}=\sigma^{2\kappa\tau}\left(\mathcal{C}^{\perp_\kappa}\right)$.

By Theorem \ref{Code_map}, $\sigma^{2\kappa\tau}\left(\mathbf{G}\right)$ is the reduced GPM of $\sigma^{2\kappa\tau}\left(\mathcal{C}\right)$. By the uniqueness of the reduced GPM, $\mathcal{C}=\sigma^{2\kappa\tau}\left(\mathcal{C}\right)$ if and only if $\sigma^{2\kappa\tau}\left(\mathbf{G}\right)=\mathbf{G}$. Writing $\mathbf{G}=\left[g_{i,j}\right]$ where $g_{i,j}\in\mathbb{F}_{q}[x]$ for $1\le i,j\le \ell$, then $\sigma^{2\kappa\tau}\left(\mathbf{G}\right)=\mathbf{G}$ if and only if $\sigma^{2\kappa\tau}$ fixes $g_{i,j}$ for all $i,j$. That is, $\mathcal{C}=\sigma^{2\kappa\tau}\left(\mathcal{C}\right)$ if and only if $g_{i,j}\in\mathbb{F}_{p^\upsilon}[x]$ for all $i,j$. 
\end{proof}

\begin{theorem}
\label{Maiin4}
The code $\mathcal{C}^{\perp_\kappa} \cap \sigma^{2\kappa\tau}\left(\mathcal{C}^{\perp_\kappa}\right)$ is $\sigma^{e-\kappa}\left(\Delta\right)$-MT of block lengths $\left(m_1,m_2,\ldots,m_\ell\right)$ and is invariant under $\sigma^{2\kappa\tau}$.
\end{theorem}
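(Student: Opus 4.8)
The plan is to separate the statement into its two parts: first, that $\mathcal{C}^{\perp_\kappa}\cap\sigma^{2\kappa\tau}(\mathcal{C}^{\perp_\kappa})$ is a $\sigma^{e-\kappa}(\Delta)$-MT code of block lengths $(m_1,m_2,\ldots,m_\ell)$, and second, that it is fixed by $\sigma^{2\kappa\tau}$. Both parts should follow formally from Theorem \ref{Maiin1} together with elementary properties of the automorphisms $\sigma^\mu$, so the argument is expected to be short.

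For the first part, I would recall from Theorem \ref{Maiin1} that each of $\mathcal{C}^{\perp_\kappa}$ and $\sigma^{2\kappa\tau}(\mathcal{C}^{\perp_\kappa})$ is a $\sigma^{e-\kappa}(\Delta)$-MT code of block lengths $(m_1,m_2,\ldots,m_\ell)$. By Theorem \ref{MT_Corresp}, if we view $\mathbb{F}_q^n$ as an $\mathbb{F}_q[x]$-module through $T_{\sigma^{e-\kappa}(\Delta)}$, this says both codes are $T_{\sigma^{e-\kappa}(\Delta)}$-invariant $\mathbb{F}_q$-subspaces of $\mathbb{F}_q^n$. The intersection of two submodules of the same module is again a submodule, so $\mathcal{C}^{\perp_\kappa}\cap\sigma^{2\kappa\tau}(\mathcal{C}^{\perp_\kappa})$ is again $T_{\sigma^{e-\kappa}(\Delta)}$-invariant, i.e., a $\sigma^{e-\kappa}(\Delta)$-MT code of block lengths $(m_1,m_2,\ldots,m_\ell)$. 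One could instead argue at the level of reduced GPMs via Theorems \ref{Maiin2} and \ref{Containment}, but the invariant-subspace argument is cleaner.

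For the second part, the key observation is that $\sigma^{2\kappa\tau}$ is a bijection of $\mathbb{F}_q^n$, hence commutes with intersection: $\sigma^{2\kappa\tau}(A\cap B)=\sigma^{2\kappa\tau}(A)\cap\sigma^{2\kappa\tau}(B)$ for all subsets $A,B\subseteq\mathbb{F}_q^n$. Taking $A=\mathcal{C}^{\perp_\kappa}$ and $B=\sigma^{2\kappa\tau}(\mathcal{C}^{\perp_\kappa})$ gives
\[
\sigma^{2\kappa\tau}\!\left(\mathcal{C}^{\perp_\kappa}\cap\sigma^{2\kappa\tau}(\mathcal{C}^{\perp_\kappa})\right)=\sigma^{2\kappa\tau}(\mathcal{C}^{\perp_\kappa})\cap\sigma^{4\kappa\tau}(\mathcal{C}^{\perp_\kappa}).
\]
Because $\tau$ was chosen so that $e\mid 4\kappa\tau$, the automorphism $\sigma^{4\kappa\tau}$ is a power of $\sigma^e=\mathrm{id}$ and therefore acts as the identity on $\mathbb{F}_q^n$; hence $\sigma^{4\kappa\tau}(\mathcal{C}^{\perp_\kappa})=\mathcal{C}^{\perp_\kappa}$, and the right-hand side is just $\sigma^{2\kappa\tau}(\mathcal{C}^{\perp_\kappa})\cap\mathcal{C}^{\perp_\kappa}$, which is the code we started with. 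I do not anticipate a genuine obstacle here; the only point that must be used carefully is the divisibility $e\mid 4\kappa\tau$ imposed in the notational setup preceding Theorem \ref{Maiin1}, which is exactly what collapses $\sigma^{4\kappa\tau}$ to the identity.
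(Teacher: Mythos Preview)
Your proposal is correct and follows essentially the same approach as the paper: the MT property of the intersection is deduced from Theorem \ref{Maiin1} (closure of $T_{\sigma^{e-\kappa}(\Delta)}$-invariant subspaces under intersection), and the $\sigma^{2\kappa\tau}$-invariance is obtained by using $e\mid 4\kappa\tau$ to collapse $\sigma^{4\kappa\tau}$ to the identity. The only cosmetic difference is that the paper first records the containment $\sigma^{2\kappa\tau}\bigl(\mathcal{C}^{\perp_\kappa}\cap\sigma^{2\kappa\tau}(\mathcal{C}^{\perp_\kappa})\bigr)\subseteq \mathcal{C}^{\perp_\kappa}\cap\sigma^{2\kappa\tau}(\mathcal{C}^{\perp_\kappa})$ and then upgrades it to equality via the bijectivity of $\sigma^{2\kappa\tau}$, whereas you obtain equality in one step by commuting the bijection with the intersection; the underlying argument is the same.
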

\begin{proof}
The first result is immediate from Theorem \ref{Maiin1}. Note that $\sigma^{4\kappa\tau}$ acts as the identity on $\mathbb{F}_{q}^n$ because $e\mid4\kappa\tau$. Hence, 
\begin{equation*}
\sigma^{2\kappa\tau}\left(\mathcal{C}^{\perp_\kappa} \cap \sigma^{2\kappa\tau}\left(\mathcal{C}^{\perp_\kappa}\right)\right)\subseteq \mathcal{C}^{\perp_\kappa} \cap \sigma^{2\kappa\tau}\left(\mathcal{C}^{\perp_\kappa}\right).
\end{equation*}
This inequality turns into equality because $\sigma^{2\kappa\tau}$ is a group isomorphism. Hence, $\mathcal{C}^{\perp_\kappa} \cap \sigma^{2\kappa\tau}\left(\mathcal{C}^{\perp_\kappa}\right)$ is $\sigma^{2\kappa\tau}$-invariant.
\end{proof}

\begin{theorem}
\label{Maiin5}
Let $\mathcal{S}$ be a $\sigma^{e-\kappa}\left(\Delta\right)$-MT subcode of $\mathcal{C}^{\perp_\kappa} \cap \sigma^{2\kappa\tau}\left(\mathcal{C}^{\perp_\kappa}\right)$. Then $\mathcal{S}$ is $\sigma^{2\kappa\tau}$-invariant if and only if there exist upper triangular matrices $\mathbf{X}$ and $\mathbf{Y}$ over $\mathbb{F}_{p^\upsilon}[x]$ and $\mathbb{F}_{q}[x]$, respectively, such that
\begin{enumerate}
\item $\mathbf{Y}\sigma^{e-\kappa}\left( \mathbf{H}\right)$ is a GPM of $\mathcal{S}$,
\item $\mathbf{Y}\sigma^{e-\kappa}\left( \mathbf{H}\right)$ is a matrix over $\mathbb{F}_{p^\upsilon}[x]$, and
\item $\mathbf{X}\mathbf{Y}=\sigma^{e-\kappa}\left( \mathbf{B}\right)$.
\end{enumerate}
In this case, $\mathcal{S}$ has dimension $\deg\left(\mathrm{det}\left(\mathbf{X} \right)\right)$.
\end{theorem}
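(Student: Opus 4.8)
The plan is to characterize when the $\sigma^{e-\kappa}(\Delta)$-MT subcode $\mathcal{S}$ is $\sigma^{2\kappa\tau}$-invariant by translating this invariance into conditions on a GPM of $\mathcal{S}$, and then to identify the dimension via the identical equation. First I would use Theorem \ref{Maiin2}: the reduced GPM of $\mathcal{C}^{\perp_\kappa}$ is $\sigma^{e-\kappa}(\mathbf{H})$, and since $\mathcal{S}\subseteq \mathcal{C}^{\perp_\kappa}$ is a $\sigma^{e-\kappa}(\Delta)$-MT subcode with the same block lengths, Theorem \ref{Containment} gives an upper triangular polynomial matrix $\mathbf{Y}$ over $\mathbb{F}_q[x]$ with $\mathbf{Y}\sigma^{e-\kappa}(\mathbf{H})$ a GPM of $\mathcal{S}$; this is condition (1), and it holds for \emph{any} such subcode $\mathcal{S}$ (invariant or not). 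The content of the theorem is therefore the equivalence of $\sigma^{2\kappa\tau}$-invariance with conditions (2) and (3) together with the ability to choose $\mathbf{Y}$ appropriately.

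For the forward direction, suppose $\mathcal{S}$ is $\sigma^{2\kappa\tau}$-invariant. By the discussion following Theorem \ref{Code_map} (uniqueness of the reduced row echelon form / reduced GPM), a linear code over $\mathbb{F}_q$ is fixed by $\sigma^{2\kappa\tau}$ if and only if it has a basis — equivalently a reduced GPM — over the fixed field $\mathbb{F}_{p^\upsilon}$, where $\upsilon=\gcd(e,2\kappa\tau)$ (note $\sigma^{2\kappa\tau}$ has the same fixed field as $\sigma^\upsilon$). So the reduced GPM $\mathbf{G}_{\mathcal{S}}$ of $\mathcal{S}$ is a matrix over $\mathbb{F}_{p^\upsilon}[x]$. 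Now $\mathbf{G}_{\mathcal{S}}$ and $\mathbf{Y}\sigma^{e-\kappa}(\mathbf{H})$ generate the same code, so replacing $\mathbf{Y}$ by a unimodular multiple we may take $\mathbf{Y}\sigma^{e-\kappa}(\mathbf{H})=\mathbf{G}_{\mathcal{S}}$, which is over $\mathbb{F}_{p^\upsilon}[x]$ — giving (2). For (3): let $\mathbf{X}$ be the matrix satisfying the identical equation of $\mathbf{G}_{\mathcal{S}}$, i.e. $\mathbf{X}\mathbf{G}_{\mathcal{S}}=\mathrm{diag}[x^{m_j}-\sigma^{e-\kappa}(\lambda_j^{-1})]$; since $\lambda_j\in\mathbb{F}_{p^\upsilon}$ this diagonal matrix and $\mathbf{G}_{\mathcal{S}}$ are both over $\mathbb{F}_{p^\upsilon}[x]$, so $\mathbf{X}$ is over $\mathbb{F}_{p^\upsilon}[x]$ and upper triangular. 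On the other hand $\sigma^{e-\kappa}(\mathbf{B})$ satisfies the identical equation of $\sigma^{e-\kappa}(\mathbf{H})$ (remark after Theorem \ref{Maiin2}), so $\sigma^{e-\kappa}(\mathbf{B})\,\sigma^{e-\kappa}(\mathbf{H})=\mathrm{diag}[x^{m_j}-\sigma^{e-\kappa}(\lambda_j^{-1})]$; combining, $(\mathbf{X}\mathbf{Y})\,\sigma^{e-\kappa}(\mathbf{H})=\sigma^{e-\kappa}(\mathbf{B})\,\sigma^{e-\kappa}(\mathbf{H})$, and cancelling the full-rank matrix $\sigma^{e-\kappa}(\mathbf{H})$ on the right (its rows are a basis, as in the proof of Theorem \ref{A_G_Commutativity}) yields $\mathbf{X}\mathbf{Y}=\sigma^{e-\kappa}(\mathbf{B})$, which is (3). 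Conversely, if (1)--(3) hold with $\mathbf{Y}$ upper triangular over $\mathbb{F}_q[x]$ and $\mathbf{X}$ upper triangular over $\mathbb{F}_{p^\upsilon}[x]$, then (2) says $\mathcal{S}$ has a GPM over $\mathbb{F}_{p^\upsilon}[x]$, so $\mathcal{S}=\sigma^{2\kappa\tau}(\mathcal{S})$ by the same fixed-field criterion, proving invariance.

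For the dimension claim, once (1)--(3) hold, $\mathbf{Y}\sigma^{e-\kappa}(\mathbf{H})$ is a GPM of $\mathcal{S}$ over $\mathbb{F}_{p^\upsilon}[x]$; its Hermite normal form is the reduced GPM $\mathbf{G}_{\mathcal{S}}$, and $\mathbf{G}_{\mathcal{S}}=\mathbf{U}\,\mathbf{Y}\sigma^{e-\kappa}(\mathbf{H})$ for a unimodular $\mathbf{U}$, whence the matrix satisfying the identical equation of $\mathbf{G}_{\mathcal{S}}$ is $\mathbf{X}\mathbf{U}^{-1}$ (since $(\mathbf{X}\mathbf{U}^{-1})\mathbf{G}_{\mathcal{S}}=\mathbf{X}\mathbf{Y}\sigma^{e-\kappa}(\mathbf{H})=\sigma^{e-\kappa}(\mathbf{B})\sigma^{e-\kappa}(\mathbf{H})=\mathbf{D}_{\sigma^{e-\kappa}(\Delta)}$). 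By Theorem \ref{dim_MT}, $\dim\mathcal{S}=\deg\det(\mathbf{X}\mathbf{U}^{-1})=\deg\det(\mathbf{X})$ because $\det(\mathbf{U})$ is a unit. The main obstacle I anticipate is the bookkeeping around \emph{which} $\mathbf{Y}$ to pick: condition (1) only pins down $\mathbf{Y}$ up to left multiplication by matrices fixing the code, so the proof must be careful to first produce $\mathbf{Y}$ from the reduced-GPM relation $\mathbf{G}_{\mathcal{S}}=\mathbf{Y}\sigma^{e-\kappa}(\mathbf{H})$ in the forward direction, and to verify in the converse that (2) alone — not the particular form of $\mathbf{Y}$ — forces a $\mathbb{F}_{p^\upsilon}[x]$-basis; a secondary subtlety is confirming that the fixed field of $\sigma^{2\kappa\tau}$ is exactly $\mathbb{F}_{p^\upsilon}$ with $\upsilon=\gcd(e,2\kappa\tau)$, so that the criterion from the text after Theorem \ref{Code_map} applies verbatim.
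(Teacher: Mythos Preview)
Your proposal is correct and follows essentially the same approach as the paper: in the forward direction you take the reduced GPM of $\mathcal{S}$ (the paper calls it $\mathbf{P}$), use $\sigma^{2\kappa\tau}$-invariance together with uniqueness of the reduced GPM to put it over $\mathbb{F}_{p^\upsilon}[x]$, obtain $\mathbf{Y}$ from Theorem~\ref{Containment} and $\mathbf{X}$ from the identical equation of $\mathbf{P}$, then cancel $\sigma^{e-\kappa}(\mathbf{H})$ on the right to get $\mathbf{X}\mathbf{Y}=\sigma^{e-\kappa}(\mathbf{B})$; the converse and the dimension claim via Theorem~\ref{dim_MT} are handled the same way. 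Your extra care about which $\mathbf{Y}$ to pick and the unimodular adjustment in the dimension argument are harmless elaborations of what the paper leaves implicit.
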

\begin{proof}
Assume that $\mathcal{S}$ is invariant under $\sigma^{2\kappa\tau}$. Let $\mathbf{P}$ be the reduced GPM of $\mathcal{S}$ and let $\mathbf{X}$ be the matrix that satisfies the identical equation of $\mathbf{P}$. By Theorems \ref{Containment} and \ref{Maiin2}, there exists an upper triangular matrix $\mathbf{Y}$ such that $\mathbf{P}=\mathbf{Y}\sigma^{e-\kappa}\left( \mathbf{H}\right)$ because $\mathcal{S}\subseteq \mathcal{C}^{\perp_\kappa}$. Since $\mathcal{S}$ is $\sigma^{2\kappa\tau}$-invariant, $\sigma^{2\kappa\tau}$ fixes $\mathbf{P}$. Thus $\mathbf{P}$ is over $\mathbb{F}_{p^\upsilon}[x]$, then so is $\mathbf{X}$ because $\lambda_j\in\mathbb{F}_{p^\upsilon}$ for $1\le j\le \ell$. But $\mathbf{X}\mathbf{Y}\sigma^{e-\kappa}\left( \mathbf{H}\right)=\mathbf{X}\mathbf{P}=\sigma^{e-\kappa}\left( \mathbf{B}\right)\sigma^{e-\kappa}\left( \mathbf{H}\right)$, then $\mathbf{X}\mathbf{Y}=\sigma^{e-\kappa}\left( \mathbf{B}\right)$. The dimension of $\mathcal{S}$ is immediate from Theorem \ref{dim_MT}.

Conversely, suppose that $\mathcal{S}$ has a GPM $\mathbf{Y}\sigma^{e-\kappa}\left( \mathbf{H}\right)$ over $\mathbb{F}_{p^\upsilon}[x]$. We know that $\sigma^{2\kappa\tau}$ fixes all matrices over $\mathbb{F}_{p^\upsilon}[x]$. Then $\mathbf{Y}\sigma^{e-\kappa}\left( \mathbf{H}\right)$ is a GPM for $\sigma^{2\kappa\tau}\left(\mathcal{S}\right)$. That is, $\sigma^{2\kappa\tau}\left(\mathcal{S}\right)=\mathcal{S}$.
\end{proof}

In Theorem \ref{Maiin6}, we continue to use the general setting used above; however, the main result of this section will be an immediate consequence of it. Specifically, setting $\tau=1$ yields a result that fits the two-sided Galois dual of a MT code. This is described in Corollary \ref {main_coroll}.

\begin{theorem}
\label{Maiin6}
Let $\mathbf{X}$ and $\mathbf{Y}$ be upper triangular matrices over $\mathbb{F}_{p^\upsilon}[x]$ and $\mathbb{F}_{q}[x]$, respectively, such that
\begin{enumerate}
\item $\mathbf{Y}\sigma^{e-\kappa}\left( \mathbf{H}\right)$ is a matrix over $\mathbb{F}_{p^\upsilon}[x]$, 
\item $\mathbf{X}\mathbf{Y}=\sigma^{e-\kappa}\left( \mathbf{B}\right)$, and
\item $\deg\left(\mathrm{det}\left(\mathbf{X}\right)\right)$ is maximum among all matrices that satisfy these conditions, or equivalently, $\deg\left(\mathrm{det}\left(\mathbf{Y}\right)\right)$ is minimum among all matrices that satisfy these conditions.
\end{enumerate}
Then $\mathcal{C}^{\perp_\kappa} \cap \sigma^{2\kappa\tau}\left(\mathcal{C}^{\perp_\kappa}\right)$ has a GPM $\mathbf{Y}\sigma^{e-\kappa}\left( \mathbf{H}\right)$ and dimension $\deg\left(\mathrm{det}\left(\mathbf{X}\right)\right)$. Furthermore, $\mathbf{X}$ is the matrix that satisfies the identical equation of $\mathbf{Y}\sigma^{e-\kappa}\left( \mathbf{H}\right)$.
\end{theorem}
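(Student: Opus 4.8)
The plan is to identify $\mathcal{C}^{\perp_\kappa}\cap\sigma^{2\kappa\tau}(\mathcal{C}^{\perp_\kappa})$ with the largest $\sigma^{2\kappa\tau}$-invariant $\sigma^{e-\kappa}(\Delta)$-MT subcode of $\mathcal{C}^{\perp_\kappa}$, and then translate ``largest such subcode'' into the extremal matrix conditions (1)--(3). The starting point is Theorem \ref{Maiin4}, which already tells us that $\mathcal{T}:=\mathcal{C}^{\perp_\kappa}\cap\sigma^{2\kappa\tau}(\mathcal{C}^{\perp_\kappa})$ is itself a $\sigma^{e-\kappa}(\Delta)$-MT code and is $\sigma^{2\kappa\tau}$-invariant. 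So $\mathcal{T}$ is one of the codes $\mathcal{S}$ covered by Theorem \ref{Maiin5}: taking $\mathcal{S}=\mathcal{T}$, there exist upper triangular $\mathbf{X}$ over $\mathbb{F}_{p^\upsilon}[x]$ and $\mathbf{Y}$ over $\mathbb{F}_q[x]$ satisfying the three bullet conditions of Theorem \ref{Maiin5}, and moreover $\dim\mathcal{T}=\deg(\det(\mathbf{X}))$. Conversely, I would argue that \emph{any} pair $(\mathbf{X},\mathbf{Y})$ satisfying conditions (1)--(2) of the present theorem gives rise, via Theorem \ref{Maiin5} (backward direction), to a $\sigma^{2\kappa\tau}$-invariant $\sigma^{e-\kappa}(\Delta)$-MT subcode $\mathcal{S}$ of $\mathcal{C}^{\perp_\kappa}$ with GPM $\mathbf{Y}\sigma^{e-\kappa}(\mathbf{H})$; since $\mathcal{S}$ is $\sigma^{2\kappa\tau}$-invariant we get $\mathcal{S}=\sigma^{2\kappa\tau}(\mathcal{S})\subseteq\sigma^{2\kappa\tau}(\mathcal{C}^{\perp_\kappa})$, so in fact $\mathcal{S}\subseteq\mathcal{T}$.

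With that dictionary in place, the extremal clause (3) does the rest. Among all pairs $(\mathbf{X},\mathbf{Y})$ satisfying (1)--(2), each produces a subcode $\mathcal{S}\subseteq\mathcal{T}$ of dimension $\deg(\det(\mathbf{X}))$ (using Theorem \ref{dim_MT} applied to the GPM $\mathbf{Y}\sigma^{e-\kappa}(\mathbf{H})$ together with condition (2), $\mathbf{X}\mathbf{Y}=\sigma^{e-\kappa}(\mathbf{B})$, which makes $\mathbf{X}$ satisfy the identical equation of that GPM after checking the diagonal-degree bookkeeping). Since $\mathcal{T}$ itself is realized by some such pair, the maximum of $\deg(\det(\mathbf{X}))$ equals $\dim\mathcal{T}$, and it is attained precisely when $\mathcal{S}=\mathcal{T}$, i.e.\ when $\mathbf{Y}\sigma^{e-\kappa}(\mathbf{H})$ is a GPM of $\mathcal{T}$. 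The equivalence ``$\deg(\det(\mathbf{X}))$ maximum $\iff\deg(\det(\mathbf{Y}))$ minimum'' follows from the identity $\mathbf{X}\mathbf{Y}=\sigma^{e-\kappa}(\mathbf{B})$, which forces $\deg(\det(\mathbf{X}))+\deg(\det(\mathbf{Y}))=\deg(\det(\sigma^{e-\kappa}(\mathbf{B})))=\deg(\det(\mathbf{B}))$, a constant. The final sentence of the theorem, that $\mathbf{X}$ satisfies the identical equation of $\mathbf{Y}\sigma^{e-\kappa}(\mathbf{H})$, is then just condition (2) rewritten: $\mathbf{X}\cdot\bigl(\mathbf{Y}\sigma^{e-\kappa}(\mathbf{H})\bigr)=\sigma^{e-\kappa}(\mathbf{B})\sigma^{e-\kappa}(\mathbf{H})=\sigma^{e-\kappa}(\mathbf{D}_\perp)=\mathbf{D}_{\sigma^{e-\kappa}(\Delta)}$, the diagonal matrix for the $\sigma^{e-\kappa}(\Delta)$-MT structure, exactly as in the first half of the proof of Theorem \ref{Maiin5}.

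I would organize the write-up as: (i) invoke Theorem \ref{Maiin4} to get $\mathcal{T}$ MT and $\sigma^{2\kappa\tau}$-invariant; (ii) invoke Theorem \ref{Maiin5} forward to produce one admissible pair realizing $\mathcal{T}$, hence the supremum in (3) is $\geq\dim\mathcal{T}$; (iii) invoke Theorem \ref{Maiin5} backward to show every admissible pair yields a subcode of $\mathcal{T}$ of dimension $\deg(\det(\mathbf{X}))\leq\dim\mathcal{T}$, so the supremum is exactly $\dim\mathcal{T}$ and is attained only by pairs whose subcode is all of $\mathcal{T}$; (iv) convert the max/min equivalence via the constant $\deg(\det(\mathbf{B}))$; (v) read off the identical equation. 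The main obstacle I anticipate is step (iii): the backward direction of Theorem \ref{Maiin5} is stated for a subcode $\mathcal{S}$ that is \emph{assumed} to be MT with the right shift constants, so I must first check that an arbitrary admissible $(\mathbf{X},\mathbf{Y})$ actually yields such an $\mathcal{S}$ — that $\mathbf{Y}\sigma^{e-\kappa}(\mathbf{H})$ generates a genuine $\sigma^{e-\kappa}(\Delta)$-MT submodule (not just an $\mathbb{F}_q[x]$-submodule of $(\mathbb{F}_q[x])^\ell$) and that its dimension is correctly $\deg(\det(\mathbf{X}))$ rather than something smaller; this is where condition (2) and the triangular degree constraints inherited from $\mathbf{H}$ and $\mathbf{B}$ must be used carefully, essentially re-running the bookkeeping in the proof of Lemma \ref{lemma2} and Theorem \ref{dim_MT} in this twisted setting.
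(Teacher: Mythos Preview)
Your proposal is correct and follows essentially the same route as the paper's proof: invoke Theorem \ref{Maiin4} to see that $\mathcal{T}=\mathcal{C}^{\perp_\kappa}\cap\sigma^{2\kappa\tau}(\mathcal{C}^{\perp_\kappa})$ is itself a $\sigma^{2\kappa\tau}$-invariant $\sigma^{e-\kappa}(\Delta)$-MT code, then use Theorem \ref{Maiin5} in both directions so that the maximality clause (3) pins down $\mathcal{T}$ among all $\sigma^{2\kappa\tau}$-invariant MT subcodes. The paper's proof is two sentences and leaves the bookkeeping you flag in your ``main obstacle'' implicit; your observation that condition (2) gives $\mathbf{X}\cdot\bigl(\mathbf{Y}\sigma^{e-\kappa}(\mathbf{H})\bigr)=\sigma^{e-\kappa}(\mathbf{B}\mathbf{H})=\mathbf{D}_{\sigma^{e-\kappa}(\Delta)}$, so the submodule generated by $\mathbf{Y}\sigma^{e-\kappa}(\mathbf{H})$ automatically contains $M_{\sigma^{e-\kappa}(\Delta)}$ and is therefore a genuine MT code of the right dimension, is exactly what fills that gap.
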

\begin{proof}
Theorem \ref{Maiin5} and the maximality of $\deg\left(\mathrm{det}\left(\mathbf{X}\right)\right)$ allow us to conclude that $\mathbf{Y}\sigma^{e-\kappa}\left( \mathbf{H}\right)$ is a GPM for the largest $\sigma^{2\kappa\tau}$-invariant $\sigma^{e-\kappa}\left(\Delta\right)$-MT subcode of $\mathcal{C}^{\perp_\kappa} \cap \sigma^{2\kappa\tau}\left(\mathcal{C}^{\perp_\kappa}\right)$. Certainly, this is $\mathcal{C}^{\perp_\kappa} \cap \sigma^{2\kappa\tau}\left(\mathcal{C}^{\perp_\kappa}\right)$ by Theorem \ref{Maiin4}.
\end{proof}

Let $\mathcal{C}$ be a linear code. We define the two-sided Galois dual of $\mathcal{C}$ to be the intersection of its right and left Galois duals. Since $\mathcal{C}$ is linear, its two-sided Galois dual is linear because both $\mathcal{C}^{\perp_\kappa}$ and $\mathcal{C}_{\perp_\kappa}$ are linear. But if $\mathcal{C}$ is MT, then the two-sided Galois dual is not necessarily MT. Specifically, if $\mathcal{C}$ is $\Lambda$-MT, then $\mathcal{C}^{\perp_\kappa}$ is $\sigma^{e-\kappa}\left(\Delta\right)$-MT and $\mathcal{C}_{\perp_\kappa}$ is $\sigma^{\kappa}\left(\Delta\right)$-MT. A sufficient condition to ensure that the two-sided Galois dual is MT is $\sigma^{e-\kappa}\left(\Delta\right)=\sigma^{\kappa}\left(\Delta\right)$, or equivalently, $\lambda_j\in\mathbb{F}_{p^\upsilon}$ for $1\le j\le \ell$, where $\upsilon=\mathrm{gcd}\left(e,2\kappa\right)$. Motivated by this condition, we start with the following definition for the two-sided Galois dual of a MT code.

\begin{definition}
\label{2sided}
Let $e$ be a positive integer and let $\kappa<e$ be a non-negative integer. Choose $\lambda_j\in\mathbb{F}_{p^\upsilon}$ for $1\le j\le \ell$, where $\upsilon=\mathrm{gcd}\left(e,2\kappa\right)$. Let $\mathcal{C}$ be a $\Lambda$-MT code over $\mathbb{F}_{q}$. Define the two-sided $\kappa$-Galois dual of $\mathcal{C}$ by $\mathcal{C}^{\perp_\kappa}\cap \mathcal{C}_{\perp_\kappa}$.
\end{definition}

The condition $\lambda_j\in\mathbb{F}_{p^\upsilon}$ for all $1\le j\le \ell$ in Definition \ref{2sided} ensures that $\mathcal{C}^{\perp_\kappa}\cap\mathcal{C}_{\perp_\kappa}$ is $\sigma^\kappa \Delta$-MT. This condition was previously used in Theorems \ref{Maiin1}--\ref{Maiin6} if $\tau=1$ is chosen. Furthermore, in case $\tau=1$, Theorem \ref{Properties} shows that $\sigma^{2\kappa\tau}\left(\mathcal{C}^{\perp_\kappa}\right)$ can be replaced by $\mathcal{C}_{\perp_\kappa}$. Therefore, Theorems \ref{Maiin1}--\ref{Maiin6} have proved the following result describing the two-sided Galois dual of a MT code.

\begin{corollary}
\label{main_coroll}
Let $e$ be a positive integer and let $\kappa<e$ be a non-negative integer such that $e\mid4\kappa$. Define $\upsilon=\mathrm{gcd}\left(e,2\kappa\right)$ and let $\mathcal{C}$ be a $\left(\lambda_1,\lambda_2,\ldots,\lambda_\ell \right)$-MT code over $\mathbb{F}_{q}$, where $\lambda_j\in\mathbb{F}_{p^\upsilon}$ for $1\le j\le \ell$.  Let $\mathbf{G}$ be the reduced GPM of $\mathcal{C}$, let $\mathbf{H}$ be the reduced GPM of $\mathcal{C}^\perp$, and let $\mathbf{B}$ be the matrix that satisfies the identical equation of $\mathbf{H}$. Then
\begin{enumerate}
\item $\mathcal{C}^{\perp_\kappa}=\mathcal{C}_{\perp_\kappa}$ if and only if $\sigma^{2\kappa}\left(\mathcal{C}\right)=\mathcal{C}$ if and only if $\mathbf{G}$ is a matrix over $\mathbb{F}_{p^\upsilon}[x]$. 
\item Let $\mathbf{X}$ and $\mathbf{Y}$ be upper triangular matrices over $\mathbb{F}_{p^\upsilon}[x]$ and $\mathbb{F}_{q}[x]$, respectively, such that
\begin{enumerate}
\item $\mathbf{Y}\sigma^{e-\kappa}\left( \mathbf{H}\right)$ is a matrix over $\mathbb{F}_{p^\upsilon}[x]$, 
\item $\mathbf{X}\mathbf{Y}=\sigma^{e-\kappa}\left( \mathbf{B}\right)$, and
\item $\deg\left(\mathrm{det}\left(\mathbf{X}\right)\right)$ is maximum among all matrices that satisfy these conditions, or equivalently, $\deg\left(\mathrm{det}\left(\mathbf{Y}\right)\right)$ is minimum among all matrices that satisfy these conditions.
\end{enumerate}
Then $\mathcal{C}^{\perp_\kappa} \cap \mathcal{C}_{\perp_\kappa}$ has a GPM $\mathbf{Y}\sigma^{e-\kappa}\left( \mathbf{H}\right)$ and dimension $\deg\left(\mathrm{det}\left(\mathbf{X}\right)\right)$. Furthermore, $\mathbf{X}$ is the matrix that satisfies the identical equation of $\mathbf{Y}\sigma^{e-\kappa}\left( \mathbf{H}\right)$.
\end{enumerate}
\end{corollary}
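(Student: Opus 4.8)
The plan is to derive Corollary~\ref{main_coroll} as the specialization $\tau=1$ of Theorems~\ref{Maiin1}--\ref{Maiin6}; the work is almost entirely bookkeeping, namely checking that the hypotheses match and that the auxiliary code $\sigma^{2\kappa\tau}\left(\mathcal{C}^{\perp_\kappa}\right)$ coincides with $\mathcal{C}_{\perp_\kappa}$ when $\tau=1$. First I would verify the running assumptions of Theorems~\ref{Maiin1}--\ref{Maiin6}: those require a positive integer $\tau$ with $e\mid 4\kappa\tau$, and taking $\tau=1$ this is exactly the hypothesis $e\mid 4\kappa$ of the corollary. Correspondingly $\upsilon=\mathrm{gcd}\left(e,2\kappa\tau\right)$ becomes $\mathrm{gcd}\left(e,2\kappa\right)$, which is the $\upsilon$ fixed in the corollary and in Definition~\ref{2sided}, and the condition $\lambda_j\in\mathbb{F}_{p^\upsilon}$ is the same in both places. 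So with $\tau=1$ all six theorems apply without change.

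Next I would identify $\sigma^{2\kappa\tau}\left(\mathcal{C}^{\perp_\kappa}\right)$ with $\mathcal{C}_{\perp_\kappa}$. With $\tau=1$ we have $\sigma^{2\kappa\tau}=\sigma^{2\kappa}$, and Theorem~\ref{Properties}(5) states $\mathcal{C}_{\perp_\kappa}=\sigma^{2\kappa}\left(\mathcal{C}^{\perp_\kappa}\right)$; hence $\mathcal{C}^{\perp_\kappa}\cap\sigma^{2\kappa\tau}\left(\mathcal{C}^{\perp_\kappa}\right)=\mathcal{C}^{\perp_\kappa}\cap\mathcal{C}_{\perp_\kappa}$, the two-sided $\kappa$-Galois dual of Definition~\ref{2sided}. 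After this identification, every conclusion of Theorems~\ref{Maiin1}--\ref{Maiin6} about $\mathcal{C}^{\perp_\kappa}\cap\sigma^{2\kappa\tau}\left(\mathcal{C}^{\perp_\kappa}\right)$ reads verbatim as a statement about $\mathcal{C}^{\perp_\kappa}\cap\mathcal{C}_{\perp_\kappa}$.

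With this in hand, part~(1) is Theorem~\ref{Maiin3} at $\tau=1$: its three equivalent conditions become $\mathcal{C}^{\perp_\kappa}=\mathcal{C}_{\perp_\kappa}$, $\sigma^{2\kappa}\left(\mathcal{C}\right)=\mathcal{C}$, and ``$\mathbf{G}$ is a matrix over $\mathbb{F}_{p^\upsilon}[x]$''. Alternatively the first equivalence is Theorem~\ref{Properties}(6) and the third is the uniqueness-of-reduced-GPM argument already present in the proof of Theorem~\ref{Maiin3}. Part~(2) is Theorem~\ref{Maiin6} at $\tau=1$: under the stated maximality of $\deg\left(\mathrm{det}\left(\mathbf{X}\right)\right)$ the matrix $\mathbf{Y}\sigma^{e-\kappa}\left(\mathbf{H}\right)$ is a GPM for $\mathcal{C}^{\perp_\kappa}\cap\mathcal{C}_{\perp_\kappa}$, its dimension equals $\deg\left(\mathrm{det}\left(\mathbf{X}\right)\right)$ by Theorem~\ref{dim_MT}, and $\mathbf{X}$ is the matrix satisfying the identical equation of $\mathbf{Y}\sigma^{e-\kappa}\left(\mathbf{H}\right)$. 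The two maximality phrasings are interchangeable because $\mathbf{X}\mathbf{Y}=\sigma^{e-\kappa}\left(\mathbf{B}\right)$ forces $\deg\left(\mathrm{det}\left(\mathbf{X}\right)\right)+\deg\left(\mathrm{det}\left(\mathbf{Y}\right)\right)=\deg\left(\mathrm{det}\left(\mathbf{B}\right)\right)$, a fixed quantity since automorphisms preserve degrees.

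The only step that is not pure transcription, hence the main (though mild) obstacle, is confirming that under the corollary's hypotheses the two-sided dual is genuinely MT, i.e.\ that the ``right'' shift constant $\sigma^{e-\kappa}\left(\Delta\right)$ of $\mathcal{C}^{\perp_\kappa}$ and the ``left'' shift constant $\sigma^{\kappa}\left(\Delta\right)$ of $\mathcal{C}_{\perp_\kappa}$ agree. Since $\lambda_j\in\mathbb{F}_{p^\upsilon}$ with $\upsilon=\mathrm{gcd}\left(e,2\kappa\right)$, the automorphism $\sigma^{2\kappa}$ fixes each $\lambda_j$ and each $\lambda_j^{-1}$, so $\sigma^{\kappa}\left(\Delta\right)=\sigma^{3\kappa}\left(\Delta\right)$; and $e\mid 4\kappa$ gives $3\kappa\equiv e-\kappa\pmod e$, whence $\sigma^{3\kappa}=\sigma^{e-\kappa}$ on $\mathbb{F}_{q}$ and thus $\sigma^{\kappa}\left(\Delta\right)=\sigma^{e-\kappa}\left(\Delta\right)$. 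This is exactly the sufficient condition flagged before Definition~\ref{2sided}, so $\mathcal{C}^{\perp_\kappa}\cap\mathcal{C}_{\perp_\kappa}$ is $\sigma^{e-\kappa}\left(\Delta\right)$-MT and the GPM formula of part~(2) is meaningful. All remaining details are the verbatim content of Theorems~\ref{Maiin1}--\ref{Maiin6}.
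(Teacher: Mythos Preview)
Your proposal is correct and follows exactly the paper's own route: the paper's proof consists of the single observation that setting $\tau=1$ in Theorems~\ref{Maiin1}--\ref{Maiin6} and invoking Theorem~\ref{Properties} to rewrite $\sigma^{2\kappa\tau}\left(\mathcal{C}^{\perp_\kappa}\right)$ as $\mathcal{C}_{\perp_\kappa}$ yields the corollary verbatim. Your added verification that $\sigma^{\kappa}\left(\Delta\right)=\sigma^{e-\kappa}\left(\Delta\right)$ is a helpful sanity check but is already subsumed by Theorem~\ref{Maiin1} at $\tau=1$.
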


\begin{remark}
\begin{enumerate}
\item In Corollary \ref{main_coroll}, we observed that $\mathcal{C}^{\perp_\kappa}=\mathcal{C}_{\perp_\kappa}$ if and only if $\mathbf{G}$ is a matrix over $\mathbb{F}_{p^\upsilon}[x]$. From Definition \ref{def_parity}, this is the case if and only if $\mathbf{H}$ is over $\mathbb{F}_{p^\upsilon}[x]$, hence will be $\sigma^{e-\kappa}\left( \mathbf{H}\right)$ and $\sigma^{e-\kappa}\left( \mathbf{B}\right)$ as well. In this case, the conditions given in Corollary \ref{main_coroll} are satisfied by $\mathbf{X}=\sigma^{e-\kappa}\left( \mathbf{B}\right)$ and $\mathbf{Y}=\mathbf{I}_\ell$. More precisely, $\mathcal{C}^{\perp_\kappa}=\mathcal{C}_{\perp_\kappa}$ if and only if the conditions given in Corollary \ref{main_coroll} are satisfied by an invertible $\mathbf{Y}$. On the other hand, the case of an invertible $\mathbf{X}$ is examined in Corollary \ref{direct_sum}.
\item There is a pair $\left(\mathbf{X},\mathbf{Y}\right)$ that satisfies the conditions of Corollary \ref{main_coroll}. To see this, consider the set $S=\left\{\left(\mathbf{X}_i,\mathbf{Y}_i\right)\right\}_{i\in\mathcal{I}}$ of all pairs that satisfy the first two conditions of Corollary \ref{main_coroll}. Clearly $\left(\mathbf{I}_\ell,\sigma^{e-\kappa}\left( \mathbf{B}\right)\right)\in S$. Thus $S$ is not empty. Moreover, $S$ is totally ordered by $\deg\left(\mathrm{det}\left(\mathbf{X}_i\right)\right)$. Since $0\le \deg\left(\mathrm{det}\left(\mathbf{X}_i\right)\right) \le \deg\left(\mathrm{det}\left(\mathbf{B}\right)\right)$ for every $i\in\mathcal{I}$, there is a maximal element $\left(\mathbf{X},\mathbf{Y}\right)\in S$. 
\item It is not straightforward to determine matrices $\mathbf{X}$ and $\mathbf{Y}$ that satisfy the conditions given in Corollary \ref{main_coroll}. This is because these matrices have entries in different rings, $\mathbb{F}_{p^\upsilon}[x]$ and $\mathbb{F}_{q}[x]$. We propose an auxiliary equation that may be useful in determining such matrices. If $\alpha\in\mathbb{F}_{q}$, the trace of $\alpha$, written $\mathrm{Tr}\left(\alpha\right)$, is defined by 
\begin{equation*}
\mathrm{Tr}\left(\alpha\right)=\alpha+\sigma^\upsilon\left(\alpha\right)+\sigma^{2\upsilon}\left(\alpha\right)+\cdots+\sigma^{e-\upsilon}\left(\alpha\right) \in\mathbb{F}_{p^\upsilon}.
\end{equation*}
For any $a,b\in\mathbb{F}_{p^\upsilon}$ and $\alpha,\beta\in\mathbb{F}_{q}$, we have $\mathrm{Tr}\left(a \alpha+b \beta\right)=a \mathrm{Tr}(\alpha)+ b \mathrm{Tr}(\beta)$. We can extend $\mathrm{Tr}$ to an additive group homomorphism from $\mathbb{F}_{q}[x]$ to $\mathbb{F}_{p^\upsilon}[x]$ by defining
\begin{equation*}
\mathrm{Tr}\left(a_0+a_1 x +\cdots +a_m x^m\right) = \mathrm{Tr}(a_0)+\mathrm{Tr}(a_1) x +\cdots +\mathrm{Tr}(a_m) x^m.
\end{equation*}
Similarly, for matrices over $\mathbb{F}_{q}[x]$, $\mathrm{Tr}$ defines a group homomorphism. If $\mathbf{Y}=\left[y_{i,j}\right]$ is a matrix over $\mathbb{F}_{q}[x]$, we define $\mathrm{Tr}\left(\mathbf{Y}\right)=\left[\mathrm{Tr}\left(y_{i,j}\right)\right]$. Definitely, $\mathrm{Tr}\left(\mathbf{Y}\right)$ is a matrix over $\mathbb{F}_{p^\upsilon}[x]$.

Suppose $\mathbf{X}$ and $\mathbf{Y}$ as defined in Corollary \ref{main_coroll}. For any $0\le i\le (e-\upsilon)/\upsilon$, the automorphism $\sigma^{i \upsilon}$ fixes $\mathbf{X}$ in the ring of matrices over $\mathbb{F}_{q}[x]$. Therefore,
\begin{equation}
\label{pre_aux}
\mathbf{X}\sigma^{i\upsilon}\left(\mathbf{Y}\right)=\sigma^{i\upsilon}\left(\sigma^{e-\kappa}\left( \mathbf{B}\right)\right) \quad \text{for } i=0,1,\ldots, \frac{e-\upsilon}{\upsilon}.
\end{equation}
By summing \eqref{pre_aux} over all values of $i$, we get the auxiliary equation
\begin{equation}
\label{trace}
\mathbf{X}\mathrm{Tr}\left(\mathbf{Y}\right)=\mathrm{Tr}\left(\sigma^{e-\kappa}\left( \mathbf{B}\right)\right).
\end{equation}
All matrices in \eqref{trace} are over $\mathbb{F}_{p^\upsilon}[x]$. In Example \ref{Example_one} below, we will indicate how to use \eqref{trace} to determine matrices that satisfy the conditions given in Corollary \ref{main_coroll}.
\item Some extra conditions must be taken into account when one aims to make $\mathbf{Y}\sigma^{e-\kappa}\left( \mathbf{H}\right)$ the reduced GPM of $\mathcal{C}^{\perp_\kappa}\cap\mathcal{C}_{\perp_\kappa}$. In this situation, $\mathbf{X}=\left[x_{i,j}\right]$ is the matrix that satisfies the identical equation of the reduced GPM. Then for each $1\le i\le \ell$, $x_{i,i}$ is a nonzero monic polynomial and $\deg{x_{i,j}}<\deg{x_{i,i}}$ for all $j>i$.
\end{enumerate}
\end{remark}

It is worthwhile to present a complete example illustrating the process of determining the reduced GPM of the right Galois dual, the left Galois dual, and the two-sided Galois dual of a MT code.

\begin{example}
\label{Example_one}
Let $\theta$ be a root of the irreducible polynomial $x^4+x+1 \in\mathbb{F}_2[x]$. We represent $\mathbb{F}_{16}$ as the set $\left\{a+b\theta+c\theta^2+d\theta^3\mid a,b,c,d \in\mathbb{F}_2 \right\}$. Consider the $\left(1,\theta^{10},\theta^{10}\right)$-MT code $\mathcal{C}$ over $\mathbb{F}_{16}$ of block lengths $\left(3,4,4\right)$ whose reduced GPM is
\begin{equation*}
\mathbf{G}=\begin{pmatrix}
\theta^5+\theta^{10}x+x^2 \ &\  0 \ &\  \theta^2+\theta^7x+ \theta^{12} x^2+\theta^2 x^3 \\
0\ &\ 1\ &\ 1+\theta x+\theta^{5} x^2+\theta^2 x^3\\
0\ &\ 0\ &\ \theta^{10}+x^4\\
\end{pmatrix}.
\end{equation*}
The matrix that satisfies the identical equation of $\mathbf{G}$ is
\begin{equation*}
\mathbf{A}=\begin{pmatrix}
\theta^{10}+x\ &\ 0\ &\  \theta^2\\
0\ &\ \theta^{10}+x^4\ &\  1+\theta x+\theta^5 x^2+\theta^2 x^3\\
0\ &\ 0\ &\ 1\\
\end{pmatrix}.
\end{equation*}
The dimension of $\mathcal{C}$ is $k=5$ and its minimum distance is $d_{\mathrm{min}}=5$. It follows from Theorem \ref{MT_dual} that $\mathcal{C}^\perp$ is $\left(1,\theta^5,\theta^5\right)$-MT of block lengths $\left(3,4,4 \right)$ and dimension $6$. Theorem \ref{MT_Parity_H} provides a GPM for $\mathcal{C}^\perp$ whose Hermite normal form is
\begin{equation*}
\mathbf{H}=\begin{pmatrix}
1\ &\ \theta^9\ &\  \theta^9+x+\theta x^2 +\theta^9 x^3\\
0\ &\ \theta^5+x\ &\  \theta^{12} x +\theta^4 x^2+\theta^{13} x^3\\
0\ &\ 0\ &\ \theta^5+x^4\\
\end{pmatrix}.
\end{equation*}
The matrix that satisfies the identical equation of $\mathbf{H}$ is
\begin{equation*}
\mathbf{B}=\begin{pmatrix}
1+x^3\ &\ \theta^4+\theta^{14} x+\theta^9 x^2\ &\  \theta^4+\theta^{14} x+\theta^9 x^2\\
0\ &\ 1+\theta^{10} x+\theta^5 x^2+x^3\ &\  \theta^{7} x +\theta^{13} x^2\\
0\ &\ 0\ &\ 1\\
\end{pmatrix}.
\end{equation*}

We consider the $3$-Galois inner product on $\mathbb{F}_{16}^{11}$. Observe that $\lambda_1=1$ and $\lambda_2=\lambda_3=\theta^{10}$ are elements of $\mathbb{F}_{p^\upsilon}=\mathbb{F}_4=\{0,1,\theta^5,\theta^{10}\}$, where $\upsilon=\mathrm{gcd}\left(e,2\kappa\right)=2$. Theorem \ref{right_dual} ensures that $\mathcal{C}^{\perp_3}$ is $\left(1,\theta^{10},\theta^{10} \right)$-MT of block lengths $\left(3,4,4 \right)$ and dimension $6$. Moreover, the reduced GPM of $\mathcal{C}^{\perp_3}$ is 
\begin{equation*}
\sigma^{e-3}\left(\mathbf{H}\right)=\sigma\left(\mathbf{H}\right)=\begin{pmatrix}
1\ &\ \theta^3\ &\  \theta^3+x+\theta^2 x^2 +\theta^3 x^3\\
0\ &\ \theta^{10}+x\ &\  \theta^9 x +\theta^8 x^2+\theta^{11} x^3\\
0\ &\ 0\ &\ \theta^{10}+x^4\\
\end{pmatrix},
\end{equation*}
and the matrix that satisfies its identical equation is  
\begin{equation*}
\sigma^{e-3}\left(\mathbf{B}\right)=\sigma\left(\mathbf{B}\right)=\begin{pmatrix}
1+x^3\ &\ \theta^8+\theta^{13} x+\theta^3 x^2\ &\  \theta^8+\theta^{13} x+\theta^3 x^2\\
0\ &\ 1+\theta^5 x+\theta^{10} x^2+x^3\ &\  \theta^{14} x +\theta^{11} x^2\\
0\ &\ 0\ &\ 1\\
\end{pmatrix}.
\end{equation*}
However, Theorem \ref{left_dual} ensures that $\mathcal{C}_{\perp_3}$ is $\left(1,\theta^{10},\theta^{10} \right)$-MT of block lengths $\left(3,4,4 \right)$ and dimension $6$. Moreover, the reduced GPM of $\mathcal{C}_{\perp_3}$ is
\begin{equation*}
\sigma^{3}\left(\mathbf{H}\right)=\begin{pmatrix}
1\ &\ \theta^{12}\ &\  \theta^{12}+x+\theta^8 x^2 +\theta^{12} x^3\\
0\ &\ \theta^{10}+x\ &\  \theta^{6} x +\theta^2 x^2+\theta^{14} x^3\\
0\ &\ 0\ &\ \theta^{10}+x^4\\
\end{pmatrix},
\end{equation*}
and the matrix that satisfies its identical equation is 
\begin{equation*}
\sigma^{3}\left(\mathbf{B}\right)=\begin{pmatrix}
1+x^3\ &\ \theta^2+\theta^{7} x+\theta^{12} x^2\ &\  \theta^2+\theta^{7} x+\theta^{12} x^2\\
0\ &\ 1+\theta^{5} x+\theta^{10} x^2+x^3\ &\  \theta^{11} x +\theta^{14} x^2\\
0\ &\ 0\ &\ 1\\
\end{pmatrix}.
\end{equation*}

It remains to compute the reduced GPM of the two-sided $3$-Galois dual of $\mathcal{C}$ with the aid of Corollary \ref{main_coroll}. Since not all entries of $\mathbf{G}$ are elements of $\mathbb{F}_{4}[x]$, we conclude that $\mathcal{C}^{\perp_3}\ne \mathcal{C}_{\perp_3}$. Suppose $\mathbf{X}=\left[x_{i,j}\right]$ and $\mathbf{Y}=\left[y_{i,j}\right]$ are upper triangular matrices that satisfy the conditions given in Corollary \ref{main_coroll}. That is, $\mathrm{det}\left(\mathbf{X}\right)$ has the maximum possible degree, the product $\mathbf{Y}\sigma\left(\mathbf{H}\right)$ yields a matrix over $\mathbb{F}_4[x]$, and
\begin{equation}
\label{In_Ex_Eq_1}
\begin{split}
\mathbf{X}\mathbf{Y}&=\begin{pmatrix}
x_{11}\ &\  x_{12} \ &\  x_{13}\\
0\ &\  x_{22} \ &\  x_{23}\\
0\ &\ 0\ &\ x_{33}\\
\end{pmatrix} \begin{pmatrix}
y_{11}\ &\  y_{12} \ &\  y_{13}\\
0\ &\  y_{22} \ &\  y_{23}\\
0\ &\ 0\ &\ y_{33}\\
\end{pmatrix}\\
&=\sigma\left(\mathbf{B}\right)=\begin{pmatrix}
1+x^3\ &\ \theta^3 (1+x)(\theta^5 +x)\ &\  \theta^3 (1+x)(\theta^5 +x)\\
0\ &\ (\theta^{10} +x)^3\ &\  \theta^{11} x(\theta^3+x)\\
0\ &\ 0\ &\ 1\\
\end{pmatrix}
\end{split}
\end{equation}
where $x_{i,j}\in\mathbb{F}_{4}[x]$ and $y_{i,j}\in\mathbb{F}_{16}[x]$ for $1\le i,j \le 3$. In addition, by using the trace map $\mathrm{Tr}:\alpha \mapsto \alpha+\alpha^{4}$ for any $\alpha\in\mathbb{F}_{16}$, we utilize the auxiliary equation \eqref{trace}:
\begin{equation}
\label{In_Ex_Eq_0}
\begin{split}
\mathbf{X}\mathrm{Tr}\left(\mathbf{Y}\right)&=\begin{pmatrix}
x_{11}\ &\  x_{12} \ &\  x_{13}\\
0\ &\  x_{22} \ &\  x_{23}\\
0\ &\ 0\ &\ x_{33}\\
\end{pmatrix} \begin{pmatrix}
\mathrm{Tr}\left(y_{11}\right)\ &\  \mathrm{Tr}\left(y_{12}\right) \ &\  \mathrm{Tr}\left(y_{13}\right)\\
0\ &\  \mathrm{Tr}\left(y_{22}\right) \ &\  \mathrm{Tr}\left(y_{23}\right)\\
0\ &\ 0\ &\ \mathrm{Tr}\left(y_{33}\right)\\
\end{pmatrix}\\
&=\mathrm{Tr}\left(\sigma^{e-\kappa}\left( \mathbf{B}\right)\right)=\begin{pmatrix}
0\ &\ \theta^{10}(1+x)(\theta^{5} +x)\ &\  \theta^{10}(1+x)(\theta^{5} +x)\\
0\ &\ 0\ &\  \theta^{10} x (1+x)\\
0\ &\ 0\ &\ 0\\
\end{pmatrix}.
\end{split}
\end{equation}

From \eqref{In_Ex_Eq_0}, $\mathrm{Tr}\left(y_{11}\right)=\mathrm{Tr}\left(y_{22}\right)=\mathrm{Tr}\left(y_{33}\right)=0$ since $x_{11},x_{22},x_{33}\ne 0$. From \eqref{In_Ex_Eq_1}, $x_{33}=y_{33}=1$. Observe that $x_{22}$ divides $(\theta^{10}+x)^3$ by \eqref{In_Ex_Eq_1} and divides $\theta^{10}x(1+x)$ by \eqref{In_Ex_Eq_0}, and thus $x_{22}=1$. Since $\mathbf{Y}\sigma\left(\mathbf{H}\right)$ is assumed to be in the reduced form, $\deg{x_{23}}<\deg{x_{22}}=0$ and hence $x_{23}=0$. From \eqref{In_Ex_Eq_1}, $y_{22}=(\theta^{10}+x)^3$ and $y_{23}=\theta^{11} x(\theta^3+x)$. In the same way, observe that $x_{11}$ divides $1+x^3=(1+x)(\theta^5+x)(\theta^{10}+x)$ by \eqref{In_Ex_Eq_1} and divides $\theta^{10}(1+x)(\theta^{5} +x)$ by \eqref{In_Ex_Eq_0}. Thus, the maximum possible degree of $x_{11}$ is obtained by taking $x_{11}=(1+x)(\theta^{5} +x)$, and then $y_{11}=\theta^{10}+x$.  From \eqref{In_Ex_Eq_1}, \[(1+x)(\theta^{5} +x)y_{12}+(\theta^{10}+x)^3 x_{12}=\theta^3 (1+x)(\theta^{5} +x).\]
Then $(1+x)(\theta^{5} +x)$ divides $x_{12}$. Thus, $x_{12}= 0$ and $y_{12}=\theta^3$ since otherwise $\deg{x_{12}}\ge 2=\deg{x_{11}}$. Again from \eqref{In_Ex_Eq_1}, \[(1+x)(\theta^{5} +x)y_{13}+ x_{13}=\theta^3 (1+x)(\theta^5 +x).\]
Then $(1+x)(\theta^{5} +x)$ divides $x_{13}$. Thus, $x_{13}=0$ and $y_{13}=\theta^3$ since otherwise $\deg{x_{13}}\ge 2=\deg{x_{11}}$. It is easy to check that this solution makes the entries of $\mathbf{Y}\sigma\left(\mathbf{H}\right)$ elements of $\mathbb{F}_4[x]$. So far, the conditions given in Corollary \ref{main_coroll} have been satisfied for
\begin{equation*}
\mathbf{X}=\begin{pmatrix}
(1+x)(\theta^{5} +x)\ &\  0 \ &\  0\\
0\ &\  1 \ &\  0\\
0\ &\ 0\ &\ 1\\
\end{pmatrix}\  \text{ and }\ 
\mathbf{Y}= \begin{pmatrix}
\theta^{10}+x\ &\  \theta^3 \ &\  \theta^3\\
0\ &\  (\theta^{10}+x)^3 \ &\  \theta^{11} x(\theta^3+x)\\
0\ &\ 0\ &\ 1\\
\end{pmatrix}.
\end{equation*}
Therefore, the reduced GPM of the two-sided $3$-Galois dual of $\mathcal{C}$ is 
\begin{equation*}
\begin{split}
\mathbf{Y}\sigma\left(\mathbf{H}\right)&=
\begin{pmatrix}
\theta^{10}+x &  \theta^3  &  \theta^3\\
0 &  (\theta^{10}+x)^3  &  \theta^{11} x(\theta^3+x)\\
0 & 0 & 1\\
\end{pmatrix}\begin{pmatrix}
1 & \theta^3 &  \theta^3+x+\theta^2 x^2 +\theta^3 x^3\\
0 & \theta^{10}+x &  \theta^9 x +\theta^8 x^2+\theta^{11} x^3\\
0 & 0 & \theta^{10}+x^4\\
\end{pmatrix}\\
&=\begin{pmatrix}
\theta^{10}+x\ &\ 0\ &\  0\\
0\ &\ \theta^{10}+x^4\ &\  0\\
0\ &\ 0\ &\ \theta^{10}+x^4\\
\end{pmatrix}.
\end{split}
\end{equation*}
The dimension of $\mathcal{C}^{\perp_3}\cap \mathcal{C}_{\perp_3}$ is $\deg\left(\mathrm{det}\left(\mathbf{X}\right)\right)=2$. The diagonalizability of the reduced GPM of $\mathcal{C}^{\perp_3}\cap \mathcal{C}_{\perp_3}$ indicates that $\mathcal{C}^{\perp_3}\cap \mathcal{C}_{\perp_3}$ is the direct sum of constacyclic codes. More precisely, $\mathcal{C}^{\perp_3}\cap \mathcal{C}_{\perp_3}=\mathcal{C}_1\oplus \mathbf{0}\oplus \mathbf{0}$, where $\mathcal{C}_1$ is the cyclic code of length $3$ over $\mathbb{F}_{16}$ with generator polynomial $\theta^{10}+x$ and $\mathbf{0}$ is the zero code of length $4$.
\end{example}

We conclude this section with an application of Corollary \ref{main_coroll} concerning when the right and the left Galois duals trivially intersect. Obviously, $\mathcal{C}^{\perp_\kappa}\cap \mathcal{C}_{\perp_\kappa}=\{\mathbf{0}\}$ if and only if the first two conditions of Corollary \ref{main_coroll} can only be satisfied by an invertible $\mathbf{X}$, hence zero is the maximum of $\deg\left(\mathrm{det}\left(\mathbf{X}\right)\right)$. We remark that these two conditions are always satisfied by an invertible $\mathbf{X}$, for instance $\mathbf{X}=\mathbf{I}_\ell$ and $\mathbf{Y}=\sigma^{e-\kappa}\left( \mathbf{B}\right)$. However, if these two conditions are never satisfied except for an invertible $\mathbf{X}$, then $\mathcal{C}^{\perp_\kappa}\cap \mathcal{C}_{\perp_\kappa}=\{\mathbf{0}\}$. The following is a particular case that requires the code dimension to be half the code length.

\begin{corollary}
\label{direct_sum}
Let $e$ be a positive integer and let $\kappa<e$ be a non-negative integer such that $e\mid4\kappa$. Define $\upsilon=\mathrm{gcd}\left(e,2\kappa\right)$ and let $\mathcal{C}$ be a $\left(\lambda_1,\lambda_2,\ldots,\lambda_\ell \right)$-MT code over $\mathbb{F}_{q}$ of length $n$ and dimension $n/2$, where $\lambda_j\in\mathbb{F}_{p^\upsilon}$ for $1\le j\le \ell$.  Let $\mathbf{H}$ be the reduced GPM of $\mathcal{C}^\perp$ and let $\mathbf{B}$ be the matrix that satisfies the identical equation of $\mathbf{H}$. Let $\mathbf{X}$ and $\mathbf{Y}$ be upper triangular matrices over $\mathbb{F}_{p^\upsilon}[x]$ and $\mathbb{F}_{q}[x]$, respectively, such that
\begin{enumerate}
\item $\mathbf{Y}\sigma^{e-\kappa}\left( \mathbf{H}\right)$ is a matrix over $\mathbb{F}_{p^\upsilon}[x]$, and
\item $\mathbf{X}\mathbf{Y}=\sigma^{e-\kappa}\left( \mathbf{B}\right)$.
\end{enumerate}
Then $\mathbb{F}_{q}^n=\mathcal{C}^{\perp_\kappa}\oplus \mathcal{C}_{\perp_\kappa}$ if and only if the above conditions can only be satisfied by an invertible $\mathbf{X}$.
\end{corollary}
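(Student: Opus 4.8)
The plan is to reduce the statement to a dimension count combined with the structural description of the two-sided Galois dual in Corollary~\ref{main_coroll}. First I would record that, by Theorems~\ref{right_dual} and~\ref{left_dual}, both $\mathcal{C}^{\perp_\kappa}$ and $\mathcal{C}_{\perp_\kappa}$ are linear of dimension $n-k = n - n/2 = n/2$. Grassmann's identity then gives
\begin{equation*}
\dim\left(\mathcal{C}^{\perp_\kappa}+\mathcal{C}_{\perp_\kappa}\right)+\dim\left(\mathcal{C}^{\perp_\kappa}\cap\mathcal{C}_{\perp_\kappa}\right)=\frac{n}{2}+\frac{n}{2}=n,
\end{equation*}
so that $\mathcal{C}^{\perp_\kappa}+\mathcal{C}_{\perp_\kappa}=\mathbb{F}_{q}^n$ if and only if $\mathcal{C}^{\perp_\kappa}\cap\mathcal{C}_{\perp_\kappa}=\{\mathbf{0}\}$. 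Hence $\mathbb{F}_{q}^n=\mathcal{C}^{\perp_\kappa}\oplus\mathcal{C}_{\perp_\kappa}$ is equivalent to the single requirement $\mathcal{C}^{\perp_\kappa}\cap\mathcal{C}_{\perp_\kappa}=\{\mathbf{0}\}$, and the task reduces to showing that the two-sided Galois dual is trivial exactly when conditions (1)--(2) force $\mathbf{X}$ to be invertible.

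Next I would invoke Corollary~\ref{main_coroll}(2): among all upper triangular pairs $(\mathbf{X},\mathbf{Y})$ over $\mathbb{F}_{p^\upsilon}[x]$ and $\mathbb{F}_{q}[x]$ satisfying (1) and (2), there is one --- as noted in the remark following that corollary --- for which $\deg\left(\det\left(\mathbf{X}\right)\right)$ is maximal, and for such a pair one has $\dim\left(\mathcal{C}^{\perp_\kappa}\cap\mathcal{C}_{\perp_\kappa}\right)=\deg\left(\det\left(\mathbf{X}\right)\right)$. Since $\deg\left(\det\left(\mathbf{X}\right)\right)\ge 0$ for every admissible pair, this maximum equals $0$ if and only if $\deg\left(\det\left(\mathbf{X}\right)\right)=0$ for \emph{every} pair satisfying (1)--(2). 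Therefore $\mathcal{C}^{\perp_\kappa}\cap\mathcal{C}_{\perp_\kappa}=\{\mathbf{0}\}$ if and only if $\deg\left(\det\left(\mathbf{X}\right)\right)=0$ for all such $(\mathbf{X},\mathbf{Y})$.

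Finally I would identify the condition $\deg\left(\det\left(\mathbf{X}\right)\right)=0$ with ``$\mathbf{X}$ invertible over $\mathbb{F}_{p^\upsilon}[x]$''. Taking determinants in $\mathbf{X}\mathbf{Y}=\sigma^{e-\kappa}\left(\mathbf{B}\right)$ gives $\det\left(\mathbf{X}\right)\det\left(\mathbf{Y}\right)=\sigma^{e-\kappa}\left(\det\left(\mathbf{B}\right)\right)$, and $\det\left(\mathbf{B}\right)\neq 0$ because $\mathbf{B}\mathbf{H}=\mathrm{diag}\left[x^{m_1}-\lambda_1^{-1},\ldots,x^{m_\ell}-\lambda_\ell^{-1}\right]$ forces $\det\left(\mathbf{B}\right)\det\left(\mathbf{H}\right)=\prod_{j=1}^{\ell}\left(x^{m_j}-\lambda_j^{-1}\right)\neq 0$. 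Hence $\det\left(\mathbf{X}\right)\neq 0$ for every admissible pair, so $\deg\left(\det\left(\mathbf{X}\right)\right)=0$ precisely when $\det\left(\mathbf{X}\right)$ is a nonzero constant, i.e.\ a unit of $\mathbb{F}_{p^\upsilon}[x]$, i.e.\ $\mathbf{X}$ is invertible. Chaining the three equivalences establishes the corollary. I expect the only delicate point to be the bookkeeping between ``the conditions can only be satisfied by an invertible $\mathbf{X}$'' --- a statement about \emph{all} admissible pairs --- and the maximality clause in Corollary~\ref{main_coroll}, which concerns a single optimal pair; the passage through ``$\max_{(\mathbf{X},\mathbf{Y})}\deg\left(\det\left(\mathbf{X}\right)\right)=0$'' is exactly what reconciles the two, while everything else is routine.
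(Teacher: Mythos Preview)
Your proposal is correct and follows essentially the same route as the paper: reduce $\mathbb{F}_q^n=\mathcal{C}^{\perp_\kappa}\oplus\mathcal{C}_{\perp_\kappa}$ to $\mathcal{C}^{\perp_\kappa}\cap\mathcal{C}_{\perp_\kappa}=\{\mathbf{0}\}$ via a dimension count, and then link the latter to $\deg(\det(\mathbf{X}))=0$ for all admissible pairs through Corollary~\ref{main_coroll}. The only cosmetic difference is that the paper handles the converse direction by invoking Theorem~\ref{Maiin5} directly (each admissible pair produces a subcode of dimension $\deg(\det(\mathbf{X}))$), whereas you route it through the maximum in Corollary~\ref{main_coroll}; the content is the same.
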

\begin{proof}
From Corollary \ref{main_coroll}, there exist $\mathbf{X}$ and $\mathbf{Y}$ such that $\deg{\left(\mathrm{det}\left(\mathbf{X}\right)\right)}$ is the dimension of $\mathcal{C}^{\perp_\kappa}\cap \mathcal{C}_{\perp_\kappa}$. If the given conditions can only be satisfied by an invertible $\mathbf{X}$, then $\mathcal{C}^{\perp_\kappa}\cap \mathcal{C}_{\perp_\kappa}$ has dimension zero. Therefore, $\mathcal{C}^{\perp_\kappa}\cap \mathcal{C}_{\perp_\kappa}=\{\mathbf{0}\}$ and $\mathbb{F}_{q}^n=\mathcal{C}^{\perp_\kappa}\oplus \mathcal{C}_{\perp_\kappa}$ because both $\mathcal{C}^{\perp_\kappa}$ and $\mathcal{C}_{\perp_\kappa}$ have dimension $n/2$.

Conversely, suppose that $\mathbf{X}$ and $\mathbf{Y}$ satisfy the given conditions. From Theorem \ref{Maiin5}, there exists a subcode $\mathcal{S}$ of $\mathcal{C}^{\perp_\kappa}\cap \mathcal{C}_{\perp_\kappa}$ such that $\mathcal{S}$ has dimension $\deg\left(\mathrm{det}\left(\mathbf{X} \right)\right)$. Assume that $\mathbb{F}_{q}^n=\mathcal{C}^{\perp_\kappa}\oplus \mathcal{C}_{\perp_\kappa}$. Then $\mathcal{S}\subseteq\mathcal{C}^{\perp_\kappa}\cap \mathcal{C}_{\perp_\kappa}=\{\mathbf{0}\}$, and hence $\deg{\left(\mathrm{det}\left(\mathbf{X}\right)\right)}=0$. That is, $\mathbf{X}$ is invertible.
\end{proof}

\begin{example}
Consider $p=3$ and $e=4$. We can identify the elements of $\mathbb{F}_{81}$ with polynomials of the form $a_0+a_1 \theta+a_2 \theta^2+a_3 \theta^3$, where $a_i\in\mathbb{F}_3$ for $0\le i\le 3$ and $\theta$ is a root of the irreducible polynomial $x^4+x+2\in\mathbb{F}_3[x]$. Let $\mathcal{C}$ be the $\left(\theta^{50},\theta^{20}\right)$-MT code over $\mathbb{F}_{81}$ of block lengths $(4,8)$ and reduced GPM
\begin{equation*}
\mathbf{G}=\begin{pmatrix}
1 \ &\  2+\theta^5 x^2+ \theta^{10} x^4 \\
0\ &\ \theta^{55}+\theta^{10} x^2+\theta^{45} x^4+ x^6\\
\end{pmatrix}.
\end{equation*}
The matrix that satisfies the identical equation of $\mathbf{G}$ is 
\begin{equation*}
\mathbf{A}=\begin{pmatrix}
x^4-\theta^{50} \ &\  \theta^{25}+2 x^2 \\
0\ &\ \theta^{5}+ x^2\\
\end{pmatrix},
\end{equation*} 
from which we conclude that $\mathcal{C}$ has a dimension equal to half the code length. By Theorem \ref{MT_dual} and Theorem \ref{MT_Parity_H}, the Euclidean dual $\mathcal{C}^\perp$ of $\mathcal{C}$ is $\left(\theta^{30},\theta^{60} \right)$-MT with a GPM whose reduced form is
\begin{equation*}
\mathbf{H}=\begin{pmatrix}
\theta^{15}+x^2 \ &\  \theta^{75}+  x^2 \\
0\ &\ \theta^{50}+ \theta^5 x^2 +x^4\\
\end{pmatrix},
\end{equation*}
and the matrix that satisfies the identical equation of $\mathbf{H}$ is 
\begin{equation*}
\mathbf{B}=\begin{pmatrix}
\theta^{55}+x^2 \ &\  2 \\
0\ &\ \theta^{50}+ \theta^{45} x^2 +x^4\\
\end{pmatrix}.
\end{equation*}
Take $\kappa=1$ and notice that $\lambda_1, \lambda_2\in\mathbb{F}_{3^\upsilon}=\mathbb{F}_{9}$, where $\upsilon=\mathrm{gcd}\left(e,2\kappa\right)=2$. In fact, $\mathbb{F}_{9}=\left\{0,1,\theta^{10},\theta^{20},\theta^{30},\theta^{40},\theta^{50},\theta^{60},\theta^{70}\right\}$. Suppose that $\mathbf{X}$ and $\mathbf{Y}$ satisfy the conditions given in Corollary \ref{direct_sum}. We show that $\mathbf{X}$ is invertible. With this hypothesis, the matrix obtained from
\begin{equation}
\label{In_Ex2_1}
\mathbf{Y}\sigma^3\left(\mathbf{H}\right)=
\begin{pmatrix}
y_{11}\ &\  y_{12}\\
0\ &\  y_{22} \\
\end{pmatrix}\begin{pmatrix}
\theta^{5}+x^2 \ &\  \theta^{25}+  x^2 \\
0\ &\ (\theta^{25}+  x^2)(\theta^{45}+  x^2)\\
\end{pmatrix}
\end{equation}
has entries in $\mathbb{F}_9[x]$ and
\begin{equation}
\label{In_Ex2_2}
\mathbf{X}\mathbf{Y}=\begin{pmatrix}
x_{11}\ &\  x_{12}\\
0\ &\  x_{22}\\
\end{pmatrix} \begin{pmatrix}
y_{11}\ &\  y_{12}\\
0\ &\  y_{22}\\
\end{pmatrix}
=\begin{pmatrix}
\theta^{45}+x^2 \ &\  2 \\
0\ &\ (\theta^{5}+  x^2)(\theta^{65}+  x^2)\\
\end{pmatrix}.
\end{equation}
From \eqref{In_Ex2_1}, $\left(\theta^{45}+x^2\right)$ divides $y_{11}$ and $(\theta^{5}+  x^2)(\theta^{65}+  x^2)$ divides $y_{11}$ because it is required to make $y_{11}\left(\theta^{5}+x^2\right)$ and $y_{22}\left(\theta^{25}+  x^2\right)\left(\theta^{45}+  x^2\right)$ elements in $\mathbb{F}_{9}[x]$. It follows from \eqref{In_Ex2_2} that $x_{11}=x_{22}=1$ and, hence, $\mathbf{X}$ is invertible. By Corollary \ref{direct_sum}, $\mathbb{F}_{81}^{12}$ can be written as the direct sum of the right and left $1$-Galois duals of $\mathcal{C}$, that is, $\mathbb{F}_{81}^{12}=\mathcal{C}^{\perp_1}\oplus\mathcal{C}_{\perp_1}$.
\end{example}

\section{Conclusion}
\label{conclusion}
MT codes constitute a comprehensive class of linear codes. It contains cyclic, constacyclic, QC, QT, and GQC codes as subclasses. We introduce the properties of MT codes and present them as free modules over $\mathbb{F}_q[x]$. This algebraic structure identifies the MT code by GPM which satisfies an identical equation. We use this identical equation to provide a GPM formula for the Euclidean dual. Then we use the Galois inner product to generalize our findings on the Euclidean dual. It makes sense, then, to differentiate between the right and left Galois duals of a linear code. We demonstrate a necessary and sufficient condition for these two duals to be identical. We discuss the right and left Galois duals of linear codes in general (and MT codes in particular) as well as their related properties. We additionally define the two-sided Galois dual, which has not been previously defined in any former study. We characterize a GPM for the two-sided Galois dual of a MT code and illustrate its construction with a detailed example. Finally, we give an equivalent condition under which the vector space $\mathbb{F}_{q}^n$ can be written as a direct sum of the right and left Galois duals of a MT code.


\end{document}